\documentclass[10pt]{article}

\usepackage[nottoc,notlot,notlof]{tocbibind}
\usepackage{amsmath,amssymb,amsfonts,amsthm}
\usepackage{latexsym}
\usepackage{tikz}
\usepackage[normalem]{ulem}
\usetikzlibrary{automata,positioning}
\usetikzlibrary{chains,fit,shapes}
\usetikzlibrary{calc}
\usetikzlibrary{arrows}
\usepackage{float}
\usetikzlibrary{positioning,calc}
\usetikzlibrary{graphs}
\usetikzlibrary{graphs.standard}
\usetikzlibrary{arrows,decorations.markings}
\usepackage{multicol}
\usepackage{xcolor}
\setcounter{MaxMatrixCols}{16}

\if@thmsec
  	\newtheorem{theorem}{Theorem}[section]
\else
   \newtheorem{theorem}{Theorem}
\fi

\newtheorem{lemma}[theorem]{Lemma}
\newtheorem{prop}[theorem]{Proposition}
\newtheorem{obs}[theorem]{Observation}
\newtheorem{cor}[theorem]{Corollary}

  \newtheorem{dnt}[theorem]{Definition}

\newtheorem{exm}[theorem]{Example}

\numberwithin{equation}{section}
\setlength{\textwidth}{16cm} \setlength{\oddsidemargin}{0cm}
\setlength{\evensidemargin}{0cm} \setlength{\footskip}{40pt}
\pagestyle{plain}

\title{$p$-Complete Square-free  Word-representation of Word-representable Graphs}

\author{\hspace{1cm} Biswajit Das \ and Ramesh Hariharasubramanian \\ 
{{\footnotesize d.biswajit@iitg.ac.in},\ {\footnotesize  ramesh\_h@iitg.ac.in}}\\{\footnotesize Department of Mathematics, Indian Institute of Technology Guwahati, Guwahati, Assam 781039, India}}

\begin{document}
	\maketitle
	
	\begin{abstract}
An undirected graph $G(V,E)$ is \textit{word-representable} if there exists a word $w$ over the alphabet $V$ such that two distinct letters $x$ and $y$ alternate in $w$ if and only if $x$ and $y$ are adjacent in $G$.

In this paper, we introduce the notions of \textit{$p$-complete squares} and \textit{$p$-complete square-free word-representable graphs}. Let $w \in \Sigma^{*}$ be a word over an alphabet $\Sigma$. The word $w$ contains a \textit{square} if it can be written in the form $w = s_{1}XXs_{2}$, where $s_{1}, s_{2} \in \Sigma^{*}$ and $X \in \Sigma^{+}$. For a subset $S \subseteq \Sigma$, let $w_{S}$ denote the word obtained from $w$ by deleting all letters not belonging to $S$. For an integer $p\ge 1$, the word $w$ is said to contain a \textit{$p$-complete square} if there exists a subset $S \subseteq \Sigma$ such that $w_{S}$ contains a square $XX$, where $X \in \Sigma^{+}$ and $|X| \ge p$. If no such subset exists, then $w$ is called \textit{$p$-complete square-free}.

A word-representable graph is called \textit{$p$-complete square-free word-representable} if it admits a word-representation that is $p$-complete square-free. This notion captures repetition patterns that are independent of vertex labelling and cannot be avoided by certain classes of word-representable graphs. We further define \textit{$p$-complete square-free uniform word-representations} for uniform words and investigate their structural properties. We show that for any integer $p \ge 1$, any graph admitting a $p$-complete square-free uniform word-representation forbids $K_p$ as an induced subgraph. We also show that, for arbitrary $p$, the recognition problem for $p$-complete square-free word-representable graphs is NP-hard. For small values of $p$, we obtain complete characterisations: for $p = 1$, only complete graphs admit such representations, while for $p = 2$, this holds only for complete and edgeless graphs. We prove that every $K_p$-free circle graph is $p$-complete square-free uniformly word-representable. Moreover, we show that graphs with representation number at most three admit $3$-complete square-free uniform word-representations. Finally, we present a constructive method for generating new $3$-complete square-free uniformly word-representable graphs.\\
    \textbf{Keywords:} word-representable graph, square-free word, $p$-complete square-free word-representable graph, $p$-complete square-free uniform word-representable graph, complete square-free uniform representation number.
	\end{abstract}

\section{Introduction}
 The theory of word-representable graphs is an up-and-coming research area. The notion of word-represent-able graphs was first introduced by Sergey Kitaev in the study of the Perkins semigroup \cite{kitaev2008word}. Word-representable graphs generalise several important graph classes, including circle graphs, comparability graphs, and $3$-colourable graphs. However, not all graphs are word-representable, and identifying such graphs remains an important problem.

 An important problem in the theory of word-representable graphs is to determine word-representations that contain or avoid prescribed patterns. 
 A pattern $\tau = \tau_1\tau_2\cdots \tau_m$ occurs in a word $w = w_1w_2\cdots w_n$, if there exists $1 \leq i_1 < i_2 <\cdots < i_m \leq n$ such that $\tau_1\tau_2\cdots \tau_m$ is order-isomorphic to $w_{i_1}w_{i_2}\cdots w_{i_m}$.
  Based on pattern-avoiding words, Jones \textit{et al.} \cite{JKPR15-EJC} introduced the notion of $u$-representable graphs, which generalises word-representable graphs.
 Let $u \in \mathbb{N}^*$ be a word. The \emph{reduction} of $u$, denoted by $\textit{red}(u)$, is the word obtained by replacing each occurrence of the $i^{th}$ smallest letter appearing in $u$ with $i$. For a word $u = u_1u_2\cdots u_j \in \mathbb{N}^*$ such that $\textit{red}(u)=u$, and a word $w = w_1w_2\cdots w_n \in \mathbb{N}^*$, the word $w$ has a $u$-match starting at position $i$ if $\textit{red}(w_i w_{i+1}\cdots w_{i+j-1}) = u$. If $u \in \{1,2\}^*$ and $\textit{red}(u)=u$, a labelled graph $G(V,E)$ with $V \subseteq \mathbb{N}$ is \emph{$u$-representable} if there exists a word $w \in V^*$ such that for all $x,y \in V$, the vertices $x$ and $y$ are adjacent if and only if the subword $w_{x,y}$ contains a $u$-match. An unlabelled graph $H$ is $u$-representable if it admits a labelling that results in a $u$-representable labelled graph.
  Word-representable graphs are $u$-representable for $u=11$. However, Kitaev \cite{Kitaev17-JGT} showed that every graph is $u$-representable whenever the length of $u$ is at least three. However, it was shown that not all graphs are $12$-representable. Jeff Remmel introduced the notion of a $k$-$11$-representable graph for a non-negative integer $k$, providing another generalisation of word-representable graphs. In this representation, an edge of a graph $G$ corresponds to at most $k$ occurrences of the consecutive pattern $11$ in a word representing $G$. Consequently, word-representable graphs are precisely the $0$-$11$-representable graphs. It has been shown that every graph is $2$-$11$-representable via a concatenation of permutations \cite{cheon2019k}, and that $1$-$11$-representations exist for the Chv\'{a}tal graph, the Mycielski graph, split graphs, and graphs whose vertex sets can be partitioned into a comparability graph and an independent set \cite{futorny2024new}.
  
The study of pattern-avoiding word-representations of word-representable graphs was introduced in \cite[Section 7.8]{kitaev2015words}. Later, Gao \textit{et al.} \cite{GKZ17-AJC} studied word-representable graphs that avoid the $132$-pattern. They showed that in the $132$-avoiding word-representant of a word-representable graph, each letter occurs at most twice; therefore, these graphs are circle graphs. Furthermore, they proved that all trees, cycles, and complete graphs admit $132$-avoiding word-representations. On the other hand, Mandelshtam \cite{Mandelshtam19-DMGT} studied the word-representable graphs that avoid the $123$-pattern. He showed that each letter appears at most twice in any $123$-avoiding word-representant of a word-representable graph, implying that these graphs are also circle graphs. Moreover, all paths, cycles, and complete graphs, but not all trees, are word-representable by $123$-avoiding words. In contrast to general word-representable graphs, the existence of both $132$-avoiding and $123$-avoiding representations depends on vertex labelling. Additionally, A. Takaoka \cite{takaoka2024forbidden} studied $12$-representable graphs that avoid patterns of length three. Vertex labelling also plays a crucial role in $12$-representations of graphs. In particular, it was shown that graphs avoiding the patterns $111$, $121$, $231$, and $321$ in their $12$-representations correspond to $12$-representable graphs, permutation graphs, trivially perfect graphs, and bipartite permutation graphs, respectively. This paper also provides forbidden pattern characterisations for other patterns, including $123$, $132$, and $211$. Forbidden pattern characterisations based on vertex orderings are useful for describing several important graph classes, including interval graphs, permutation graphs, and comparability graphs. For example, a graph $G$ is an interval graph if and only if there exists a vertex ordering such that, for any three vertices $x \prec y \prec z$, adjacency of $x$ and $z$ implies adjacency of $y$ and $z$ \cite{olariu1991optimal}. Additional significant results on forbidden-pattern characterisations can be found in \cite{Damaschke90-incollection,FH21-SIDMA} and \cite[Section 7.4]{BLS99}.
 
 In addition to permutation patterns, combinatorics on words also studies unordered patterns such as squares, cubes, overlaps, and borders. A word $u$ is said to contain a word $v$ as a factor if $u = xvy$, where $x$ and $y$ may be empty. A square (respectively, a cube) in a word consists of two (respectively, three) consecutive equal factors. A word $w \in \Sigma^*$ contains a square (respectively, a cube) if it can be written as $w = s_1XXs_2$ (respectively, $w = s_1XXXs_2$), where $s_1,s_2 \in \Sigma^*$ and $X \in \Sigma^+$.
  The concept of square-free words gained significant attention in the field of combinatorics on words after the work of Axel Thue in his paper \cite{thue1906uber}. Thue proved the existence of an infinite number of square-free words over ternary alphabets and opened up the area of combinatorics on words. The notion of a square-free word-representation in the context of word-representable graphs was introduced in the book \cite[Section 7.1.3]{kitaev2015words}. In the book, it was shown that word-representable graphs can be represented by cube-free words. Moreover, the existence of trivial square-free word-representations was established for all word-representable graphs except the empty graph on two vertices.
 In the paper \cite{das2024square}, it was shown that there exists a non-trivial square-free word-representation for each word-representable graph except the empty graph of two vertices. Therefore, every word-representable graph, except an empty graph on two vertices, is square-free word-representable. In Observation~\ref{obss1}, we prove the existence of a border-free word-representation for word-representable graphs.
 
As discussed above, pattern-avoidance notions that depend on vertex labelling provide characterisations of certain subclasses of word-representable graphs. However, for unordered patterns that do not depend on specific labellings, most word-representable graphs trivially avoid such patterns. From this, a general question arises about whether there exist label-independent patterns that are unavoidable for certain word-representable graphs but avoidable for others. Another motivation for introducing $p$-complete square-free word-representations comes from the definition of adjacency in word-representable graphs. A subword of a word $w$ is a word obtained by removing certain letters from $w$. Such subwords need not consist of consecutive letters in the original word, and different letters may appear in arbitrary positions between repeated occurrences. In a word-representation, two vertices are adjacent if and only if they alternate in the subword obtained by restricting the word to those two letters. Thus, adjacency is determined by subwords and not by consecutive factors of the word. In contrast, most unordered patterns studied in combinatorics on words, such as squares, cubes, and overlaps, are defined in terms of factors. Square-free or cube-free word-representations are known to exist for almost all classes of word-representable graphs. On the other hand, pattern-avoiding word-representations such as $132$-avoiding and $123$-avoiding representations are defined by avoiding these patterns in subwords rather than in factors, and they lead to a more restricted family of word-representable graphs. Motivated by this difference, it is natural to study repetition patterns defined on subwords. The notion of a $p$-complete square is based on this idea and leads to new structural properties of word-representable graphs.
 
 In this paper, we define a pattern that provides an affirmative answer to the query mentioned above. This pattern also generalises the concept of square patterns found within a word. As word-representable graphs avoid the simultaneous repetitive occurrences of a factor, we want to extend this square-free property to subwords.  
 We define a $p$-complete square by restricting a word to the letters of a given subword. We define the notation of $p$-complete square present in the word $w$ defined over the alphabet $\Sigma$ by restricting $w$ to any subset of $\Sigma$, where the restricted $w$ contains a square $XX$, where $X\in \Sigma^+$ and $|X|\geq p$. For example, the word $w=125783462145673818723546$ is defined on the alphabet $\Sigma=\{1,2,\ldots, 8\}$. If we restricted $w$ to the $\{2,5,7,8\}\subset \Sigma$, then the restricted word becomes $257825788725$. In this case,  $2578$ has two consecutive occurrences; therefore, $w$ contains a $4$-complete square. According to this definition, if $w$ contains a subword of length at least $p$ that occurs twice consecutively, then $w$ contains a $p$-complete square. This definition generalises the notion of a square defined on factors, as a factor is an example of a subword. In this paper, we analyse word-representable graphs, focusing on their word-representants that either contain or avoid square-free structures in subwords. 
 
  We define the notation of $p$-complete square-free word-representable graphs, where a word-representable graph $G(V,E)$ is represented by a word $w$ such that when restricting $w$ to any subset of $V$, that restricted word does not contain a square $XX$, where $X\in V^+$ and $|X|\geq p$. It is interesting because it allows us to explore whether such specific words can also represent all word-representable graphs. If this is not the case, we can determine which classes of graphs can be represented in this manner. We found that depending on the $p$ value, some word-representable graphs lose the square-free property when the word representing the graph is restricted to certain subsets of its vertices. Hence, depending on the value of $p$, the class of $p$-complete square-free word-representable graphs forms a proper subclass of word-representable graphs. Since every word-representable graph admits a $k$-uniform word-representation, and the minimum word length of many word-representable graphs remains unknown, the precise determination of letter occurrence patterns in non-uniform representations is difficult. Therefore, we restrict our analysis to uniform words when exploring $p$-complete square-free word-representations.
In this paper, we prove some of the specific properties of the $p$-complete square-free uniform words. It is interesting to discover what other graph properties are held in these graphs. 
  
  In \textit{Section \ref{sc2}}, we present the definitions, notations, and basic results used in this paper. In \textit{Section \ref{sc3}}, we formally define the concept of $p$-complete square-free word-representable graphs and $p$-complete square-free uniform word-representable graphs. We also show how to create a $p+1$-complete square-free uniform word-representable graph from an existing $p$-complete square-free uniform word-representable graph. Additionally, we analysed both $1$-complete and $2$-complete square-free uniform word-representable graphs. 
  We found that each $K_p$-free circle graph is a $p$-complete square-free uniform word-representable graph. We also show that, for arbitrary $p$, the recognition problem for $p$-complete square-free word-representable graphs is NP-hard. In \textit{Section \ref{sc4}}, we aim to classify the $3$-complete square-free uniform word-representable graphs. We show that only word-representable graphs with a representation number at most $3$ can be candidates for $3$-complete square-free uniform word-representable graphs. Finally, we introduce a method for generating more $3$-complete square-free uniform word-representable graphs from a known one.
  
  \section{Preliminaries}\label{sc2}
In this section, we briefly describe the necessary definitions, notations, and basic results on word-represent-able graphs.

  \begin{dnt}\textit{(\cite{kitaev2015words} ,Definition 3.0.3.)} Suppose that $w$ is a word and $x$ and $y$ are two distinct letters in $w$. The letters $x$ and $y$ alternate in $w$ if, after deleting all letters but the copies of $x$ and $y$ from $w$, either a word $xyxy\cdots$ (of even or odd length) or a word $yxyx\cdots$(of even or odd length) is obtained. If $x$ and $y$ do not alternate in $w$, then these letters are called non-alternating letters in $w$. 
 \end{dnt}
 
% A \textit{subword} of a word $w$ is a word obtained by removing certain letters from $w$. 
In a word $w$, if $x$ and $y$ alternate, then $w$ contains $xyxy\cdots$ or $yxyx\cdots$ (odd or even length) as a subword.

\begin{dnt}\textit{(\cite{kitaev2015words} , Definition 3.0.5.)}
 A simple graph $G(V, E)$ is \textit{word-representable} if there exists a word $w$ over the alphabet $V$ such that the two distinct letters $x$ and $y$ alternate in $w$ if and only if $x$ and $y$ are adjacent in $G$. If a word $w$ \textit{represents} $G$, then $w$ contains each letter of $V(G)$ at least once.
\end{dnt}
We denote adjacency between vertices $x$ and $y$ by $x\sim y$, and non-adjacency by $x\nsim y$.
For a word $w$, $w_{\{x_1, \cdots, x_m\}}$ denotes the word formed by removing all letters from $w$ except the letters $x_1, \ldots, x_m$. In a word $w$ representing a graph $G(V,E)$, if $w_{\{x,y\}}$ is of the form $(xy)^k$, $(yx)^k$, $(xy)^k x$, or $(yx)^k y$, then $x$ and $y$ alternate in $w$, and hence $x\sim y$ in $G$. If $x\nsim y$ in $G$, then $x$ and $y$ do not alternate in $w$ if at least one of the factors $xxy$, $yxx$, $xyy$, or $yyx$ occurs in $w_{\{x,y\}}$.
\begin{dnt}\label{unif}
    (\textit{\cite{kitaev2015words}, Definition 3.2.1.}) 
    A word $w$ is \textit{$k$-uniform word} if every letter in $w$ occurs exactly $k$ times.
\end{dnt}

\begin{dnt}(\textit{\cite{kitaev2015words}, Definition 3.2.3.)} A graph is \textit{$k$-representable} (or \textit{$k$-word-representable}) if there exists a $k$-uniform word representing it. 
\end{dnt}
\begin{theorem}(\textit{\cite{kitaev2008representable}, Theorem 7.})\label{krep}
     A graph $G$ is word-representable if and only if there exists an integer $k\ge 1$ such that $G$ is $k$-representable.
\end{theorem}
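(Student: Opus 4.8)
The backward direction is immediate: a $k$-uniform word representing $G$ is in particular a word representing $G$, so $G$ is word-representable. The plan is therefore to establish the forward direction, that any word-representable $G$ admits a $k$-uniform representation for some $k$. Starting from an arbitrary word $w$ representing $G$, let $m_x$ denote the number of occurrences of the letter $x$ in $w$ and set $k=\max_{x\in V} m_x$. Call a letter \emph{deficient} if $m_x<k$. I would produce a $k$-uniform representation by repeatedly inserting a single copy of a carefully chosen deficient letter at the \emph{front} of the word, each insertion preserving the represented graph and decreasing the total deficiency $\sum_{x\in V}(k-m_x)$ by one, until no deficient letter remains.

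The first ingredient is a \emph{safe-prepend lemma}: if $z\in V$ is such that every neighbour of $z$ in $G$ has its first occurrence in $w$ strictly before the first occurrence of $z$, then $zw$ also represents $G$. To see this I would invoke the pairwise alternation criterion recalled above. For a pair $\{a,b\}$ not containing $z$ we have $(zw)_{\{a,b\}}=w_{\{a,b\}}$, so nothing changes. For a pair $\{z,b\}$ with $z\nsim b$, the word $w_{\{z,b\}}$ already contains one of the factors $zz$ or $bb$, and prepending $z$ cannot remove such a factor, so $z$ and $b$ remain non-alternating. Finally, for $z\sim b$ the hypothesis forces $b$ to appear before $z$, so $w_{\{z,b\}}$ begins with $b$ and, being alternating, has the form $bzbz\cdots$; then $(zw)_{\{z,b\}}=zbzbz\cdots$ is still alternating. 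Hence $zw$ represents $G$, with $m_z$ increased by one and all other counts unchanged.

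The crux is to guarantee that a suitable deficient letter to prepend always exists. Here the key claim is: \emph{among all deficient letters, the one whose first occurrence in $w$ is latest is prependable}. I would prove this by contradiction. Let $r$ be the deficient letter with the latest first occurrence, and suppose $r$ has a neighbour $y$ whose first occurrence is after that of $r$. Since $r$ has the latest first occurrence among deficient letters, any letter first-appearing after $r$ must be non-deficient, so $m_y=k$. But $r\sim y$ means $w_{\{r,y\}}$ is alternating, and as $r$ precedes $y$ it starts with $r$ and has the form $ryry\cdots$; every such word satisfies $m_r\ge m_y$. This gives $k=m_y\le m_r<k$, a contradiction. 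Therefore all neighbours of $r$ appear before it, and by the safe-prepend lemma $rw$ represents $G$.

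Combining the two ingredients, at each stage at which the current word is not uniform I prepend the deficient letter with the latest first occurrence; this is safe, raises that letter's count by one without exceeding $k$, and hence strictly decreases the total deficiency while keeping $k$ fixed. After finitely many steps the total deficiency reaches $0$, so every letter occurs exactly $k$ times, yielding a $k$-uniform word representing $G$. By the definition of $k$-representability this shows $G$ is $k$-representable, completing the forward direction. The main obstacle is precisely the key claim of the third paragraph; once the phenomenon that the latest-appearing deficient letter is always prependable is isolated, the remainder is routine bookkeeping on the deficiency.
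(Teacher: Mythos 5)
Your proof is correct. One thing to note at the outset: this paper never proves Theorem~\ref{krep} itself --- it is quoted from \cite{kitaev2008representable} --- so the only meaningful comparison is with the classical argument there, which your proposal essentially reproduces in a sequential form. The standard proof prepends, in one step, the initial permutation of the word restricted to \emph{all} deficient letters (taken in order of first occurrence) and shows this preserves the represented graph, iterating until the word is uniform; it is the same mechanism that underlies Observation~\ref{pw}. You instead prepend deficient letters one at a time, choosing the deficient letter with the latest first occurrence, and your ``safe-prepend lemma'' plus the selection rule is exactly what makes each single insertion legal. Both arguments hinge on the identical counting fact: if $x\sim y$ alternate in $w$ and $x$ first occurs before $y$, then $x$ occurs at least as many times as $y$, so every non-deficient neighbour of a deficient letter must first occur before it. Your letter-at-a-time variant makes the verification lighter (only one letter's count and one first-occurrence position change per step, and the deficiency argument is clean bookkeeping) at the price of more iterations --- total deficiency many instead of at most $k-1$ rounds --- which is immaterial since both terminate. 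So this is a correct proof, and a legitimate, slightly more granular packaging of the known one rather than a genuinely different route.
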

\begin{dnt} (\textit{\cite{kitaev2017comprehensive}, Definition 3})
    For a word-representable graph $G$, \textit{the representation number} is the least $k$ such that $G$ is $k$-representable. The class of graphs with representation number $k$ is denoted by $\mathcal{R}_k$.
\end{dnt}
It can be easily seen that $\mathcal{R}_1$ contains only complete graphs.
\begin{theorem}(\textit{\cite{kitaev2017comprehensive}, Theorem 6.})\label{cir} 
We have
$\mathcal{R}_2$ = \{$G$ : $G$ is a circle graph different from a complete graph\}.
\end{theorem}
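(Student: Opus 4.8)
The plan is to establish the single structural equivalence that underlies the theorem, namely that a graph is $2$-representable if and only if it is a circle graph, and then to combine this with the already-recorded fact that $\mathcal{R}_1$ is exactly the class of complete graphs.

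First I would make precise the correspondence between $2$-uniform words and chord diagrams. Given a $2$-uniform word $w$ over the alphabet $V$, which has length $2|V|$, I would place its $2|V|$ positions as distinct points on a circle in the order they occur in $w$, and for each letter $x \in V$ join its two occurrence positions by a chord $c_x$. This assigns to $w$ a chord diagram with one chord per vertex. The key local claim is that two letters $x$ and $y$ alternate in $w$ if and only if the chords $c_x$ and $c_y$ cross. To see this, restrict $w$ to $\{x,y\}$: the only possibilities are $xyxy$ or $yxyx$, in which case the four endpoints interleave around the circle and the chords cross, versus $xxyy$, $yyxx$, $xyyx$, or $yxxy$, in which case the endpoints do not interleave and the chords are disjoint or nested. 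Thus $w$ represents $G$ exactly when the associated chord diagram realises $G$, so every $2$-representable graph is a circle graph. Conversely, starting from a chord diagram realising a circle graph $G$, taken in general position so that no two endpoints coincide, reading the $2|V|$ endpoints around the circle yields a $2$-uniform word, and the same endpoint-interleaving argument shows that two letters alternate precisely when the corresponding chords cross; hence this word represents $G$ and $G$ is $2$-representable.

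With the equivalence between $2$-representability and being a circle graph in hand, the two inclusions of the theorem follow quickly. For the forward inclusion, if $G \in \mathcal{R}_2$ then $G$ is $2$-representable, hence a circle graph, and since its representation number equals $2$ rather than $1$ it does not lie in $\mathcal{R}_1$, so $G$ is not complete. For the reverse inclusion, let $G$ be a circle graph that is not complete; by the equivalence $G$ is $2$-representable, so its representation number is at most $2$, while the fact that $G$ is not complete means $G \notin \mathcal{R}_1$, i.e.\ $G$ is not $1$-representable. Hence the least $k$ for which $G$ is $k$-representable is exactly $2$, giving $G \in \mathcal{R}_2$.

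I expect the main obstacle to be the careful verification of the alternation-versus-crossing dictionary and, in particular, the bookkeeping needed to pass cleanly between a word and a chord diagram in both directions: one must ensure that the chord diagram can be taken in general position so that the circular reading is well defined, and that the six possible two-letter patterns are correctly partitioned into the crossing and non-crossing cases. The remaining reasoning about representation numbers is routine once $\mathcal{R}_1$ is known to consist precisely of the complete graphs.
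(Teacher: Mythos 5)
Your proof is correct. Note that the paper itself gives no proof of this statement: it is quoted as a known result (Theorem 6 of the cited reference \cite{kitaev2017comprehensive}), so the only comparison available is with the literature, where the argument is exactly the one you give --- the dictionary between $2$-uniform words and chord diagrams (alternation of $x,y$ in the word $\iff$ interleaving of endpoints $\iff$ crossing of the chords $c_x, c_y$), which yields that $2$-representability coincides with being a circle graph, combined with the fact that $\mathcal{R}_1$ consists precisely of the complete graphs to pin the representation number at exactly $2$. Your case analysis of the six two-letter patterns and the handling of both inclusions are sound, so the proposal stands as a complete proof of the cited theorem.
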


\begin{prop}(\textit{\cite{kitaev2015words}, Proposition 3.2.7})\label{pr1}
		Let $w = uv$ be a $k$-uniform word representing a graph $G$, where $u$ and $v$ are two, possibly empty, words. Then, the word $w' = vu$ also represents $G$.
\end{prop}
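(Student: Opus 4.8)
The plan is to reduce the claim to a statement about two-letter subwords, since adjacency in a word-representable graph is determined pair by pair. Fix two distinct letters $x,y \in V$. By definition, $x \sim y$ in $G$ if and only if $x$ and $y$ alternate in $w$, that is, if and only if $w_{\{x,y\}}$ is of alternating form; the same holds for $w'$. Hence it suffices to show that $x$ and $y$ alternate in $w = uv$ precisely when they alternate in $w' = vu$, and then to range over all pairs $\{x,y\}$.

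First I would record the effect of the rotation on the restriction. Writing $u_{\{x,y\}}$ and $v_{\{x,y\}}$ for the restrictions of $u$ and $v$ to $\{x,y\}$, we have $w_{\{x,y\}} = u_{\{x,y\}}\,v_{\{x,y\}}$ and $w'_{\{x,y\}} = v_{\{x,y\}}\,u_{\{x,y\}}$. Thus $w'_{\{x,y\}}$ is obtained from $w_{\{x,y\}}$ by moving the prefix $u_{\{x,y\}}$ to the end, i.e.\ by a cyclic rotation, and conversely $w_{\{x,y\}}$ is a cyclic rotation of $w'_{\{x,y\}}$. At this point I would invoke the hypothesis that $w$ is $k$-uniform: each of $x$ and $y$ occurs exactly $k$ times in $w$, so $w_{\{x,y\}}$ (and likewise $w'_{\{x,y\}}$) is a word over $\{x,y\}$ containing exactly $k$ copies of each letter.

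The core of the argument is then a short observation about two-letter words with equal letter counts: such a word is alternating if and only if every cyclic rotation of it is alternating. The point is that a word over $\{x,y\}$ with $k$ copies of each letter is alternating exactly when it equals $(xy)^k$ or $(yx)^k$, and any cyclic rotation of either of these is again one of these two forms, hence still alternating. Since ``is a cyclic rotation of'' is a symmetric relation, this yields both implications at once: if $w_{\{x,y\}}$ is alternating then so is its rotation $w'_{\{x,y\}}$, and conversely. Applying this to every pair $\{x,y\}$ shows that $w$ and $w'$ induce exactly the same adjacencies, so $w'$ represents $G$.

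The step deserving the most care — and the place where the hypothesis is genuinely used — is the equal-counts observation. Uniformity is not cosmetic here: for unequal counts a rotation can destroy alternation (for instance $xyx$ is alternating while its rotation $xxy$ is not), so the equality of the number of $x$'s and $y$'s, guaranteed by $k$-uniformity, is exactly what forces an alternating restriction into the rigid shape $(xy)^k$ or $(yx)^k$ that is stable under rotation. I would therefore state the uniformity hypothesis explicitly at the moment the two-letter restriction is analysed, since it is the crux of the whole argument.
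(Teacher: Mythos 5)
Your proof is correct: the reduction to two-letter restrictions, the observation that $w'_{\{x,y\}}$ is a conjugate (cyclic rotation) of $w_{\{x,y\}}$, and the fact that $k$-uniformity forces an alternating restriction into the rigid form $(xy)^k$ or $(yx)^k$ — which is closed under rotation — together give a complete argument, and you correctly identify uniformity as the indispensable hypothesis (your $xyx$ versus $xxy$ example shows the statement fails without it). Note that the paper itself states this proposition as a cited preliminary from Kitaev's book without reproducing a proof, and your argument is precisely the standard one given there, so there is nothing to reconcile.
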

\begin{prop}(\textit{\cite{kitaev2015words}, Proposition 3.0.15.})\label{pr2}
   Let $w = s_1xs_2xs_3$ be a word representing a graph $G$, where $s_1$, $s_2$ and $s_3$ are possibly empty words, and $s_2$ contains no $x$. Let $X$ be the set of all letters that appear only once in $s_2$. Then, possible candidates for $x$ to be adjacent in $G$ are the letters in $X$. 
\end{prop}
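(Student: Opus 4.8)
The plan is to prove the contrapositive: I will show that every letter $y \neq x$ with $y \notin X$ fails to alternate with $x$ in $w$, so that $x \nsim y$ in $G$. Since $X$ consists exactly of the letters occurring once in $s_2$ and $s_2$ contains no $x$, the condition $y \notin X$ means $y$ occurs in $s_2$ either zero times or at least twice. The whole argument will be carried out inside the two-letter restriction $w_{\{x,y\}}$, using the two distinguished occurrences of $x$ that flank $s_2$; the key observation is that $s_2$ contains no $x$, so after restricting to $\{x,y\}$ the entire block coming from $s_2$ consists solely of copies of $y$.

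First I would fix these two consecutive occurrences of $x$ (the one immediately before $s_2$ and the one immediately after it) and read off what lies between them in $w_{\{x,y\}}$. If $y$ does not appear in $s_2$ at all, then between these two $x$'s there is no letter of $\{x,y\}$ whatsoever, so $w_{\{x,y\}}$ contains the factor $xx$; hence $x$ and $y$ do not alternate and $x \nsim y$. If instead $y$ occurs at least twice in $s_2$, then between the same two $x$'s the restriction $w_{\{x,y\}}$ contains at least two consecutive copies of $y$ (again because $s_2$ has no $x$ to separate them), so $w_{\{x,y\}}$ contains the factor $yy$; once more $x$ and $y$ do not alternate, giving $x \nsim y$.

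In both cases the appearance of a factor $xx$ or $yy$ in $w_{\{x,y\}}$ contradicts the definition of alternation, which forces the restriction to have the form $(xy)^k$, $(yx)^k$, $(xy)^k x$, or $(yx)^k y$. Combining the two cases, any $y$ adjacent to $x$ must occur exactly once in $s_2$, i.e.\ $y \in X$, which is the claim. I do not expect a genuine obstacle here: the only thing to be careful about is the bookkeeping that restricting the $x$-free word $s_2$ to the alphabet $\{x,y\}$ leaves only a run of $y$'s between the two marked occurrences of $x$, so that ``zero occurrences'' and ``at least two occurrences'' translate precisely into the forbidden factors $xx$ and $yy$, respectively.
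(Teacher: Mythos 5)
Your proof is correct: the contrapositive argument (a letter occurring zero or at least two times in $s_2$ forces a factor $xx$ or $yy$ in $w_{\{x,y\}}$, destroying alternation) is exactly the standard reasoning behind this proposition, which the paper itself only cites from \cite{kitaev2015words} without reproving. Nothing is missing, since any factor $xx$ or $yy$ in the two-letter restriction is indeed incompatible with the alternating forms $(xy)^k$, $(yx)^k$, $(xy)^k x$, $(yx)^k y$.
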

 
\begin{dnt}(\textit{\cite{broere2018word}, Definition 3.22.})\label{def1}
    For a $k$-uniform word $w$, the \textit{$i^{th}$ permutation}, $1\leq i\leq k$, is denoted by $p_i(w)$ where $p_i(w)$ is the permutation obtained by deleting all occurrences of each letter except its $i^{th}$ occurrence.
     
     We denote the $j^{th}$ occurrence of the letter $x$ in $w$ as $x_j$. 
 \end{dnt}
 \begin{exm}\label{exam1}
    For word $w=142513624356152643$, $p_1(w)=142536$, $p_2(w)=124356$, $p_3(w)=152643$.
 \end{exm}
 For a $k$-uniform word $w$, $p_1(w)$ and $p_k(w)$ are called the \textit{initial permutation} and \textit{final permutation} of $w$, respectively. The initial permutation is denoted by $\pi(w)$ and the final permutation is denoted by $\sigma(w)$. In Example \ref{exam1}, $\pi(w)=142536$ and $\sigma(w)=152643$. For a word $w$, $w_{\{x_1, \cdots, x_m\}}$ denotes the word after removing all letters except the letters $x_1, \ldots, x_m$ present in $w$. 
 
	\begin{obs}(\textit{\cite{kitaev2008representable}, Observation 4})\label{pw}
		Let $w$ be the word-representant of $G$. Then $\pi(w)w$ also represents $G$.
	\end{obs}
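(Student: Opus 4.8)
The plan is to reduce the claim to a pairwise analysis, since a word represents $G$ exactly when, for every pair of distinct letters $x,y \in V$, the letters $x$ and $y$ alternate in the word if and only if $x \sim y$ in $G$. Because $w$ already represents $G$, it suffices to show that for each pair $x,y$ the letters $x$ and $y$ alternate in $\pi(w)w$ if and only if they alternate in $w$; that equivalence immediately gives that $\pi(w)w$ induces the same edge set as $w$. I would therefore fix distinct $x,y \in V$ and study the restriction $(\pi(w)w)_{\{x,y\}}$, comparing it with $w_{\{x,y\}}$.

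The main structural observation is that, since $\pi(w)$ is the permutation recording the order of first occurrences of the letters of $w$, the restriction $\pi(w)_{\{x,y\}}$ consists of exactly one copy of $x$ and one copy of $y$, arranged so that the letter appearing first in $w$ comes first. Consequently $(\pi(w)w)_{\{x,y\}} = \pi(w)_{\{x,y\}}\, w_{\{x,y\}}$, and the single letter that $\pi(w)_{\{x,y\}}$ ends with is the opposite of the first letter of $w_{\{x,y\}}$, the latter being precisely the first-occurring letter among $x,y$ in $w$. This alignment is what makes the prepended pair harmless.

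For one implication I would argue that non-adjacency is preserved: if $x$ and $y$ do not alternate in $w$, then $w_{\{x,y\}}$ contains one of the factors $xx$ or $yy$, and since $w_{\{x,y\}}$ is a suffix of $(\pi(w)w)_{\{x,y\}}$, that same factor persists, so $x$ and $y$ do not alternate in $\pi(w)w$ either. For the converse I would show adjacency is preserved: if $x$ and $y$ alternate in $w$, then $w_{\{x,y\}}$ is an alternating word beginning with, say, the first-occurring letter $a$; prepending $\pi(w)_{\{x,y\}}$ places the other letter $b$ immediately before this $a$, so at the junction we read $\ldots b\,a\ldots$ with $b \neq a$, and the concatenation of two alternating words joined with distinct letters is again alternating. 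Hence $x$ and $y$ alternate in $\pi(w)w$.

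Combining the two directions yields the desired equivalence for every pair, so $\pi(w)w$ represents $G$. I do not anticipate a serious obstacle here; the only point requiring care is confirming that the order of $x$ and $y$ in $\pi(w)$ matches the leading letter of $w_{\{x,y\}}$, which is exactly what guarantees that the junction between $\pi(w)$ and $w$ neither creates nor destroys an alternation. A minor bookkeeping point is to read $\pi(w)$ as the first-occurrence permutation for a possibly non-uniform $w$, consistent with the definition of $p_1(w)$.
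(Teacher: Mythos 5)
Your argument is correct, and it is essentially the standard proof of this observation: the paper itself only cites it from Kitaev--Pyatkin (2008) without reproducing a proof, and the original argument is exactly this pairwise reduction, noting that $\pi(w)_{\{x,y\}}$ records the two letters in the order of their first occurrences in $w$, so prepending it preserves alternation and (since $w_{\{x,y\}}$ survives as a suffix) also preserves non-alternation. Your handling of the junction letter is the one point that needs care, and you got it right.
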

\begin{dnt}%(\textit{\cite{kitaev2015words}, Definition 3.2.8.}) 
A word $u$ contains a word $v$ as a \textit{factor} if $u = xvy$ where $x$ and $y$ can be empty words.
\end{dnt}

\begin{exm}
    The word $421231423$ contains the words $123$ and $42$ as factors, while all factors of the word $2131$ are $1$, $2$, $3$, $21$, $13$, $31$, $213$, $131$ and $2131$.
\end{exm}
\begin{dnt}(\textit{\cite{kitaev2015words},  Definition 3.4.1.})
    A graph $G$ with the vertex set $V = \{1,\ldots, n\}$ is permutationally representable if it can be represented by a word of the form $p_1\cdots p_k$, where $p_i$ is a permutation of $V$ for $1 \leq i \leq k$. If $G$ can be represented permutationally involving $k$ permutations, we say that $G$ is \textit{permutationally $k$-representable}.
\end{dnt}
\begin{theorem} (\textit{\cite{kitaev2015words}, Theorem 3.4.7.}) \label{tm}
    Let $n$ be the number of vertices in a graph $G$ and $x \in V (G)$ be a vertex of degree $n-1$ (called a dominant or all-adjacent vertex). Let $H = G\setminus {x}$ be the graph obtained from $G$ by removing $x$ and all edges incident to it. Then $G$ is word-representable if and only if $H$ is permutationally representable.
\end{theorem}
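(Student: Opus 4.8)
The plan is to prove the two implications separately, with the reverse direction being routine and the forward direction carrying the real content.

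For the easier direction, suppose $H$ is permutationally representable, say by $w_H = p_1 p_2 \cdots p_k$ where each $p_i$ is a permutation of $V(H) = V(G)\setminus\{x\}$. I would form the word $w = x p_1\, x p_2 \cdots x p_k$ and claim it represents $G$. Two checks are needed. First, for any two letters $y,z \in V(H)$, deleting all copies of $x$ from $w$ returns exactly $p_1 p_2 \cdots p_k$, so $y$ and $z$ alternate in $w$ if and only if they alternate in $w_H$, i.e.\ if and only if $y \sim z$ in $H$; since $H = G\setminus x$ is an induced subgraph, this matches adjacency in $G$. Second, for any $y \in V(H)$ the restriction $w_{\{x,y\}}$ equals $(xy)^k$, because each block $x p_i$ contributes one $x$ followed by the unique occurrence of $y$ in $p_i$; hence $x$ alternates with $y$, which is exactly what is required since $x$ is all-adjacent. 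This establishes word-representability of $G$.

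For the forward direction --- the substantive part --- suppose $G$ is word-representable. By Theorem \ref{krep} there is an integer $k$ and a $k$-uniform word $w$ representing $G$. Using Proposition \ref{pr1} I would cyclically rotate $w$ so that it begins with an occurrence of $x$ (take the prefix preceding the first $x$ as $u$, write $w = uv$, and pass to $vu$); this remains a $k$-uniform representation of $G$. Write $w = x B_1\, x B_2 \cdots x B_k$, where $x_1,\ldots,x_k$ are the $k$ occurrences of $x$ and each $B_i$ is the (possibly empty) $x$-free block following $x_i$. The key step is to show that every $B_i$ is a permutation of $V(H)$: fix $y \ne x$; since $x \sim y$ in $G$ the two letters alternate, and because both occur exactly $k$ times while the word begins with $x$, the restriction $w_{\{x,y\}}$ must be $xyxy\cdots xy$ of length $2k$. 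This forces exactly one copy of $y$ into each block $B_i$. Ranging over all $y \in V(H)$ shows that each $B_i$ contains exactly one occurrence of every letter of $V(H)$, i.e.\ each $B_i$ is a permutation of $V(H)$.

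To finish, delete all copies of $x$ from $w$: the resulting word $B_1 B_2 \cdots B_k$ represents the induced subgraph $H = G\setminus x$ (deleting a letter from a representing word represents the induced subgraph on the remaining vertices, since pairwise alternation among those letters is unaffected), and it is by construction a concatenation of permutations of $V(H)$. Hence $H$ is permutationally representable. I expect the main obstacle to be precisely this block-structure argument in the forward direction: it is the combination of uniformity (so that $x$ and each $y$ share the common multiplicity $k$) together with the normalization that $w$ starts with $x$ that pins down $w_{\{x,y\}} = (xy)^k$ and thereby forces each inter-$x$ block to be a complete permutation. Without uniformity one could not conclude that a single $y$ lands in every block, so invoking Theorem \ref{krep} at the outset is essential.
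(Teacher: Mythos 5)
Your proof is correct. Note that the paper does not prove this statement at all --- it is quoted as a preliminary (Theorem 3.4.7 of Kitaev--Lozin's book) --- and your argument is essentially the standard one from that source: pass to a $k$-uniform word via Theorem~\ref{krep}, rotate it with Proposition~\ref{pr1} so that it starts with $x$, use alternation plus uniformity to force $w_{\{x,y\}}=(xy)^k$ and hence each inter-$x$ block to be a permutation of $V(H)$, and conversely interleave $x$ with the permutations $p_1,\ldots,p_k$ to represent $G$.
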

\begin{dnt}(\textit{\cite{kitaev2015words}, Definition 5.4.5.})
     A subset $X$ of the set of vertices $V$ of a graph $G$ is a module if all members of $X$ have the same set of neighbours among vertices not in $X$ (that is, among vertices in $V \setminus X$).
\end{dnt}
\begin{dnt}
    An orientation of a graph is \textit{transitive} if the presence of edges $u \rightarrow v$ and $v \rightarrow z$ implies the presence of the edge $u \rightarrow z$. An undirected graph is called a comparability graph if it admits a transitive orientation.
\end{dnt}

\begin{theorem} (\textit{\cite{kitaev2015words}, Theorem 5.4.7.}) \label{tm1}
    Suppose that $G$ is a word-representable graph and $x \in V (G)$. Let $G'$ be obtained from $G$ by replacing $x$ with a module $M$, where $M$ is any comparability graph (in particular, any clique). Then $G'$ is also word-representable.
\end{theorem}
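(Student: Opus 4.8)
The plan is to build a word for $G'$ by surgically replacing, in a word representing $G$, every occurrence of the vertex $x$ with a suitably chosen arrangement of the vertices of $M$. The guiding principle is that a full permutation of $V(M)$ inserted in place of $x$ behaves externally exactly like a single copy of $x$, while the edges internal to $M$ are controlled solely by the sequence of inserted permutations.

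First I would set up the two ingredients and align their sizes. By Theorem~\ref{krep} the graph $G$ has a $k$-uniform representation for some $k$, and by iterating Observation~\ref{pw} (each application of $w \mapsto \pi(w)w$ raises a uniform count by one) I may assume $G$ has an $N$-uniform representation $w$ for any prescribed sufficiently large $N$. On the other side, since $M$ is a comparability graph it is permutationally representable: fixing a transitive orientation turns $M$ into a poset, and the linear extensions in a realizer of that poset yield permutations $Q_1,\dots,Q_d$ in which two vertices occupy the same relative order in every $Q_i$ precisely when they are comparable, i.e.\ adjacent in $M$. Repeating the last permutation shows $M$ is permutationally representable by any number of permutations at least $d$ (cliques are covered, since a total order is a transitive orientation). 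I then pick $N \ge d$, take $w$ to be $N$-uniform, and take $Q_1,\dots,Q_N$ to be a permutational representation of $M$.

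The construction is: form $w'$ from $w$ by replacing the $i$-th occurrence of the letter $x$ with the word listing $V(M)$ in the order given by $Q_i$, leaving every other letter in place. I would then verify that $w'$ represents $G'$ by checking alternation for the three types of vertex pairs. For two vertices outside $M$, their occurrences are untouched, so they alternate in $w'$ iff they did in $w$, matching $G$ and hence $G'$. For a vertex $y \notin V(M)$ and $v \in V(M)$, the letter $v$ occupies exactly the block positions formerly held by $x$ and appears once per block, so $w'_{\{y,v\}}$ is order-isomorphic to $w_{\{x,y\}}$; thus $v$ alternates with $y$ iff $x$ did, and by the module hypothesis $v \sim y$ in $G'$ iff $x \sim y$ in $G$, so the two conditions agree. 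For two vertices $u,v \in V(M)$, which occur only inside the blocks, $w'_{\{u,v\}}$ equals $(Q_1\cdots Q_N)_{\{u,v\}}$, so they alternate iff they keep the same relative order throughout the $Q_i$, iff $u \sim v$ in $M$, iff $u \sim v$ in $G'$.

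The conceptual crux, and the step I expect to carry the real weight, is the external--internal case together with the choice of permutations: one must see that collapsing each inserted block to a single representative recovers the original alternation with $x$. This is where the module property is essential, since it forces every vertex of $M$ to inherit exactly the outside neighbourhood of $x$; meanwhile the internal edges are dictated entirely by whether the inserted permutations keep pairs in a fixed order, which is exactly permutational representability. This is also why the comparability hypothesis on $M$ cannot be dropped: an arbitrary module would require representing $M$ by permutations, which holds only for comparability graphs. The remaining bookkeeping, namely matching the count $N$ of $x$-occurrences to the number of permutations representing $M$, is routine given the padding arguments above.
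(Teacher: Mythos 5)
Your proposal is correct and follows essentially the same route as the standard proof of this result (the paper only cites it from Kitaev--Lozin's book, Theorem 5.4.7, without reproving it): pad a uniform representation of $G$ so that $x$ occurs at least as many times as the number of permutations in a permutational representation of the comparability graph $M$, substitute the $i$-th occurrence of $x$ by the $i$-th permutation, and check the three types of pairs exactly as you do. Your handling of the key external--internal case and of the fact that comparability graphs are precisely the permutationally representable ones is the same mechanism the book's proof relies on, so there is nothing to fix.
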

In the following, we present the definitions of some patterns and the known results for these patterns.
 \begin{dnt}%\textit{(\cite{kitaev2015words}, Definition 7.1.5.)}
    A \textit{square} (resp., \textit{cube}) in a word is two (resp., three) consecutive equal factors. A word $w\in \Sigma^*$ contains a square (resp.,cube) if $w=s_1XXs_2$ (resp., $w=s_1XXXs_2$), where $s_1,s_2\in \Sigma^*,X\in \Sigma^+$.
\end{dnt}
\begin{dnt}
    An \textit{overlap} is a word of the form $axaxa$, where $a \in \Sigma$, and $x \in \Sigma^*$. From this definition, we can clearly see that an overlap contains a square $axax$.
\end{dnt}
\begin{dnt}
    A word $w$ is \textit{bordered} if $w = uvu$ for some words $u$ and $v$ with $u$ non-empty.
\end{dnt}
\begin{theorem} (\textit{\cite{das2024square}, Theorem 2.}) \label{tm2}
     If $G$ is a connected graph and $w$ is a word representing $G$ where $w$ contains at least one square, then there exists a square-free word $w'$ that represents $G$.
\end{theorem}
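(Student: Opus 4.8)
The plan is to deduce the theorem from a single \emph{reduction lemma} together with the observation that a shortest representation has no slack. If $w$ is already square-free we take $w'=w$, so the only real case is $w=s_1XXs_2$ with $X\in V^{+}$. Since we only need the \emph{existence} of a square-free representation, I would forget the given $w$ and instead fix a word $w'$ representing $G$ of minimum possible length. The theorem then follows once we establish \textbf{(R)}: if a word representing a connected graph contains a square, then some strictly shorter word also represents that graph. Indeed, (R) forces a minimum-length $w'$ to be square-free. The only feature of adjacency I would use is the standard criterion recalled before Definition~\ref{unif}: for distinct $x,y$, whether $x\sim y$ is decided solely by the projection $w_{\{x,y\}}$, so any surgery that preserves the alternation status of every pairwise projection preserves $G$.

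To attack (R) the naive move is to delete the second copy of $X$, replacing $s_1XXs_2$ by the shorter word $s_1Xs_2$. Writing $q:=X_{\{x,y\}}$, $\alpha:=(s_1)_{\{x,y\}}$ and $\beta:=(s_2)_{\{x,y\}}$, a pair $\{x,y\}$ disjoint from $X$ has $q$ empty and its projection $\alpha\beta$ is untouched, so such pairs are automatically safe. The trouble lies with every pair meeting $X$: deletion turns $\alpha qq\beta$ into $\alpha q\beta$, and this can flip the alternation status. For a split pair $x\in X$, $y\notin X$ with $x$ occurring once in $X$, the blocking factor $xx$ supplied by the two copies is destroyed, so a non-edge $x\nsim y$ can silently become an edge; and even for a pair with \emph{both} letters in $X$, if $q$ itself alternates but begins and ends with the same letter (e.g.\ $q=xyx$), then $qq$ carries a block while $q$ does not, and again the status flips. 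This is exactly the obstruction embodied by $\overline{K_2}$, whose representations (such as $xxy$) all carry an unavoidable square and cannot be shortened; it is precisely the connectivity hypothesis---equivalently, the exclusion of $\overline{K_2}$---that must rule such configurations out.

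Consequently the substance of (R) is to replace blind deletion by a \emph{repair}. I would select a square, catalogue the pairs meeting $X$ whose status the deletion would flip, and for each endangered letter use connectivity to furnish a neighbour that can be relocated (or a single occurrence inserted) so as to reinstate a blocking factor for every endangered non-edge while leaving all other projections, and all existing edges, intact; the alternation criterion together with Proposition~\ref{pr2} are the tools for verifying that a candidate relocation does exactly this. Progress is measured by word length, with ties broken by the total length of the squares present, and one argues the repaired word is strictly smaller, so the process terminates at a square-free representative. The step I expect to be the main obstacle is carrying out this repair \emph{simultaneously} for all endangered pairs: one must preserve every adjacency and non-adjacency at once, create no new square while erasing the old one, and still strictly decrease the measure. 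The delicate sub-case is a length-one square $xx$ that is the sole blocker of some non-edge $\{x,y\}$, where merely reinserting an $x$ is circular; resolving it is exactly where one must exploit a genuine third vertex and the connectivity of $G$, the feature that distinguishes every connected graph from the lone exception $\overline{K_2}$.
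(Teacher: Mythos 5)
Your proposal correctly reduces the theorem to the claim (R) and correctly diagnoses why blind deletion of one copy of $X$ fails, but it then stops exactly where the theorem begins: the ``repair'' procedure that must simultaneously preserve every adjacency and non-adjacency, create no new square, and make progress is never constructed, and you yourself flag it as ``the main obstacle.'' That repair \emph{is} the content of the theorem, so what you have is a plan, not a proof. There is also an internal inconsistency in the scaffolding: the appeal to a minimum-length representative requires (R) to produce a \emph{strictly shorter} word, yet your repair may insert letters, which is precisely why you retreat to a lexicographic measure (length, then total square length). If length alone need not decrease, (R) as stated is not established and the minimality argument collapses; one would instead need a well-founded induction on the compound measure, together with a proof that each repair step strictly decreases it and outputs a genuine representation of $G$ --- none of which is done.

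The concrete idea you are missing is the structural use of connectivity. If $w=s_1XXs_2$ represents $G$, then any letter $x$ occurring in $X$ and any letter $y$ not occurring in $X$ satisfy $x\nsim y$, because the projection of $XX$ onto $\{x,y\}$ is a nonempty block of $x$'s containing the factor $xx$. Hence for connected $G$ the factor $X$ must contain \emph{every} vertex, so the ``split pairs'' you propose to repair cannot occur at all. Refining the same argument, a letter occurring exactly once in $X$ and a letter occurring at least twice in $X$ are also forced to be non-adjacent, so connectivity yields a clean dichotomy: either $X$ is a permutation of $V$ --- in which case deleting one copy of $X$ verifiably preserves the alternation status of every pair (the $X$--$X$ junction can never be the witness of non-alternation when the endpoints of each projection $X_{\{x,y\}}$ differ) --- or every letter occurs at least twice in $X$. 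This dichotomy, not a generic repair heuristic, is what drives the argument; it also explains why the source of this statement, \cite{das2024square}, together with the machinery quoted in the Preliminaries (Theorem~\ref{krep}, Proposition~\ref{pr1}, Lemma~\ref{lmk}), works with $k$-uniform representations for $k$ the representation number rather than with minimum-length words: in the uniform setting the permutation case deletes to a $(k-1)$-uniform representation, contradicting minimality of $k$, and cyclic shifts give the freedom to normalize where the square sits. Relatedly, your aside that connectivity is ``equivalently, the exclusion of $\overline{K_2}$'' is not correct: connectivity is used structurally as above, not merely to rule out the one exceptional graph.
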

\begin{theorem}(\textit{\cite{das2024square}, Theorem 3.})\label{tm3}
Suppose $G$ is a disconnected word-representable graph, and $G_i$, $1\leq i\leq n$, $n\in \mathbb{N}$ are the connected components of $G$. Let $w_i$ be the square-free word-representation of $G_i$, and $G_1$ be a non-empty word-representable graph. Then the word $w=w_1\setminus l(w_1)w_2\cdots w_nl(w_1)\sigma(w_n)\cdots\sigma(w_2)$ $\sigma(w_1)\setminus l(w_1)\sigma(w_2)\cdots$ $\sigma(w_n)l(w_1)$ represents $G$ and $w$ is a square-free word. Here, $l(w_1)$ denotes the last letter of $w_1$ and $w_1\setminus l(w_1)$ denotes the word obtained by removing the last letter from $w_1$. 
\end{theorem}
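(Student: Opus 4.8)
The plan is to verify the two assertions separately: first that $w$ represents $G$, and then that $w$ is square-free. Throughout write $a=l(w_1)$ and split $w$ into blocks $P=(w_1\setminus a)\,w_2\cdots w_n$, the single letter $a$, $S=\sigma(w_n)\cdots\sigma(w_2)$, $T=\sigma(w_1)\setminus a$, $U=\sigma(w_2)\cdots\sigma(w_n)$, and a final $a$, so that $w=P\,a\,S\,T\,U\,a$. Since the components have pairwise disjoint alphabets, the component a letter belongs to is an intrinsic label; I will use this repeatedly.

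For the representation I would first handle pairs inside one component. Restricting $w$ to $V(G_1)$ deletes everything except $(w_1\setminus a)$, the two copies of $a$, and $T$; because $a$ is the last letter of both $w_1$ and $\sigma(w_1)$ one gets $w_{V(G_1)}=(w_1\setminus a)\,a\,(\sigma(w_1)\setminus a)\,a=w_1\,\sigma(w_1)$, and for $i\ge 2$ one similarly gets $w_{V(G_i)}=w_i\,\sigma(w_i)\,\sigma(w_i)$. Both words represent the respective component: the reversal of Observation~\ref{pw} shows that appending its final permutation to any representant again gives a representant, and since $\sigma(w_i\sigma(w_i))=\sigma(w_i)$ this may be applied twice. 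For a cross pair $x\in V(G_i)$, $y\in V(G_j)$ with $i<j$, the key point is that $S$ is in decreasing and $U$ in increasing block order: $x$ occurs first (inside $P$, then once in each of $S,U$), while in $S$ the copy of $y$ precedes the copy of $x$ and in $U$ it follows it, so $w_{\{x,y\}}=x^{p}y^{q}\,y\,x\,x\,y$ with $p,q\ge 1$, which contains $xx$; hence $x,y$ do not alternate, matching their non-adjacency in $G$. Pairs involving a $G_1$-letter, including $a$, are checked the same way and again produce a doubled letter.

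For square-freeness I would rule out every factor $XX$. A square confined to a single component is impossible: inside any $w_i$ it is excluded by hypothesis, inside any $\sigma(w_i)$ it is excluded since a permutation has no repeated letter, and the maximal runs of a fixed component in $w$ are pairwise non-adjacent, so such a square cannot straddle two of them. Next, since identical factors carry identical component labels, any square forces its underlying sequence of component labels to be a square too; the three multi-component blocks $P$, $S$, $U$ each have a strictly monotone sequence of component-labels (increasing in $P$ and $U$, decreasing in $S$), and a non-constant square cannot embed in a monotone sequence, so each of these blocks is internally square-free. Consequently every hypothetical square must cross one of the boundaries between $P,a,S,T,U,a$; as all five separators are the component-$1$ letters $a$, $T$, $a$, any such square must contain a letter of $G_1$ together with a letter of some $G_j$, $j\ge 2$.

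The crux is therefore to eliminate squares crossing the central separators. The dangerous configuration is around $T$, where the same block $\sigma(w_2)\cdots\sigma(w_n)$ sits symmetrically on both sides, giving the palindromic pattern $\cdots\sigma(w_3)\sigma(w_2)\,T\,\sigma(w_2)\sigma(w_3)\cdots$. Here the non-emptiness of $G_1$ is essential: it guarantees $T\neq\emptyset$, which already destroys the otherwise-present square $\sigma(w_2)\sigma(w_2)$. For a general crossing factor I would use that the block order left of $T$ is the reverse of that on its right, so any factor whose two copies would match has to align a part with decreasing component labels against a part with increasing labels, which is possible only inside a single block, i.e.\ a single component, contradicting the reduction above. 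Squares meeting $R_1=w_1\setminus a$ or either single $a$ are killed by the sparsity and fixed placement of the component-$1$ letters: the next occurrence of a given $G_1$-letter is too far to be reached by the required period without dragging in non-matching material from a monotone block. Carrying out this final case analysis cleanly, using the reversal asymmetry of $S$ versus $U$ and the hypothesis $T\neq\emptyset$, is the main obstacle; the other steps are routine.
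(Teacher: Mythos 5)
Your first half (that $w$ represents $G$) is sound: the restrictions compute to $w_{V(G_1)}=w_1\sigma(w_1)$ and $w_{V(G_i)}=w_i\sigma(w_i)\sigma(w_i)$ for $i\ge 2$, appending a final permutation is indeed the legitimate mirror image of Observation~\ref{pw}, and every cross-component pair acquires a doubled letter. The first stage of your square-freeness argument is also correct: a square avoiding $V(G_1)$ lies inside one of $w_2\cdots w_n$, $S$, $U$, and the (weakly) monotone component-label pattern there pins it inside a single $w_i$ or $\sigma(w_i)$, which is impossible. (For the record, this paper only quotes the theorem from the literature and contains no proof of it, so your proposal has to stand on its own.)

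The gap is exactly where you concede it is, and it is not routine: the squares containing both a $G_1$-letter and a letter of some $G_j$, $j\ge 2$, are the whole content of the theorem, and the mechanism you propose for them is demonstrably false. Take $G_1=K_2$ on $\{x,a\}$ with $w_1=xa$ and $G_2=K_2$ on $\{u,v\}$ with $w_2=uv$; then $w=x\,uv\,a\,uv\,x\,uv\,a$, which contains the factor $xuva$ \emph{twice}. Each copy crosses three of your blocks, its label pattern $(1,2,2,1)$ is non-constant and non-monotone, and the two copies match letter for letter --- so identical crossing factors aligning material across the separators do exist; square-freeness survives only because the two copies cannot be \emph{adjacent} (here $S=uv$ sits between them). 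Ruling out adjacency in general is the actual work, and your sketch never does it. A correct completion would, for instance, project a hypothetical square onto its $G_1$-letters, note that these form a factor $YY$ of $w_1\sigma(w_1)$ which must straddle the junction (by square-freeness of $w_1$ and distinctness of letters in $\sigma(w_1)$, using $T\neq\emptyset$), and then compare the exact offset $\ell$ between corresponding letters with the gaps $|w_2\cdots w_n|+1$ and $|S|+1$ surrounding the two isolated occurrences of $a$; the bookkeeping closes using $|w_2\cdots w_n|\ge 1$ and $|S|\ge 1$. Letter-level arguments are also unavoidable: with $n=2$ and $G_2$ complete, the adjacent factors $x\,w_2$ and $a\,\sigma(w_2)$ (where $x$ is the penultimate letter of $w_1$) have identical label sequences, and the only obstruction to their equality is $x\neq a$, which holds precisely because the square-free word $w_1$ cannot end in $aa$. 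None of this is delivered by the monotone-alignment and ``too far to be reached by the period'' heuristics in your sketch, so the crux of the proof is missing.
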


\begin{lemma}(\textit{\cite{das2024square}, Lemma 4.})\label{lmk}
     If $G$ is a connected word-representable graph and the representation number of $G$ is $k$, then every $k$-uniform word representing $G$ is square-free.
\end{lemma}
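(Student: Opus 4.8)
The plan is to argue by contradiction, turning a square into a shorter uniform representation. Suppose, contrary to the claim, that some $k$-uniform word $w$ representing the connected graph $G$ contains a square, where $k$ is the representation number of $G$. Write $w = s_1XXs_2$ with $X \in V^{+}$. I will build from $w$ a uniform word representing $G$ in which every letter occurs strictly fewer than $k$ times, contradicting the minimality of $k$. The construction is simply to delete one of the two adjacent blocks $X$; the work lies in showing (i) that this preserves the graph and (ii) that it drops the occurrence count of every letter by the same positive amount.

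First I would show that \emph{every} vertex of $G$ occurs in $X$. Let $A \subseteq V$ be the set of letters occurring in $X$; since $X$ is nonempty, $A \neq \emptyset$. If $a \in A$ and $b \in V \setminus A$, then $b$ does not occur in $XX$, so $(XX)_{\{a,b\}}$ consists only of the occurrences of $a$ from the two blocks $X$ placed consecutively, and thus contains the factor $aa$. Hence $a$ and $b$ do not alternate in $w$, so $a \nsim b$. Therefore no edge joins $A$ to $V \setminus A$; as $G$ is connected and $A \neq \emptyset$, this forces $A = V$.

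Next I would show that all letters occur the same number of times, say $t \ge 1$, in $X$. For any edge $a \sim b$ the word $w_{\{a,b\}}$ alternates, so its factor $(XX)_{\{a,b\}} = YY$ alternates, where $Y := X_{\{a,b\}}$. A word of the form $YY$ that alternates forces $Y$ to alternate and to have distinct first and last letters (otherwise the junction between the two copies repeats a letter); hence $|Y|$ is even and $a$, $b$ occur equally often in $X$. Since this equality holds across every edge and $G$ is connected, propagating it along paths shows that every letter occurs exactly $t$ times in $X$.

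Finally, set $w' := s_1Xs_2$. Every letter loses exactly $t$ occurrences, so $w'$ is $(k-t)$-uniform; and since $XX$ contributes $2t$ occurrences of each letter we have $k \ge 2t$, whence $1 \le k-t < k$. To see that $w'$ still represents $G$, fix a pair $a,b$ and write $w_{\{a,b\}} = PYYQ$ and $w'_{\{a,b\}} = PYQ$ with $Y = X_{\{a,b\}}$. If $a \sim b$ then $PYYQ$ alternates, and deleting the alternating block $Y$ (whose endpoints differ) leaves $PYQ$ alternating, so the edge survives. Conversely, if $a \nsim b$ but $PYQ$ alternated, then $Y$ would alternate; as $|Y| = 2t$ is even, $Y$ would have distinct endpoints, which would make $PYYQ$ alternate as well, contradicting $a \nsim b$. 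Thus $G$ is $(k-t)$-representable with $k-t<k$, contradicting the choice of $k$. The main obstacle is precisely this last verification: block deletion can in general create a spurious alternation (deleting one copy from $aba\,aba$ yields the alternating $aba$), and it is exactly the equal-occurrence conclusion of the third step, forcing $|X_{\{a,b\}}|$ even for every pair, that rules this out and makes the deletion alternation-preserving in both directions.
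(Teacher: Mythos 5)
Your proof is correct. Note that the paper you were given does not actually prove this statement---it imports it as Lemma 4 of the cited reference \cite{das2024square}---so there is no in-paper argument to compare against; your argument is the natural one that such a proof must take (delete one copy of the square to contradict minimality of the representation number), and you correctly identify and close the one genuine subtlety: block deletion can create spurious alternations (as in $aba\,aba \mapsto aba$), which is ruled out precisely because connectivity forces every letter to occur equally often in $X$, making each restriction $X_{\{a,b\}}$ of even length with equal counts, so that deleting or re-inserting a copy of it preserves alternation status in both directions.
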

Since, by definition, every overlap contains a square, removing the square also eliminates the overlap.
 According to Theorems \ref{tm2} and \ref{tm3}, there exists a square-free word-representation for a word-representable graph except the empty graph of order $2$. This implies that every word-representable graph can also be represented by an overlap-free word, as the empty graph with two vertices can be represented by the word $1122$, where $1$ and $2$ represent the vertices of the empty graph. It is clear that the word $1122$ is overlap-free. Since this word is also border-free, an empty graph with two vertices can also be represented by a border-free word. In the following observation, we prove that for any word-representable graph $G$, there exists a border-free word $w$ that represents $G$.
.
\begin{obs}\label{obss1}
     If $G(V,E)$ is a word-representable graph, then there exists a border-free word $w$ that represents the graph $G$.
 \end{obs}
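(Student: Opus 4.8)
The plan is to produce, for every word-representable $G$, a representation that is unbordered in the sense of this paper, i.e.\ one that cannot be written as $uvu$ with $u$ nonempty; since any such decomposition forces $|u|\le |w|/2$, it suffices to rule out borders of length at most half the word. I would split into two cases according to whether $G$ is connected. Note that the square-free results (Theorems~\ref{tm2} and \ref{tm3}) do \emph{not} suffice here, since square-free words such as $abcab$ are still bordered; border-freeness must be read off from finer structure.

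\textbf{Disconnected $G$.} Let $G_1,\dots,G_c$ with $c\ge 2$ be the connected components, over disjoint vertex sets $V_1,\dots,V_c$. For each $i$ I would choose a word $W_i$ over $V_i$ representing $G_i$ in which \emph{every} letter occurs at least twice: if $G_i$ is complete I take $W_i=P_iP_i$ for a permutation $P_i$ of $V_i$ (this $2$-uniform word represents $K_{|V_i|}$), and otherwise I take a word realizing the representation number of $G_i$, which is at least $2$ because $\mathcal R_1$ consists only of complete graphs. Set $w=W_1W_2\cdots W_c$. Restricted to $V_i$ this equals $W_i$, so each $G_i$ is reproduced, and for a cross pair $x\in V_i$, $y\in V_j$ with $i<j$ the restriction $w_{\{x,y\}}$ has the form $x^ay^b$ with $a,b\ge 2$, which is non-alternating and matches $x\nsim y$; hence $w$ represents $G$. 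The key point is that $w$ is unbordered: suppose $w=uvu$ with $u$ nonempty and $\ell:=|u|\le |w|/2$. Reading $u$ as the prefix $w[1..\ell]$, its last letter $w_\ell$ equals the last letter of $u$, which (as a suffix letter) lies in $V_c$; so position $\ell$ sits inside the last block, giving $\ell>|w|-|W_c|$. Reading $u$ as the suffix, its first letter $w_{|w|-\ell+1}$ lies in $V_1$, giving $\ell>|w|-|W_1|$. Together with $\ell\le |w|/2$ these force $|W_1|,|W_c|>|w|/2$, contradicting $|W_1|+|W_c|\le |w|$. Thus $w$ is border-free. (For $G=\overline{K_2}$ this recovers $w=1122$.)

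\textbf{Connected $G$.} Let $k$ be the representation number of $G$ and let $w$ be any $k$-uniform word representing $G$. By Lemma~\ref{lmk} the word $w$ is square-free, hence primitive, since a proper power $z^m$ with $m\ge 2$ would contain the square $zz$. Iterating Proposition~\ref{pr1} (splitting off one letter at a time) shows that every cyclic rotation of the $k$-uniform word $w$ again represents $G$. I would then take $w^{\ast}$ to be the lexicographically least rotation, i.e.\ the Lyndon word of the primitive conjugacy class, which is strictly smaller than each of its proper suffixes. It is unbordered: a border would be a proper prefix of $w^{\ast}$, hence lexicographically smaller than $w^{\ast}$, and simultaneously a proper suffix, hence lexicographically larger than $w^{\ast}$, a contradiction. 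Therefore $w^{\ast}$ is a border-free representation of $G$.

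The hard part is the connected case, and the main conceptual step is recognizing border-freeness as a property of the conjugacy class rather than of any single word: the right tools are Lemma~\ref{lmk}, which supplies a square-free and hence \emph{primitive} uniform representation precisely at the representation number, and Proposition~\ref{pr1}, which makes cyclic rotation cost-free for uniform words, so that one may pass to the canonical unbordered conjugate (the Lyndon word). A secondary subtlety I would flag is the disjoint-union step for complete components, where a naive $1$-uniform (permutation) representation fails because single occurrences create spurious cross-edges; the device $W_i=P_iP_i$ repairs this while the endpoint argument above, relying only on the disjointness of the alphabets of the first and last blocks, continues to apply.
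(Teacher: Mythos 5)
Your proof is correct, and it takes a genuinely different route from the one in the paper. The paper's own argument is a short reduction to existing machinery: make the representing word uniform (via Theorem~\ref{krep}), rotate a border $w=uvu$ into $w'=uuv$ using Proposition~\ref{pr1}, note that $w'$ contains the square $uu$, and then invoke the square-removal results (Theorems~\ref{tm2} and~\ref{tm3}) to conclude. In particular, the concern you raise at the outset --- that square-freeness does not imply border-freeness, as $abcab$ shows --- is exactly the point the paper's proof glosses over: after square removal one only knows the resulting word is square-free, and the paper gives no argument that it is unbordered (nor that iterating rotate-and-remove terminates, since removal destroys uniformity and re-uniformising changes the word). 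Your two-case argument fills that hole with self-contained constructions. In the connected case you use Lemma~\ref{lmk} for exactly what is needed --- square-freeness at the representation number yields primitivity --- and then Proposition~\ref{pr1} makes every rotation admissible, so the Lyndon conjugate is a representation that is unbordered by lexicographic minimality. In the disconnected case, the paper's Theorem~\ref{tm3} is replaced by the simpler concatenation $W_1\cdots W_c$ over disjoint alphabets with every letter occurring at least twice (the $P_iP_i$ device handling complete components), and your first-letter/last-letter argument cleanly excludes any border. What the paper's approach buys is brevity and reuse of prior theorems; what yours buys is rigour, since it never passes through the false implication ``square-free implies border-free,'' together with a canonical unbordered representative (the Lyndon rotation) in the connected case.
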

\begin{proof}
Let $w$ be a word that represents a word-representable graph $G$, and suppose $w$ contains a border. If $ w $ is not a uniform word (see Definition~\ref{unif}), we can transform it into a uniform word using the proof steps outlined in Theorem \ref{krep}. After this transformation, if $w = uvu$ holds, then according to Proposition \ref{pr1}, the word $ w' = uuv$ also represents the graph $ G $. Since $ w' $ contains a square, we can remove that square, as mentioned in Theorems \ref{tm2} and \ref{tm3}. Hence, every word-representable graph admits a border-free word-representation.
\end{proof}

 Recognition problem of word-representable graph and deciding the representation number of word-representable graphs are $NP$-complete problems.
	\begin{cor}(\textit{\cite{halldorsson2016semi}, Corollary 2.})\label{npc}
		The recognition problem for word-representable graphs is NP-complete.
	\end{cor}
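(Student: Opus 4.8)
The plan is to establish both membership in NP and NP-hardness separately.

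For membership in NP, I would exhibit a polynomial-size certificate and a polynomial-time verifier. By Theorem~\ref{krep}, a graph is word-representable precisely when it is $k$-representable for some $k\ge 1$, and it is known that the representation number of a word-representable graph on $n$ vertices is at most $n$. Hence, whenever $G$ is word-representable it admits a $k$-uniform word-representation $w$ with $k\le n$, so $|w|\le n^{2}$, which is a certificate of polynomial length. To verify such a $w$, I would, for each of the $\binom{n}{2}$ pairs $\{x,y\}$, form the restriction $w_{\{x,y\}}$ and test in linear time whether $x$ and $y$ alternate, then confirm that the resulting alternation relation agrees exactly with $E(G)$. The whole check runs in polynomial time, placing the problem in NP.

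The substantive part is NP-hardness, and my route would be through the characterisation of word-representable graphs by \emph{semi-transitive orientations}: a graph is word-representable if and only if it admits an acyclic orientation containing no shortcut, i.e.\ one in which every directed path $v_{0}\to v_{1}\to\cdots\to v_{k}$ whose endpoints are joined by an edge $v_{0}\to v_{k}$ already induces a transitive tournament on $\{v_{0},\dots,v_{k}\}$. I would first invoke (or establish) this equivalence, since it replaces the existence of a word by a purely combinatorial condition on orientations, which is far more tractable for a reduction.

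With the semi-transitivity characterisation in hand, I would reduce from a known NP-complete problem — for instance $3$-SAT or a constrained colouring problem — by designing gadgets whose admissible semi-transitive orientations are in bijection with the satisfying assignments (respectively proper colourings) of the source instance. The gadgets must be laid out so that small subconfigurations such as triangles and short directed paths force binary orientation choices encoding variable values, while the global no-shortcut condition enforces the clause constraints; the constructed graph is then semi-transitively orientable, and therefore word-representable, exactly when the original instance is a yes-instance.

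The main obstacle is precisely this gadget design and its correctness proof. One must show that the constructed graph admits a semi-transitive orientation if and only if the source instance is satisfiable, which means ruling out all \emph{unintended} acyclic orientations and, crucially, verifying the absence of shortcuts \emph{globally} rather than locally. Controlling the interaction between distinct gadgets — guaranteeing that the encoded variables are forced to be consistent and that no spurious directed path creates a shortcut across gadget boundaries — is the delicate point where the bulk of the technical effort is concentrated.
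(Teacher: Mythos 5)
First, a point of framing: the paper you were given never proves this corollary at all --- it is imported verbatim, with its citation, from Halld\'orsson, Kitaev and Pyatkin \cite{halldorsson2016semi} --- so the only meaningful comparison is with the proof in that cited work. Your overall skeleton does match theirs: NP-membership plus NP-hardness, with both halves routed through the characterisation of word-representable graphs as the semi-transitively orientable graphs.

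The genuine gap is that the NP-hardness half of your proposal is a plan, not a proof. You say you ``would'' reduce from $3$-SAT or a colouring problem by designing gadgets whose semi-transitive orientations encode satisfying assignments, and you yourself identify the gadget construction and its global correctness as the main obstacle --- but that obstacle \emph{is} the mathematical content of the theorem. Note also why gadgets are unavoidable and delicate here: a reduction from $3$-colourability cannot simply output the input graph $G$, because while every $3$-colourable graph is word-representable (a consequence of the same orientation theorem), non-$3$-colourable graphs need not fail word-representability (e.g.\ $K_5$). In \cite{halldorsson2016semi} this is resolved by an explicit polynomial-time construction of a graph $H$ from $G$ together with a proof that $H$ is semi-transitively orientable if and only if $\chi(G)\le 3$; none of that appears in your proposal. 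A secondary issue is NP-membership: you assert that the representation number of an $n$-vertex word-representable graph is ``known'' to be at most $n$. The bound actually proved in \cite{halldorsson2016semi} is $2(n-\kappa(G))\le 2n-2$, where $\kappa(G)$ is the clique number, and --- more to the point --- that bound is itself a consequence of the semi-transitive orientation theorem, not a folklore fact available beforehand. The certificate used in the cited paper is the orientation itself, and even then one must argue (it is not immediate, since shortcuts range over all directed paths) that semi-transitivity of a given orientation can be checked in polynomial time. So the skeleton agrees with the literature, but the load-bearing parts --- the explicit reduction and a justified polynomial-time-verifiable certificate --- are missing.
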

    \begin{prop}(\textit{\cite{halldorsson2016semi},  Proposition 8.})\label{knpc}
        Deciding whether a given graph is a $k$-word-representable graph, for any given $3 \leq k \leq \lceil n/2\rceil$, is NP-complete. 
    \end{prop}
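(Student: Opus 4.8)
The plan is to establish the two halves of NP-completeness separately: containment in NP, which is routine, and NP-hardness, which is the substantial part and which I would obtain by a reduction from the recognition problem of Corollary~\ref{npc}.

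For membership in NP, the natural certificate for a yes-instance is a $k$-uniform word $w$ representing $G$. Such a word has length $kn$, and since $k \le \lceil n/2\rceil$ this is $O(n^2)$, i.e.\ polynomial in the size of $G$. To verify the certificate one checks, for each of the $\binom{n}{2}$ pairs of vertices $\{x,y\}$, whether the projection $w_{\{x,y\}}$ (of length at most $2k$) alternates, and confirms this matches adjacency in $G$; one also checks that every letter occurs exactly $k$ times. All of this is clearly polynomial, so the problem lies in NP.

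The first ingredient for hardness is that $k$-representability is \emph{monotone} in $k$: if $w$ is a $k$-uniform word representing $G$, then by Observation~\ref{pw} the word $\pi(w)w$ also represents $G$, and it is $(k+1)$-uniform. Hence $G$ is $k$-representable if and only if its representation number is at most $k$. This already explains the lower end of the range: a graph is $1$-representable iff it is complete, and by Theorem~\ref{cir} it is $2$-representable iff it is a circle graph, and both of these classes are recognisable in polynomial time, so $k=3$ is the first value at which hardness can appear. For the base case $k=3$ I would invoke the reduction underlying Corollary~\ref{npc}, using the additional property that the graphs it produces are already $3$-representable whenever they are word-representable (their representation number is at most $3$). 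For such a family $H$, being $3$-representable is equivalent to being word-representable, which transfers the NP-hardness to the $k=3$ problem.

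To cover an arbitrary target $k$ with $3 \le k \le \lceil n/2\rceil$, the plan is to pad by a disjoint union. For $G_1 \sqcup G_2$ one has $\mathrm{repr}(G_1\sqcup G_2)=\max\{\mathrm{repr}(G_1),\mathrm{repr}(G_2)\}$: concatenating a word for $G_1$ with a suitably $\pi$-padded word for $G_2$ leaves all cross pairs non-alternating, while each component still forces its own representation number. Taking a gadget $C_k$ with representation number exactly $k$ on $2k$ vertices (a crown graph $H_{k,k}$ serves this purpose, with the few smallest cases handled directly), and given a hard instance $H$ as above, I would set $G' = H \sqcup C_k$. If $H$ is word-representable then $\mathrm{repr}(H)\le 3\le k$, so $\mathrm{repr}(G')=k$ and $G'$ is $k$-representable; if $H$ is not word-representable then neither is $G'$, so it is not $k$-representable. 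Thus $G'$ is $k$-representable iff $H$ is word-representable, and since $G'$ has $n=|V(H)|+2k$ vertices we indeed have $k\le \lceil n/2\rceil$, keeping the target inside the stated range. The main obstacle is precisely the base case: one must extract from the recognition-hardness proof a family of instances whose representation number is pinned to at most $3$ in the representable case, so that $k$-representability — and not merely word-representability — is what the reduction decides; the monotonicity and disjoint-union steps are then comparatively routine bookkeeping.
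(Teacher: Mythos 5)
First, a point of comparison: the paper does not prove this proposition at all --- it is quoted as a known result from \cite{halldorsson2016semi} (their Proposition 8). So your proposal can only be measured against the proof in that reference, and in outline you have reconstructed its architecture correctly: NP-membership via the $k$-uniform certificate of length $kn=O(n^2)$, monotonicity of $k$-representability via $\pi(w)w$ (Observation~\ref{pw}), a base case $k=3$ tied to the recognition-hardness construction behind Corollary~\ref{npc}, and padding by a disjoint gadget of representation number exactly $k$ to hit every $k$ in the stated range.

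Two things need repair. The concrete error: your gadget $H_{k,k}$ does not have representation number $k$. By the paper's own Theorems~\ref{crown1} and~\ref{crown2}, the representation number of $H_{n,n}$ is $\lceil n/2\rceil$ for $n\ge 5$, so $H_{k,k}$ gives only $\lceil k/2\rceil$. Take instead $H_{2k-1,2k-1}$: for $k\ge 3$ we have $2k-1\ge 5$, so its representation number is exactly $\lceil (2k-1)/2\rceil = k$, and the range bookkeeping survives, since the padded instance has $n\ge 4k-2$ vertices and hence $k\le \lceil n/2\rceil$. (Also, your identity $\mathrm{repr}(G_1\sqcup G_2)=\max\{\mathrm{repr}(G_1),\mathrm{repr}(G_2)\}$ fails when both sides would be $1$ --- two disjoint complete graphs are not $1$-representable --- but it is valid whenever the maximum is at least $2$, which is the only case you use.) The genuine gap is the step you yourself flag as the ``main obstacle'': you need the instances produced by the recognition-hardness reduction to have representation number at most $3$ whenever they are word-representable. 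This cannot be extracted from Corollary~\ref{npc} as a black box --- that statement gives only hardness of word-representability --- it is a structural property of the specific construction in \cite{halldorsson2016semi}, and verifying it requires opening up that construction (there, representable hard instances are in fact $3$-representable). Without this, your reduction decides word-representability rather than $k$-representability, so the proof is incomplete at precisely the point where all of the difficulty is concentrated; everything else in your proposal is, as you say, routine bookkeeping.
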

A crown graph $H_{n,n}$ is obtained from the complete bipartite graph $K_{n,n}$ by removing a perfect matching. The following theorems show the representation number of a crown graph.
\begin{theorem}(\textit{\cite{glen2018representation}, Theorem 5.})\label{crown1}
For $n \geq 1$, the representation number of a crown graph $H_{n,n}$ is at least $\lceil n/2 \rceil$.
\end{theorem}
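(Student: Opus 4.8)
The plan is to work within the $k$-uniform framework guaranteed by Theorem \ref{krep}: write the crown graph as $H_{n,n}$ with parts $A=\{a_1,\dots,a_n\}$ and $B=\{b_1,\dots,b_n\}$, where $a_i\sim b_j$ exactly when $i\neq j$, while $A$ and $B$ are independent sets and $a_i\nsim b_i$ for every $i$. It suffices to show that every $k$-uniform word $w$ representing $H_{n,n}$ satisfies $k\ge\lceil n/2\rceil$; the representation number is then at least this value. I would analyse $w$ through its occurrence-permutations $p_1(w),\dots,p_k(w)$ of Definition \ref{def1}, and in particular through $\pi(w)$ and $\sigma(w)$.

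The first key step is a rigidity statement for edges. If $x\sim y$, then $x$ and $y$ alternate, and since each occurs exactly $k$ times the restriction $w_{\{x,y\}}$ must equal $(xy)^k$ or $(yx)^k$; consequently the $l$-th occurrences $x_l,y_l$ keep one fixed relative order simultaneously for every $l$. Hence every edge pair preserves a single relative order throughout all of $p_1(w),\dots,p_k(w)$, and in particular agrees in $\pi(w)$ and $\sigma(w)$. Applying this to the $n(n-1)$ edges $a_ib_j$ with $i\neq j$ pins the position of each $a_i$ relative to every $b_j$ ($j\neq i$), and of each $b_i$ relative to every $a_j$ ($j\neq i$), in every layer. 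Thus the only freedom left to the pair $\{a_i,b_i\}$ across layers is its internal order, together with reorderings internal to $A$ and to $B$.

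The second step is to extract, for each of the $n$ matching non-edges $\{a_i,b_i\}$, a witness of non-alternation inside $w$: since $a_i\nsim b_i$, the word $w_{\{a_i,b_i\}}$ contains a factor $a_ia_i$ or $b_ib_i$, which can be located between two consecutive occurrence-layers. I would then bound how many such witnesses can be hosted at a single layer transition, using the rigidity above: because the edge-forced orders leave each $a_i$ and $b_i$ almost completely fixed among the opposite side, two matching witnesses can coexist at one transition only in a restricted configuration, and one argues that at most two of them are resolvable per layer. Summing over the $k$ layers then yields $n\le 2k$, equivalently $k\ge\lceil n/2\rceil$.

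The main obstacle I anticipate is precisely this last counting. A non-edge in a uniform word need not manifest as a reversal of relative order between two layers (the pattern $a_ia_ib_ib_i$ is non-alternating while keeping $a_i$ before $b_i$ in both layers), so a naive count of inversions between consecutive permutations undercounts the witnesses and must be replaced by an argument that tracks both reversals and cross-layer doublings. Controlling these two mechanisms simultaneously, and establishing the sharp bound of two matching non-edges resolvable per layer, is the crux; the independence of $A$ and of $B$, the only relations left untouched by the edge-rigidity, is what ultimately limits the resolutions and produces the factor of two in $\lceil n/2\rceil$.
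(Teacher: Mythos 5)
Note first that the paper you are working from does not prove this theorem at all: it is quoted from the cited reference \cite{glen2018representation}, so your proposal has to stand entirely on its own. Its framework is fine as far as it goes: the reduction to $k$-uniform words, the rigidity fact that $x\sim y$ forces $w_{\{x,y\}}=(xy)^k$ or $(yx)^k$ (hence a fixed relative order of $x_l,y_l$ in every layer), and the extraction of a witness factor $a_ia_i$ or $b_ib_i$ for each matching non-edge are all correct. But the entire content of the theorem sits in the step you only assert, namely that ``at most two of them are resolvable per layer,'' and you supply no argument for it. Worse, under its natural reading (at most two matching pairs can have their witness between the $l$-th and $(l+1)$-th occurrences) the claim is \emph{false}. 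The $2$-uniform word
$w=b_3b_2a_1a_3b_2b_1a_3a_2b_1b_3a_2a_1$
represents $H_{3,3}$ (one checks all fifteen pairs: every $a_ib_j$ with $i\neq j$ alternates, all other pairs do not), yet
$w_{\{a_1,b_1\}}=a_1b_1b_1a_1$, $w_{\{a_2,b_2\}}=b_2b_2a_2a_2$, $w_{\{a_3,b_3\}}=b_3a_3a_3b_3$,
so all three matching witnesses sit between the first and second occurrence layers --- three witnesses at the unique transition, and your summation $n\le 2(\text{witnesses per transition})\cdot(\text{transitions})$ would give $3\le 2$.

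The gap is not repairable within the ingredients you invoke, because you derive the per-layer bound from edge rigidity alone (``because the edge-forced orders leave each $a_i$ and $b_i$ almost completely fixed among the opposite side\dots''). Consider the cocktail party graph, i.e.\ $K_{2n}$ minus the same perfect matching: it has exactly the same edges $a_ib_j$ ($i\neq j$) and the same matching non-edges $a_ib_i$, and it is a circle graph for every $n$ (take $n$ pairs of parallel chords, the pairs mutually crossing). Hence for every $n$ there is a $2$-uniform word in which all pairs $a_ib_j$, $i\neq j$, alternate with full rigidity, while all $n$ matching pairs fail to alternate --- necessarily with all $n$ witnesses crammed between the first and second occurrences. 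So no argument using only the side-to-side edges and the matching non-edges can ever conclude $k\ge\lceil n/2\rceil$; it cannot even rule out $k=2$. Any correct proof must make essential, quantitative use of the non-alternation \emph{within} $A$ and within $B$ (in your counterexample word above it is precisely the same-side pairs that distinguish the crown from the cocktail party graph). Your last sentence gestures at this, but the proposal contains no mechanism that converts same-side independence into a bound, so the decisive idea of the proof is missing.
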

\begin{theorem}(\textit{\cite{glen2018representation}, Theorem 7.})\label{crown2}
If $n \geq 5$, then the crown graph $H_{n,n}$ is $\lceil n/2 \rceil$-representable.
\end{theorem}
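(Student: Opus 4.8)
The statement asks only for the upper bound, so the plan is to exhibit one explicit $\lceil n/2\rceil$-uniform word $w$ representing $H_{n,n}$; combined with Theorem~\ref{crown1} this would in fact pin the representation number at exactly $\lceil n/2\rceil$. Write the two colour classes as $A=\{a_1,\dots,a_n\}$ and $B=\{b_1,\dots,b_n\}$ with $a_i\sim b_j$ iff $i\ne j$, and set $k=\lceil n/2\rceil$. A first observation that guides the whole construction is that the obvious idea --- representing $H_{n,n}$ by a concatenation of $k$ permutations --- cannot succeed: the transitive orientation of $H_{n,n}$ is (up to duality) the standard example poset, whose order dimension is $n$, so any permutational representation needs $n$ permutations. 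Hence $w$ must be genuinely non-permutational, realising the non-adjacencies through repeated-letter factors (``doubles'') rather than through order reversals across clean permutation boundaries; this is exactly what lets the height drop from $n$ to $\lceil n/2\rceil$. I would build $w$ by cases on the parity of $n$ (the odd case carrying one extra slot), which is presumably also where the hypothesis $n\ge 5$ enters, the few remaining values being checked separately.

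With the word in hand, the verification splits into four families of two-letter subwords, tested against the alternation criterion recalled in Section~\ref{sc2} (two letters fail to alternate exactly when $w_{\{x,y\}}$ contains one of $xxy,\,yxx,\,xyy,\,yyx$): (i) each cross pair $\{a_i,b_j\}$ with $i\ne j$ must alternate; (ii) each matching pair $\{a_i,b_i\}$ must not; (iii) each within-class pair $\{a_i,a_j\}$ must not; and (iv) each $\{b_i,b_j\}$ must not. Condition (i) is the rigid one: it forces that between any two consecutive occurrences of $a_i$ every letter $b_j$ with $j\ne i$ appears exactly once, so the $B$-content of the gaps of $a_i$ is essentially prescribed, and symmetrically for the $A$-content of the gaps of $b_j$. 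The design freedom then lies in where to place $b_i$ relative to $a_i$ (to realise (ii)) and in arranging the $A$-letters and $B$-letters among themselves so that each same-class pair acquires a double at least once.

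The crux is reconciling (iii) and (iv) with (i) inside a budget of only $\lceil n/2\rceil$ occurrences: there are $2\binom{n}{2}+n$ non-adjacent pairs that each need a doubling, while none of the $n(n-1)$ cross alternations may be disturbed. The efficiency that makes $\lceil n/2\rceil$ suffice comes from \emph{sharing} doubles --- a single adjacency $a_ia_i$ (or a single misplaced occurrence) must simultaneously destroy alternation of $a_i$ with many $a_j$ --- and I would encode this with a cyclic/round-robin schedule on the indices, so that over the $k$ periods every same-class pair is doubled at least once while all cross orders stay monotone. Proving that one such schedule meets all four families at once, and that $k=\lceil n/2\rceil$ periods are exactly enough, is the main obstacle; once the schedule is fixed, the conditions (i)--(iv) reduce to a finite, mechanical inspection of the induced subwords. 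If convenient, Proposition~\ref{pr1} (cyclic-shift invariance) and Observation~\ref{pw} (prepending the initial permutation) can be used to normalise $w$ without affecting the represented graph.
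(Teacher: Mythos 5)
Your proposal sets up the right framework but it is a plan, not a proof: the entire content of the theorem is the existence of a concrete $\lceil n/2\rceil$-uniform word representing $H_{n,n}$, and you never exhibit one. Your guiding observations are sound --- a purely permutational representation does need $n$ permutations, since $H_{n,n}$ is the comparability graph of the standard example poset of dimension $n$; the verification does reduce to your four families (i)--(iv); and the efficiency does come from making a single repetition destroy many same-class alternations at once. This agrees in outline with the actual argument in \cite{glen2018representation} (note that the present paper does not reprove this statement; it only cites it), which consists precisely of writing down an explicit $\lceil n/2\rceil$-uniform word and checking the alternation conditions pair by pair. But your ``cyclic/round-robin schedule'' is never specified, and you yourself flag its existence as ``the main obstacle''. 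Since the existence of such a schedule within a budget of $\lceil n/2\rceil$ occurrences per letter \emph{is} the theorem, nothing has actually been proved.

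Your treatment of the hypothesis $n\ge 5$ is also a warning sign rather than a loose end. You guess it enters through parity bookkeeping plus finitely many small cases ``checked separately''. In fact the statement is false at $n=4$: $H_{4,4}$ is the three-dimensional cube, which is not $2$-representable --- its representation number is $3>2=\lceil 4/2\rceil$, as also shown in \cite{glen2018representation} --- so only the lower bound of Theorem~\ref{crown1} survives there. Consequently, any construction that works uniformly in $n$ with a mere odd/even case split would prove a false statement for $n=4$; a correct schedule must use $n\ge 5$ in an essential structural way, and your sketch gives no indication of where or how. Until the word is written down and conditions (i)--(iv) are verified for it, including seeing exactly why five indices are needed for the construction to start, the attempt has a gap located exactly at the crux of the theorem.
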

In this paper, the notation $w=a_1a_2\cdots a_n$ indicates that the word $w$ contains the factors $a_1,a_2,\ldots,a_n$, where each $a_i$ is a possibly empty word. A word $w\in \Sigma^*$ contains a square if it can be represented by $w=s_1XXs_2$ where $X\in \Sigma^+$.

\section{$p$-complete square-free  word-representation}\label{sc3}
The formal definition of $p$-complete square-free word-representable graphs is described below. All graphs considered in this paper are undirected graphs.

\begin{dnt}
Let $w \in \Sigma^{*}$ be a word and let $p$ be an integer such that $1 \le p \leq \left\lceil \frac{|w|}{2} \right\rceil +1$. The word $w$ is said to contain a \emph{$p$-complete square} if there exists a subset
$S \subseteq \Sigma$ such that the restricted word $w_S$ contains a square $XX$,
where $X \in S^{+}$ and $|X| \ge p$.
 \end{dnt}
 \begin{dnt}
Let $w \in \Sigma^{*}$ be a word and let $p$ be an integer such that $1 \le p \leq \left\lceil \frac{|w|}{2} \right\rceil +1$.
The word $w$ is called \emph{$p$-complete square-free} if there exists no subset
$S \subseteq \Sigma$ such that $w_S$ contains a $p$-complete square.
\end{dnt}
\begin{exm}
The word $w_1 = 125783462145673818725346$ is not a $3$-complete square-free word, since for the subset $S=\{2,5,7\}$, the restricted word $w_{S}=257257725$ contains a square $XX$, where $X=257$ and $|X|=3$. Moreover, $w_1$ is not a $4$-complete square-free word, since for
$S=\{2,5,7,8\}$, the restricted word $w_{S}=257825788725$ contains a square $XX$ with $X=2578$ and $|X|=4$. However, for the word $w_2 = 14213243$, there exists no subset $S$ such that the restricted word $w_S$ contains a square of the form $XX$, where $X \in \{1,2,3,4\}^{+}$ and $|X|=3$. Therefore, $w_2$ is a $3$-complete square-free word.
\end{exm}
Now, using this concept of $p$-complete square-free word, we define $p$-complete square-free word-representable graphs.
\begin{dnt}\label{def2}
 Suppose the word $w$ is a word-representant of the word-representable graph $G(V,E)$. If $w$ is a $p$-complete square-free word where $1\leq p\leq \left\lceil \frac{|w|}{2} \right\rceil$, then the graph $G$ is called a \textit{$p$-complete square-free word-representable graph} and the word $w$ is called a \textit{$p$-complete square-free word-representation} of the graph $G$.
\end{dnt}
If a word $w$ representing a graph $G(V,E)$ contains a $p$-complete square $XX$, where $X\in V^{+}$ and $p=\left\lceil \frac{|w|}{2} \right\rceil$, then $w$ must be of the form $XX$. By Theorems~\ref{tm2} and~\ref{tm3}, this square can be removed, and the resulting word still represents the graph $G$. Therefore, we do not define $p$-complete square-free word-representable graphs for the case $p=\left\lceil \frac{|w|}{2} \right\rceil$.

The graph shown in Figure \ref{fig1} can be represented with the $2$-uniform word $23123414$ and the non-uniform word $23414$. We can observe that the $2$-uniform word is $4$-complete square-free (where $w_{\{1,2,3\}}=231231$ is a square), while the non-uniform word is a $3$-complete square-free word. Additionally, $2312341$ is another non-uniform word that represents this graph; however, this word is 4-complete square-free (where $w_{\{1,2,3\}} = 231231$ is a square).

\begin{figure}[h]
\begin{center}
\begin{tikzpicture}[node distance=1cm,auto,main node/.style={circle,draw,inner sep=1pt,minimum size=5pt}]

\node[main node] (1) {1};
\node (2) [below right of=1] {};
\node (3) [ below left of=1] {};
\node[main node] (4) [below right of=2] {3};
\node[main node] (5) [below left of=3] {2};
\node[main node] (6) [right of=1] {4};

\path
(1) edge (4)
(1) edge (4)
(1) edge (5)
(1) edge (5)
(1) edge (6)
(4) edge (5);

\end{tikzpicture}

\caption{\label{fig1} Example of a graph with different $p$-complete square-free  word}
\end{center}
\end{figure}

According to Theorem \ref{krep}, every word-representable graph has a $k$-uniform word-representant. Based on this uniform word-representation, we can explore the possible $p$-complete square-free word-representations of word-representable graphs, since we know the exact number of times each letter can occur in that uniform word. However, the minimum word length for many word-representable graphs remains unknown, preventing us from determining the exact occurrence of letters in non-uniform words. Therefore, our focus is primarily on uniform words. We specifically define the notion of a $p$-complete square-free uniform word in the context of uniform words. Subsequently, we characterise $k$-uniform word-representable graphs for $k\leq 3$, in relation to $p$-complete square-free words. 
\begin{dnt}
    A word $w$ is called a $p$-complete square-free uniform word if $w$ is uniform and $p$-complete square-free.
\end{dnt}
 
\begin{dnt}\label{def3}
 Suppose the word $w$ is a uniform word-representant of the word-representable graph $G(V,E)$. If $w$ is a $p$-complete square-free uniform word where $1\le p\leq\left\lceil \frac{|w|}{2} \right\rceil$, then the graph $G$ is called a \textit{$p$-complete square-free uniform word-representable graph} and the word $w$ is called a \textit{$p$-complete square-free uniform word-representation} of the graph $G$.
\end{dnt}

\begin{dnt}
    The minimum $p$ such that a graph is $p$-complete square-free word-representable is called the graph's \textit{complete square-free representation number}. Similarly, the minimum $p$ such that a graph is $p$-complete square-free uniform word-representable is called the graph's \textit{complete square-free uniform representation number}.
\end{dnt}

Since the class of word-representable graphs is hereditary, we investigate whether the class of $p$-complete square-free word-representable graphs is hereditary as well. We provide an affirmative answer in the following lemma. 
\begin{lemma}\label{lm10}
    The class of $p$-complete square-free word-representable graphs is hereditary. 
\end{lemma}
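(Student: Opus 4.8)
The plan is to show that the property of having a $p$-complete square-free word-representation is preserved under taking induced subgraphs, since hereditary means closed under induced subgraphs. Let $G(V,E)$ be a $p$-complete square-free word-representable graph, and let $w$ be a $p$-complete square-free word-representation of $G$. Let $H$ be an induced subgraph of $G$ on vertex set $U \subseteq V$. The natural candidate for a representation of $H$ is the restricted word $w_U$, obtained from $w$ by deleting all letters not in $U$.

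First I would verify that $w_U$ represents $H$. This is the standard fact that word-representability is hereditary: for any two letters $x,y \in U$, the pair $(w_U)_{\{x,y\}}$ equals $w_{\{x,y\}}$ since deleting letters outside $U$ and then restricting to $\{x,y\} \subseteq U$ gives the same result as restricting $w$ directly to $\{x,y\}$. Hence $x$ and $y$ alternate in $w_U$ if and only if they alternate in $w$, which happens if and only if $x \sim y$ in $G$, which (because $H$ is induced) happens if and only if $x \sim y$ in $H$. So $w_U$ is a word-representation of $H$.

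Next I would show that $w_U$ is itself $p$-complete square-free. Suppose for contradiction that it is not, i.e. there exists a subset $S \subseteq U$ such that $(w_U)_S$ contains a square $XX$ with $X \in S^{+}$ and $|X| \ge p$. The key observation is that restriction is transitive: since $S \subseteq U \subseteq V$, we have $(w_U)_S = w_S$, because deleting all letters outside $U$ and then all letters outside $S$ is the same as deleting all letters outside $S$ in one step. Therefore $w_S$ contains the same square $XX$ with $|X| \ge p$, so $S \subseteq V$ witnesses that $w$ contains a $p$-complete square, contradicting the assumption that $w$ is $p$-complete square-free. Hence no such $S$ exists and $w_U$ is $p$-complete square-free, so $H$ is $p$-complete square-free word-representable.

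The argument is essentially a two-line consequence of the transitivity of the restriction operation $w \mapsto w_S$, so there is no serious obstacle. The only point requiring a moment of care is the boundary condition in Definition~\ref{def2}: the definition requires $1 \le p \le \lceil |w|/2\rceil$, and since $|w_U| \le |w|$, one should note that the value of $p$ remains admissible for $H$ (the condition $|X| \ge p$ is a fixed numerical threshold that does not change under restriction, and a shorter word only makes squares of length $\ge p$ harder, not easier, to form). I would conclude by remarking that the same proof applies verbatim to the uniform setting, giving that the class of $p$-complete square-free \emph{uniform} word-representable graphs is hereditary as well, provided the restricted word $w_U$ remains uniform — which it does, since deleting letters outside $U$ preserves the property that each surviving letter occurs exactly $k$ times.
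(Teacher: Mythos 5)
Your proof is correct and takes essentially the same approach as the paper: restrict the representing word $w$ to the vertex set of the induced subgraph and observe that, by transitivity of the restriction operation, any $p$-complete square in the restricted word $w_U$ would already be a $p$-complete square in $w$. The paper states this as a brief proof by contradiction and omits the details you spell out (the verification that $w_U$ represents the induced subgraph, the identity $(w_U)_S = w_S$, and the boundary condition on $p$), but the underlying argument is identical.
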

\begin{proof}
    Suppose it is not a hereditary class. Then there exists an induced subgraph $G'(V',E')$ of the $p$-complete square-free uniform word-representable graph $G(V,E)$, such that $G'$ does not have a $p$-complete square-free uniform word-representation. However, if $w$ is a $p$-complete square-free uniform word describing $G$, then $w_{V'}$ is a $p$-complete square-free uniform word that represents $G'$, which is a contradiction.
\end{proof}

According to the definition of $p$-complete square-free uniform words, if a word-representable graph $G$ has a complete square-free uniform representation number $p$, then $G$ is also $(p+1)$-complete square-free uniform word-representable.

According to Theorems \ref{tm2} and \ref{tm3}, every word-representable graph has a square-free representation except an empty graph of two vertices. This provides an upper bound on the size of a subset $S\subseteq V$ for a word-representable graph $G(V,E)$, such that $w_{S}$ contains a $p-1$-complete square, where $w$ is a $p$-complete square-free uniform representation of the graph $G$.
\begin{obs}
    Suppose $w$ is a $p$-complete square-free uniform word-representation of the word-represe-ntable graph $G(V,E)$. Then there exists a subset $S\subseteq V$ and $|S|\leq p-1$ such that $w_{S}$ contains a square $XX$, where $X\in S^+$ and $|X|=p-1$.
\end{obs}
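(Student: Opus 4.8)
The plan is to read the hypothesis as a two-sided constraint on $w$: being a $p$-complete square-free uniform representation that realises the complete square-free uniform representation number $p$ of $G$, the word $w$ avoids every $p$-complete square but cannot avoid $(p-1)$-complete squares. The first thing I would make explicit is exactly this minimality, since it is the load-bearing part of the statement. If $w$ were itself $(p-1)$-complete square-free, then, being a uniform word representing $G$, it would witness that $G$ is $(p-1)$-complete square-free uniformly word-representable, contradicting the choice of $p$ as the least such value. Hence $w$ is $p$-complete square-free but not $(p-1)$-complete square-free. For non-minimal $p$ the conclusion simply fails (e.g.\ a $2$-complete square-free word is vacuously $p$-complete square-free for all larger $p$ yet contains no square with $|X|=p-1$), so this step cannot be skipped.

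Next I would extract a square of the exact target length by a squeeze. Because $w$ is not $(p-1)$-complete square-free, the definition produces a subset $T\subseteq V$ such that $w_T$ contains a square $YY$ with $Y\in T^{+}$ and $|Y|\ge p-1$. Because $w$ is $p$-complete square-free, no restriction of $w$ contains a square $XX$ with $|X|\ge p$; applying this to $w_T$ forces $|Y|\le p-1$. The two bounds give $|Y|=p-1$, so the required square length is pinned down before any further work.

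Finally I would shrink the witnessing alphabet. Set $S$ to be the set of distinct letters occurring in $Y$, so that $|S|\le |Y|=p-1$ immediately. The point to verify is that passing from $w_T$ to $w_S=(w_T)_S$ (legitimate since $S\subseteq T$) keeps $YY$ intact: the block $YY$ is a factor of $w_T$ built entirely from letters of $S$, so deleting the letters of $T\setminus S$ removes nothing from this contiguous block and leaves it as a factor of $w_S$. Therefore $w_S$ contains the square $YY$ with $Y\in S^{+}$, $|Y|=p-1$ and $|S|\le p-1$, which is the claim (taking $X=Y$).

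The main obstacle is conceptual rather than computational: it lies in making the role of minimality precise and recognising that the statement is true only at the threshold value $p$. Once that is in place, both the squeeze and the alphabet-restriction step are routine, the latter resting only on the elementary fact that deleting letters lying outside a factor preserves that factor.
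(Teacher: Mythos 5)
Your proof is correct, and its decisive step---replacing the witnessing subset by the set of distinct letters actually occurring in the square, which survives restriction because the square is a contiguous factor built only from those letters---is exactly the core of the paper's own proof. What differs is everything around that step, and the differences are substantive. The paper argues by contradiction: it supposes that every subset witnessing a $(p-1)$-complete square has size greater than $p-1$, writes such a witness as $w_S=uXXv$ with $|X|=p-1$ and $X$ using $m\le p-1$ distinct letters, and restricts to those letters to contradict the supposition. In doing so it silently assumes both that a witnessing subset exists at all and that the square may be taken of length exactly $p-1$ rather than merely at least $p-1$; neither is justified there. Your two preliminary steps supply precisely these missing facts: minimality of $p$ shows that $w$ cannot be $(p-1)$-complete square-free, so a square $YY$ with $|Y|\ge p-1$ exists in some restriction, and the squeeze against $p$-complete square-freeness forces $|Y|=p-1$. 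You are also right that minimality must be read into the hypothesis: as literally stated the observation is false for non-minimal $p$ (for instance, the word $12\cdots n\,n\cdots 21$ of Corollary~\ref{cor1} is $p$-complete square-free for every $2\le p\le n$, yet no restriction of it contains a square of length exactly $p-1$ once $p\ge 3$); the paper states this hypothesis explicitly in the analogous Lemma~\ref{lmn1} but omits it here, and its proof inherits the gap. So: same key combinatorial step, but your direct version patches two genuine holes in the paper's argument.
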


\begin{proof}
    Let $w$ be a $p$-complete square-free uniform word that represents the graph $G$. Suppose, for any subset $S\subseteq V$, where the word $w$ contains a $(p-1)$-complete square, the size of $S$ is more than $p-1$. We assume that the subset $S=\{a_1,a_2,\ldots, a_l\}$, where $p\leq l\leq |V|$. Without loss of generality, suppose the word $w_{S}=uXXv$, where $X=a_1\cdots a_m$, $|X|=p-1$ and $1\leq m\leq p-1$. Therefore, for the subset $S'=\{a_1,a_2,\ldots, a_m\}$, the word $w_{S'}=u'XXv'$ contains a square $XX$, where $|X|\leq p-1$. However, this contradicts our assumption.    
\end{proof}

Suppose $ G(V,E) $ is a word-representable graph with a representation number of $ k $. Let $ \mathcal{W} $ be the set containing all $ k $-uniform words that represent the graph $ G $. It is not necessarily true that if there exists a word $ w \in \mathcal{W} $ that is $ p $-complete square-free, then all words $ w' \in \mathcal{W} $ will also be $ p $-complete square-free. In the paper \cite{glen2018representation}, it was shown that the representation number of the crown graph $ H_{n,n} $ is $ \lceil n/2 \rceil $ for $ n \geq 5 $. The authors also provided a word representation for the $ H_{n,n} $ graph. According to that representation, the word $w = 1234'43'2'1'1243'34'2'1'1342'24'3'1'2341'14'3'2'$ represents the $ H_{4,4} $ graph, where the vertices are partitioned into two independent sets $ X = \{1,2,3,4\} $ and $ Y = \{1',2',3',4'\} $. It can be easily verified that $ w $ is a $ 5 $-complete square-free word. Now, if we exchange the positions of the last occurrences of $ 3' $ and $ 4' $ in the word $ w $, we can form a new word $
w' = 1234'43'2'1'1243'34'2'1'1342'24'3'1'2341'13'4'2'$. It is clear that $ w' $ still represents the $ H_{4,4} $ graph. However, this new word $ w' $ is a $ 7 $-complete square-free word. This is because the $w_{\{1,4',3'\}} = 14'3'13'4'14'3'13'4'$ contains a square $XX$, where $X=14'3'13'4'$ and $|X|= 6 $.

From this example, we can also conclude that for any word $ w \in \mathcal{W} $, if we consider a subset $ S \subseteq V $ and the word $ w_{S} $ contains a square $ XX $ where $X\in S^+$ and $ |X| = p - 1 $, the size of $ S $ may not necessarily be $ p - 1 $.

We attempt to determine whether removing a vertex from a $p$-complete square-free uniform word-representable graph transforms it into a $(p-1)$-complete square-free uniform word-representable graph. 
\begin{dnt}
   An apex vertex is a vertex that is adjacent to all other vertices in the graph $G$. It is also known as a universal vertex. 
\end{dnt} 
We now study the role of apex vertices in $p$-complete square-free uniform representations. We discovered that in a $p$-complete square-free uniform word-representable graph $G(V,E)$, if there exists an apex vertex $v$, then $v$ must be in every subset $S$ of $V$ where $w_{S}$ contains a square $XX$, where $X\in S^+$ and $|X|=p-1$. We prove this lemma below.  
\begin{lemma}\label{lmn1}
    Let $G(V,E)$ be a word-representable graph and $v\in V$ be a vertex with degree $n-1$, where $n=|V|$. If $w$ is a $p$-complete square-free uniform word representing $G$, and $p$ is the complete square-free uniform representation number, then for every $S \subseteq V$ such that $w_{S}$ contains some square $XX$, where $X\in S^+$ and $|X|=(p-1)$, $v\in S$.
\end{lemma}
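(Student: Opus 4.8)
The plan is to argue by contradiction: assume that $v$ has degree $n-1$ (so $v$ is an apex vertex), yet there exists a subset $S \subseteq V$ with $v \notin S$ such that $w_S$ contains a square $XX$ with $X \in S^+$ and $|X| = p-1$. The key observation I would exploit is that since $v$ is adjacent to every other vertex, $v$ alternates with each letter of $S$ in $w$; because $w$ is $k$-uniform, this means that when we restrict to $S \cup \{v\}$, the occurrences of $v$ are threaded through the word in a regular alternating fashion relative to the block structure of $w_S$.

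First I would write $w_S = u\,X\,X\,t$ where $X = a_1 a_2 \cdots a_{p-1}$ is a word over $S$ of length $p-1$. The goal is to produce a square in $w_{S \cup \{v\}}$ (or in some restriction to a set still avoiding the bound $p$ — more precisely, a $p$-complete square) by inserting the apex vertex $v$, contradicting the assumption that $w$ is $p$-complete square-free. Concretely, I would track where the copies of $v$ fall among the two consecutive blocks $X\,X$. Since $v$ alternates with every $a_i$, between any two consecutive occurrences of a given letter there is exactly one $v$, and the alternation forces the $v$'s to interleave the two copies of $X$ in matching positions. The heart of the argument is to show that the restricted word $w_{\{v\} \cup \{a_1,\dots,a_{p-1}\}}$ then contains a square $YY$ with $|Y| = p$, where $Y$ is obtained from $X$ by inserting a copy of $v$ in the corresponding slot of each block. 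That would exhibit a $p$-complete square, contradicting $p$-complete square-freeness.

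The main obstacle I anticipate is verifying that the copies of $v$ actually line up to make the two augmented blocks $Y$ and $Y$ \emph{identical} as words, rather than merely of equal length. Alternation of $v$ with each $a_i$ controls the relative order of $v$ and that single letter, but I must argue that the position of $v$ relative to the \emph{whole} ordered block $X$ is the same in the first copy as in the second. This requires using the alternation of $v$ against all letters of $S$ simultaneously together with $k$-uniformity to pin down exactly one $v$ per block and in the same relative slot; the regular ``one $v$ between consecutive equal letters'' pattern from alternation, combined with the fact that the two blocks $X$ are literally equal, should force the insertion point to coincide.

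A subtlety to handle carefully is boundary behaviour: the occurrences of $v$ adjacent to the square $XX$ (those falling in $u$ or $t$, or straddling the block boundaries) must be shown not to destroy the square, and I would use Proposition~\ref{pr1} (cyclic rotation) if needed to normalise the position of the square within $w$, so that the relevant copies of $v$ sit cleanly inside the two blocks. Since $p$ is assumed to be the complete square-free uniform representation number, $w$ is genuinely $p$-complete square-free, so exhibiting any $p$-complete square is a contradiction, completing the proof that $v \in S$ for every such $S$.
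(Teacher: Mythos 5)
Your overall strategy is the same as the paper's: assume $v\notin S$, thread the apex vertex through the square $XX$ using alternation plus uniformity, and exhibit a square of length at least $p$ in $w_{S\cup\{v\}}$. The genuine gap is that your central claim --- that alternation forces the copies of $v$ into \emph{matching} slots of the two blocks, so that each block gains one $v$ in the same relative position --- is only true when the letters of $X$ are pairwise distinct. Nothing in the statement guarantees this: $X\in S^{+}$ has length $p-1$, but it may repeat letters. The paper's proof is split into exactly these two cases (Case~1: the $p-1$ letters of $X$ are distinct; Case~2: some letter occurs twice in $X$), and the second case requires a separate argument. Concretely, take $X=aba$, so $XX=abaaba$, and label the five internal gaps $a\,\_1\,b\,\_2\,a\,\_3\,a\,\_4\,b\,\_5\,a$. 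Alternation of $v$ with $a$ forces exactly one $v$ in $\{\_1,\_2\}$, exactly one in $\{\_3\}$, and exactly one in $\{\_4,\_5\}$, while alternation with $b$ forces exactly one $v$ in $\{\_2,\_3,\_4\}$; the unique solution puts $v$'s in gaps $\_1$, $\_3$, $\_5$, giving the factor $avba\,v\,abva$ in $w_{S\cup\{v\}}$. The two augmented blocks $avba$ and $abva$ are \emph{not} equal, and this factor contains no square of length $p$, so the ``heart of the argument'' as you describe it simply fails here. Ruling out such configurations needs extra input beyond alternation (for instance, a repeated letter in $X$ forces every letter to occur at least four times, which interacts with $p$ through the apex vertex); this is precisely what the paper's Case~2 attempts to supply, and your proposal never sees that this case exists.

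A secondary problem is your fallback of normalising the position of the square via Proposition~\ref{pr1}. Cyclic rotation preserves the represented graph, but it does \emph{not} preserve $p$-complete square-freeness: if the cut point lands inside a restricted square $ZZ$, the two halves end up at opposite ends of the rotated word, so restricted squares can be destroyed (and, symmetrically, created) by rotation. Hence a long square exhibited in a restriction of the rotated word contradicts nothing about $w$ unless you also verify that the square avoids the cut. The boundary situation you are worried about --- the unique $v$ sitting exactly at the junction between the two copies of $X$ --- is better handled without rotation: by uniformity every letter occurs exactly once between consecutive $v$'s, so in the junction case one finds a $v$ immediately before the first copy or immediately after the second, yielding a square $(vX)(vX)$ or $(Xv)(Xv)$ of length $p$ directly. (The paper itself glosses over this point by asserting that the $j$-th insertion slot must coincide in both copies, so on the distinct-letter case your write-up and the paper's are at a comparable level of rigour; the decisive difference is the missing repeated-letter case.)
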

\begin{proof}
Since $v$ is connected to all other vertices, according to Proposition \ref{pr2}, every other vertex must occur exactly once between two instances of $v$. Suppose there exists a set $S\subseteq V$ such that $v\notin S$, and $w_{S}$ contains a square $XX$, where $X\in S^+$ and $|X|=(p-1)$. There are two possible cases: either $X$ contains all letters of $S$, or there exists at least one letter of $S$ that does not appear in $X$. We analyse these cases separately below. 
\\\textbf{Case 1}: Let $\{a_1, a_2, \ldots, a_{p-1}\}$ be the vertices forming an $XX$-square. Without loss of generality, we assume $ X = a_1a_2\cdots a_{p-1} \quad\text{and}\quad w_S = x\, a_1a_2\cdots a_{p-1}\, a_1a_2\cdots a_{p-1}\, y.$
To obtain $w_{S\cup\{v\}}$, the vertex $v$ must occur between the two occurrences of
the subword $a_1a_2\cdots a_{p-1}$. Suppose that $v$ occurs in the $j^{th}$ position,
$1 \le j \le p$, within the subword $a_1a_2\cdots a_{p-1}$. Then, in order to satisfy
the alternation condition, the second occurrence of
$a_1a_2\cdots a_{p-1}$ must also contain $v$ in the same $j^{th}$ position.
Consequently, the word $w_{S\cup\{v\}}$ contains a factor of the form $X_1X_1$,
where $X_1$ is obtained from $a_1a_2\cdots a_{p-1}$ by inserting $v$, and hence
$|X_1| = p$. This creates a square of length $p$, which contradicts our assumption.
\\\textbf{Case 2}: Let $a_1,a_2,\ldots,a_j$ be the vertices that form the square $XX$, where $j<p-1$. Since $|X|=p-1$ and $j<p-1$, at least one vertex must occur twice in $X$. Let $a_i$ be such a vertex. Then the word $w_S$ can be written as
$
w_S = x\, a_1a_2\cdots a_i \cdots a_i \cdots a_j \, a_1a_2\cdots a_i \cdots a_i \cdots a_j \, y.
$
Since $a_i$ is adjacent to $v$, there must be exactly one occurrence of $v$ between any two consecutive occurrences of $a_i$. Hence, in the word $w_{S\cup\{v\}}$, we can place $v$ between the first and second occurrences of $a_i$, and also between the third and fourth occurrences of $a_i$, obtaining
$
w_{S\cup\{v\}} =
x\, a_1a_2\cdots a_i \cdots v \cdots a_i \cdots a_j \,
a_1a_2\cdots a_i \cdots v \cdots a_i \cdots a_j \, y.
$
However, to satisfy the alternation condition, the vertex $v$ must occur between the second and third occurrences of $a_i$. Moreover, between two consecutive occurrences of $v$, each vertex in $\{a_1,a_2,\ldots,a_j\}$ must appear exactly once.  Thus, in the word
$w_{S\cup\{v\}} =
x\, a_1a_2\cdots a_i \cdots v \cdots a_i \cdots a_j \,
a_1a_2\cdots a_i \cdots v \cdots a_i $ $\cdots a_j \, y,
$
there is already one occurrence of $v$ inside each copy of $X$. If we insert an additional occurrence of $v$ into the first copy $a_1a_2\cdots a_i \cdots a_i \cdots a_j$, then the second copy must also contain $v$ in the same position in order to preserve alternation. Consequently, this creates a square of length $|X|+2$, which contradicts our assumption.

Therefore, $\forall S \subseteq V$, such that $w_{S}$ contains square $XX$, $|X|= p-1$, $v\in S$.
\end{proof}

In a $p$-complete square-free uniform word-representable graph $G(V,E)$, the apex vertex is present in every subset $S$ of the vertices that contains a square $XX$, where $X\in S^+$ and $|X|=p-1$. Thus, removing the apex vertex creates a $(p-1)$-complete square-free uniform word-representable graph. We can prove this statement using the theorem below. 

\begin{theorem}
    Let $G(V,E)$ be a word-representable graph and $v\in V$ be a vertex with degree $n-1$ where $n=|V|$. If $G$ is a $p$-complete square-free uniform word-representable graph and $p$ is the complete square-free uniform representation number, then $G'=G\setminus v$ is a $(p-1)$-complete square-free uniform word-representable graph.
\end{theorem}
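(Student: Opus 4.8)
The plan is to take a $p$-complete square-free uniform word-representation $w$ of $G$ and simply delete the apex vertex $v$, then argue that the resulting word already witnesses the $(p-1)$-complete square-freeness of $G'$. I assume throughout that $p \ge 2$, since for $p = 1$ the graph $G$ is complete (a permutation represents it) and a $0$-complete square-free representation falls outside the admissible range $1 \le p \le \lceil |w|/2\rceil$ of Definition~\ref{def3}.

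First I would fix a $p$-complete square-free uniform word $w$ representing $G$ and set $w' = w_{V \setminus \{v\}}$, the word obtained by deleting every occurrence of $v$. Since $w$ is uniform, deleting all copies of a single letter leaves every remaining letter with its original multiplicity, so $w'$ is again uniform; and since $G' = G \setminus v$ is the subgraph of $G$ induced on $V \setminus \{v\}$, the restriction $w'$ represents $G'$ by the same reasoning that gives the hereditary property in Lemma~\ref{lm10}. Thus $w'$ is a uniform word-representant of $G'$, and it remains only to check that it is $(p-1)$-complete square-free.

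The core of the argument is a short proof by contradiction using Lemma~\ref{lmn1}. Suppose $w'$ is not $(p-1)$-complete square-free; then there is a subset $S \subseteq V \setminus \{v\}$ for which $w'_S$ contains a square $XX$ with $|X| \ge p-1$. Because $v \notin S$, deleting $v$ first and then restricting to $S$ produces the same word as restricting $w$ directly to $S$, i.e.\ $w'_S = w_S$. Now $w$ is $p$-complete square-free, so $w_S$ cannot contain a square $XX$ with $|X| \ge p$; hence $|X| = p-1$ exactly. But Lemma~\ref{lmn1} asserts that any $S$ for which $w_S$ contains a square $XX$ with $|X| = p-1$ must contain the apex vertex $v$, contradicting $v \notin S$. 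Therefore no such $S$ exists, $w'$ is $(p-1)$-complete square-free, and $G'$ is $(p-1)$-complete square-free uniform word-representable.

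The step I expect to require the most care is the identification $w'_S = w_S$ together with the reduction from $|X| \ge p-1$ to $|X| = p-1$: the whole argument hinges on ruling out that the deletion of $v$ could \emph{create} a longer square in $w'$, and on confirming that the offending square has length exactly $p-1$ so that Lemma~\ref{lmn1} becomes applicable. Once these two observations are in place, the contradiction with Lemma~\ref{lmn1} is immediate, and no further case analysis (unlike the two-case split inside the proof of Lemma~\ref{lmn1} itself) is needed.
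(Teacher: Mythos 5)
Your proposal is correct and follows essentially the same route as the paper: delete the apex vertex $v$ from a $p$-complete square-free uniform word $w$, observe that the resulting uniform word $w'$ represents $G'=G\setminus v$ and that $w'_S=w_S$ for any $S$ with $v\notin S$, and invoke Lemma~\ref{lmn1} to rule out a square of length $p-1$ in $w_S$. Your explicit reduction from $|X|\ge p-1$ to $|X|=p-1$ (using that $w$ itself is $p$-complete square-free) is a small step the paper leaves implicit, but the argument is otherwise the same.
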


\begin{proof}
Suppose $w=P_{11}vP_{12}P_{21}vP_{22}\cdots P_{k1}vP_{k2}$ represents $G$ where $P_{ij}\in (V\setminus \{v\})^*$ and $1\leq i\leq k$, $1\leq j\leq k$. As $v$ is connected to all other vertices, according to Proposition \ref{pr2}, every other vertex should occur exactly once between two $v$'s. Therefore, every $P_{i2}P_{(i+1)1}$, $1\leq i<k$, contains all the other vertices. Also, according to Proposition \ref{pr1}, $vP_{k2}P_{11}vP_{12}P_{21}vP_{22}$ $\cdots P_{k1}$ represents $G$. So, $P_{k2}P_{11}$ also contains every other vertex exactly once. 

We know that $w'=P_{11}P_{12}P_{21}P_{22}\cdots P_{k1}P_{k2}$ represents the graph $G\setminus v$. Suppose $S\subseteq V(G')$ such that $w'_{S}$ contains a square $XX$, where $X\in V(G')^+$ and $|X|=(p-1)$. Let $w'_{S}=xa_1a_2\cdots a_{p-1}a_1a_2$ $\cdots a_{p-1}y$. So, for $w$, in $w_{S}$, $a_1a_2\cdots a_{p-1}a_1a_2$ $\cdots a_{p-1}$ is a square and $|a_1a_2\cdots a_{p-1}|=p-1$. However, according to Lemma \ref{lmn1}, $v$ should be present in $S$, which is not possible. Therefore, $w'$ does not contain any square $XX$, $|X|=p-1$. Hence, $w'$ is a $(p-1)$-complete square-free representation of $G\setminus v$.
\end{proof}

According to Lemma \ref{lmn1}, in a $p$-complete square-free uniform word-representable graph $G(V,E)$, an apex vertex can appear in any subset $S\subseteq V$ where $w_{S}$ contains a square $XX$, $X\in S^+$ and $|X|=p-1$. Here, $w$ represents the $p$-complete square-free uniform word of $G$. It is straightforward to prove that all $p$-complete square-free uniform words for the graph $G(V,E)$ contain the apex vertex in every subset of $V$, where the word is restricted to that subset containing a square $XX$, where $X\in V^+$ and $|X|=p-1$. Based on this, we introduce the notation of a $p$-complete square vertex and $p$-complete square vertex set as follows.
\begin{dnt}
   Let $G(V, E)$ be a $p$-complete square-free uniform word-representable graph, and let $ \mathbb{W} $ be the set of all $ p $-complete square-free uniform words that represent $ G $. For each vertex $ v \in V $, if for every word $ w_i \in \mathbb{W} $ there exists a subset $ S_i \subseteq V $ such that $ v \in S_i $ and the word $ w_{i_{S_i}} $ contains a square $XX$, where $X\in S_i^+$ and $|X|= p-1 $, then $ v $ is called a $ p $-complete square vertex. 

Furthermore, if there exists a subset $ S \subseteq V$, and every vertex $v\in S$ is a $p$-complete square vertex, then $S$ is called a $ p$-complete square vertex set.
\end{dnt}

\begin{exm}
    For example, according to Lemma \ref{lmn1}, for a $p$-complete square-free uniform word-represent-able graph $G(V,E)$, an apex vertex $v\in V$ presents in every set $S$, where $w_{S}$ contains a square $XX$, $|X|=p-1$. This vertex $v$ is an example of a $p$-complete square vertex.
\end{exm}
Given a $p$-complete square vertex, we present a method for constructing a $(p+1)$-complete square-free uniform word-representable graph from a known $p$-complete square-free uniform word-representable graph. We replace a $p$-complete square vertex with a $K_2$ module as described in the following theorem.

\begin{theorem}
Let the graph $G(V,E)$ be a $p$-complete square-free uniform word-representable graph where $p>2$, and $p$ is the complete square-free uniform representation number. Let the graph $G'(V',E')$ be obtained from $G$ by replacing the vertex $v\in V$ with the module $K_2$. The graph $G'$ is $(p+1)$-complete square-free uniform word-representable if and only if $v$ is a $p$-complete square vertex.
\end{theorem}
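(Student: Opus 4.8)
The plan is to prove both directions of the biconditional, exploiting the structure forced by replacing a vertex with a $K_2$ module. Throughout, I would write $v_1, v_2$ for the two vertices of the $K_2$ module that replace $v$, so that $v_1 \sim v_2$ and $v_1, v_2$ share exactly the neighbourhood $N(v)$ in $G'$. The guiding principle is that, because $\{v_1,v_2\}$ is a module inducing a clique, the two letters $v_1$ and $v_2$ behave almost like a single ``doubled'' copy of the old letter $v$: any uniform word representing $G$ can be lifted to one representing $G'$ by replacing each occurrence of $v$ with the factor $v_1 v_2$ (alternating to preserve the edge $v_1 \sim v_2$, while every neighbour of $v$ still alternates with both $v_1$ and $v_2$, and every non-neighbour still fails to alternate with either). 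This lift is the technical engine for both directions, and I would first verify carefully that it produces a valid uniform word-representation of $G'$, using Theorem~\ref{tm1} to guarantee word-representability and Proposition~\ref{pr2} to control the local occurrence pattern around the module.

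For the forward direction ($v$ a $p$-complete square vertex $\Rightarrow$ $G'$ is $(p+1)$-complete square-free uniform word-representable), I would take a $p$-complete square-free uniform word $w$ for $G$ and form its lift $w'$ for $G'$ by the $v \mapsto v_1 v_2$ substitution. The claim is that $w'$ is $(p+1)$-complete square-free. Suppose not: then some $S' \subseteq V'$ gives $w'_{S'}$ a square $YY$ with $|Y| \ge p+1$. I would analyse how $S'$ meets $\{v_1,v_2\}$. If $S'$ contains neither $v_1$ nor $v_2$, then $w'_{S'} = w_{S'}$ is a square of length $\ge p+1 \ge p$ in $w$, contradicting that $w$ is $p$-complete square-free. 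If $S'$ contains exactly one of $v_1, v_2$, then projecting back (identifying the surviving $v_i$ with $v$) again yields a square of length $\ge p+1$ in some restriction of $w$, another contradiction. The essential case is $\{v_1,v_2\} \subseteq S'$: here I would argue that every occurrence of $v_1$ in $w'_{S'}$ is immediately followed by $v_2$ (inside a lifted block), so the pair can be collapsed to a single letter $v$, turning the square $YY$ of length $\ge p+1$ into a square of length $\ge p$ in $w_{S}$ for $S = (S' \setminus \{v_1,v_2\}) \cup \{v\}$ — again contradicting $p$-complete square-freeness of $w$. The role of the hypothesis that $v$ is a $p$-complete square \emph{vertex} is to ensure the lift behaves uniformly across \emph{all} $p$-complete square-free representations, so that no alternative representation of $G'$ can do better.

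For the contrapositive of the reverse direction ($v$ \emph{not} a $p$-complete square vertex $\Rightarrow$ $G'$ is \emph{not} $(p+1)$-complete square-free uniform word-representable), I would use the fact that if $v$ is not a $p$-complete square vertex, then there is a $p$-complete square-free uniform word $w$ representing $G$ in which $v$ lies \emph{outside} some square $XX$ with $|X| = p-1$, i.e. $w$ witnesses the $p$-completeness via a square avoiding $v$. I would then show that \emph{every} uniform representation $w'$ of $G'$ is forced to contain a $(p+1)$-complete square: the module structure forces $v_1$ and $v_2$ to appear as a tightly coupled pair, and the square $XX$ of length $p-1$ avoiding $v$ in the underlying $G$-structure, once the block around $\{v_1,v_2\}$ is taken into account, extends to a square of length at least $p+1$ on the set $S \cup \{v_1,v_2\}$. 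Here I would invoke the apex-type argument of Lemma~\ref{lmn1} adapted to the clique module: since $v_1, v_2$ are adjacent to every vertex that was adjacent to $v$, any occurrence of such a vertex forces intervening occurrences of both $v_1$ and $v_2$, pumping the square length up by (at least) two.

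The main obstacle I anticipate is the collapsing/lifting correspondence in the $\{v_1,v_2\} \subseteq S'$ case, specifically justifying that the two module letters always occur as an adjacent pair (up to alternation) so that collapsing $v_1 v_2 \mapsto v$ genuinely sends a length-$(p+1)$ square to a length-$\ge p$ square without introducing or destroying squares elsewhere. This requires a precise lemma, via Proposition~\ref{pr2}, that in any uniform representation of $G'$ the letters $v_1, v_2$ are ``interchangeable'' in the sense that between consecutive occurrences of any common neighbour they appear together; controlling the parity and exact length shift ($+2$ versus $+1$) under this collapse is the delicate bookkeeping. I would isolate this as a preliminary claim before proving the theorem proper, so that both directions can cite it cleanly.
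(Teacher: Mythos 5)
Your forward direction matches the paper's: lift a $p$-complete square-free uniform word $w$ of $G$ to $w'$ by substituting the factor $v_1v_2$ (the paper writes $12$) for every occurrence of $v$, invoke Theorem~\ref{tm1} for correctness, and rule out squares of length $\ge p+1$ in restrictions $w'_{S'}$ by cases on $S'\cap\{v_1,v_2\}$, collapsing the pair back to $v$. (Both you and the paper tacitly assume the pair occurs exactly once per period of the offending square; a period containing it $m\ge 2$ times collapses only to length $\ge p+1-m$, which is not immediately a contradiction. This is the bookkeeping you flagged as delicate, and it is equally unaddressed in the paper, so I do not count it against you alone.) But note a warning sign: your forward argument never actually uses the hypothesis that $v$ is a $p$-complete square vertex, and your stated ``role'' for it is not a proof step. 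In the paper that hypothesis is used for a part you omit entirely: showing that $G'$ admits \emph{no} $p$-complete square-free uniform representation, by collapsing any putative such representation $w'$ of $G'$ back to a word $w''$ for $G$ (replace $v_1$ by $v$, delete $v_2$) and using the $p$-complete square vertex property of $v$ to force a square of length $p-1$ through $v$ in $w''$, which lifts to a square of length $p$ in $w'$. This exactness component is what the theorem is really about.

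The genuine gap is your reverse direction, which sets out to prove a false statement. Since a $p$-complete square-free word is automatically $(p+1)$-complete square-free, the ``only if'' cannot be read literally; the paper's proof makes clear the intended claim is that the \emph{complete square-free uniform representation number} of $G'$ equals $p+1$ exactly when $v$ is a $p$-complete square vertex. Accordingly, when $v$ is \emph{not} a $p$-complete square vertex, the paper proves the opposite of what you attempt: it takes a witness $w$ (a $p$-complete square-free uniform representation of $G$ in which every square of length $p-1$ avoids $v$), lifts it, and shows the lifted word is $p$-complete square-free, so the representation number of $G'$ \emph{drops} to at most $p$. That lifted witness is an explicit counterexample to your claim that every uniform representation of $G'$ must contain a $(p+1)$-complete square. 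Your proposed mechanism also fails on its own terms: because the witness square $XX$ avoids $v$, the occurrences of $v_1,v_2$ in the lifted word are not synchronized with the two copies of $X$, so nothing forces the square to extend on $S\cup\{v_1,v_2\}$; and the Lemma~\ref{lmn1}-style pumping you invoke requires the inserted vertices to be adjacent to \emph{every} letter of the square, which holds here only for letters in $N(v)$, not for arbitrary letters of $S$. To repair your write-up you would need to (i) reinterpret the statement as determining the exact complete square-free uniform representation number, (ii) add the paper's ``no $p$-complete square-free representation of $G'$'' argument to the forward direction, and (iii) replace your reverse direction by the lift-the-witness argument showing the number is at most $p$.
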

\begin{proof}
    Suppose for the graph $G(V,E)$, $w=s_1vs_2vs_3\cdots s_{k-1}vs_k$ is a $p$-complete square-free uniform word representation where $v$ is a $p$-complete square vertex. According to Theorem \ref{tm1}, $w'=s_112s_212s_3\cdots s_{k-1}$ $12s_k$, where $1$ and $2$ are the vertices of $K_2$, represents the graph $G'$. Let $S\subseteq V$ such that $w_{S}$ contains a square $XX$, where $X\in S^+$ and $|X|=p-1$. Since, $v$ is a $p$-complete square vertex,  $v\in S$ and $w_{S}=xX_1vX_2X_1vX_2y$, where $XX=X_1vX_2X_1vX_2$ and $|X_1vX_2|=p-1$. Therefore, in $w'_{(S\setminus \{v\})\cup\{1,2\}}$, there exists a square $X_112X_2X_112X_2$ where $|X_112X_2|=p$.
    
    Now, we need to prove that for any subset $S'\subseteq V'$, in the word $w_{S'}$, there does not exist a square $XX$, where $X\in (S')^+$ and $|X|>p$. Suppose for a subset $S'\subseteq V'$, the word $w_{S'}$ contains a square $XX$, where $X\in (S')^+$ and $|X|>p$. We consider the following possible cases.
    \\\textbf{Case 1}: If $1,2\in S'$, we can assume that $w'_{S'}=xX_112X_2X_1v12X_2y$. Then, the word $w'_{S'}$ contain a square $X_112X_2X_112X_2$ and $|X_112X_2|=p+1$. However, $w_{S'\cup \{v\}}=xX_1vX_2X_1vX_2y$ contains a square $X_1vX_2X_1vX_2$ where $|X_1vX_2|=p$, which contradicts our assumption. Similarly, if only $1\in S'$ or $2\in S'$, we can apply the same argument and obtain a contradiction.
    \\\textbf{Case 2}: If, $1,2\notin S'$, then let $w'_{S'}=xXXy$, where $XX$ is a square, $X\in (S')^+$ and $|X|=p+1$. But, $\forall u\in S'$, $u$ is also a vertex of $G$. So, $w_{S'}=xXXy$ contain a square $XX$, $|X|=p+1$. It contradicts our assumption.
    
    Therefore, $w'$ is a $(p+1)$-complete square-free uniform word-representation of the graph $G'$. 
    Now, we need to prove that there does not exist a $p$-complete square-free uniform word-representation of $G'$. Suppose that there exists a word $w'$, which is a $p$-complete square-free uniform word-representation of $G'$. Let $S'\subseteq V'$, such that $w'_{S'}$ contains a square $XX$, where $X\in (S')^+$ and $|X|=(p-1)$. We consider the following possible cases for $S'$.
    \\\textbf{Case 1}:  If $1,2\in S'$, then by replacing every occurrence of $1$ with $v$ and removing every occurrence of $2$ from the word $w'$, we obtain a new word $w''$ for graph $G$. However, for any $S\subseteq V$, where $v$ is an element of $S$, $w''_{S}$ contains a square $XX$, where $X\in S^+$ and $|X|=p-1$. Therefore, $w'_{(S\setminus \{v\})\cup\{1,2\}}$ must contain a square $XX$, where $X\in ((S\setminus \{v\})\cup\{1,2\})^+$ and $|X|=p$, which contradicts our assumption. If either $1$ or $2$ is in $S'$, we can create the same word $w''$ as before and use the same argument to find a contradiction.
    \\\textbf{Case 2}: If $1,2\notin S'$, then by replacing every occurrence of $1$ with $v$ and removing every occurrence of $2$ in the word $w'$, we obtain a word $w''$ for graph $G$. Using the same argument in Case 1, we can find the contradiction.

    Suppose $v$ is not a $p$-complete square vertex. So, there exists a word $w=s_1vs_2\cdots v s_{k-1}vs_k$, which is $p$-complete square-free uniform word-representation of $G$ and $\forall S\subseteq V(G)$ such that $w_S$ contains a square $XX$, where $X\in S^+$, $|X|=p-1$ and $v\notin S$. Then, according to Theorem \ref{tm1}, $w'=s_112s_2\cdots 12 s_{k-1}12s_k$, represent the graph $G'$. Suppose there exists  $S'\subseteq V(G')$, such that $w_S'$ contains a square $XX$, where $X\in (S')^+$ and $|X|=p$. We consider the following cases for $S'$.
    \\\textbf{Case 1}: If ${1,2}\in S'$, then by replacing every occurrence of $1$ with $v$ and removing every occurrence of $2$ in the word $w'$, we obtain the word $w$ for graph $G$. The word $w_{(S'\setminus \{1,2\})\cup\{v\}}$ has a square $XX$, where $X\in ((S'\setminus \{1,2\})\cup\{v\})^+$ and $|X|=p-1$. However, it is not possible because $v$ does not belong to any $S\subseteq V(G)$, such that $S$ contains a square $XX$, where $X\in S^+$ and $|X|=p-1$. If $1\in S'$ or ${2}\in S'$, then from the same process, we can obtain the word $w$ and using the same argument, we can find the contradiction on vertex $v$.
    \\\textbf{Case 2}: If ${1,2}\notin S'$, then $S'\subseteq V$. Therefore, $w_{S'}$ contains a square $XX$, where $X\in (S')^+$ and $|X|=p$. But it is not possible because $w$ is a $p$-complete square-free word.
    \\ Therefore, if $v$ is not a $p$-complete square vertex, then $G'$ is $p$-complete square-free uniform word representable.   
\end{proof}

Next, we derive a forbidden induced subgraph characterisation. We want to find the word-representable graphs, which are $p$-complete square-free uniform word-representable. We found out that $p$-complete square-free uniform word-representable graphs avoid $K_p$ as an induced subgraph. We prove this statement in the following theorem.

\begin{theorem} \label{lm11}
    If $G(V,E)$ is a $p$-complete square-free uniform word-representable graph, where $p\geq 2$, then $G$ forbids $K_p$ as an induced subgraph.  
\end{theorem}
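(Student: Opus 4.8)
The plan is to prove the contrapositive in effect: assume that $G$ is $p$-complete square-free uniform word-representable, and show that $G$ cannot contain $K_p$ as an induced subgraph. Let $w$ be a $p$-complete square-free $k$-uniform word representing $G$. Suppose for contradiction that there exist $p$ vertices $a_1, a_2, \ldots, a_p \in V$ that induce a copy of $K_p$, i.e.\ they are pairwise adjacent. Set $S = \{a_1, \ldots, a_p\}$ and consider the restricted word $w_S$. The key observation is that pairwise adjacency forces strong structure on $w_S$: since every pair $a_i, a_j$ alternates in $w$, they alternate in $w_S$ as well, so every pair of letters in $S$ alternates in $w_S$.

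The main step is a clean lemma from the theory: a word in which every pair of letters alternates must be a concatenation of permutations of $S$. More precisely, if $|S| = p$ and every two letters of $S$ alternate in $w_S$, then $w_S$ has the form $\rho_1 \rho_2 \cdots \rho_m$ where each $\rho_i$ is a permutation of $\{a_1, \ldots, a_p\}$ --- this is exactly the statement that a clique is permutationally representable, and in a $k$-uniform word each $a_i$ appears exactly $k$ times, so $m = k$ and $w_S$ is a concatenation of exactly $k$ permutations of $S$, each of length $p$. I would justify the permutation structure directly: looking at $p_1(w_S)$, the initial permutation gives some order $a_{i_1} a_{i_2} \cdots a_{i_p}$; because each pair alternates, the first copies of all letters must precede all second copies, the second copies precede all third copies, and so on, which partitions $w_S$ into $k$ consecutive blocks, each a permutation of $S$.

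The crux is then a pigeonhole argument on these $k$ permutations. Here I want to produce two equal consecutive permutations, which would immediately give a square $XX$ with $|X| = p$, contradicting $p$-complete square-freeness. If among the $k$ blocks two \emph{consecutive} blocks are identical permutations $\rho$, then $w_S$ contains $\rho\rho$, a square of length $2p$ with $|X| = p \ge p$, contradicting that $w$ is $p$-complete square-free via the subset $S$. So the task reduces to showing two consecutive blocks must coincide. If $k = 1$ there is a single permutation and no square, so I expect the argument to require $k \ge 2$; but since $p$ is the complete square-free uniform representation number and $K_p$ is itself a clique on $p$ vertices needing representation, the uniform representation of $G$ cannot collapse the clique into a single permutation unless $G$ has very small structure. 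The honest obstacle is that consecutive blocks need not be equal in general --- a clique admits many concatenations of distinct permutations --- so I would instead argue that \emph{any} clique on $p$ vertices, regardless of labelling, is forced to produce a $p$-complete square in a uniform word, perhaps by showing that a $k$-uniform concatenation of permutations of a $p$-set, when restricted further to an appropriate sub-subset, always exhibits a repeated consecutive factor of length $\ge p$ in some restriction.

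The main obstacle, and the part I would spend the most care on, is closing the gap between ``$w_S$ is a concatenation of $k$ permutations of a $p$-set'' and ``some restriction yields a square of length $\ge p$.'' The cleanest route is probably to avoid relying on two consecutive blocks being equal and instead invoke the hereditary property (Lemma~\ref{lm10}): the induced subgraph $K_p$ would itself have to be $p$-complete square-free uniform word-representable, so it suffices to prove the statement for $G = K_p$ directly. For $K_p$, every uniform representation is a concatenation of $k \ge 1$ permutations of the $p$ vertices; I then need to show no such word can be $p$-complete square-free, i.e.\ some subset restriction always produces a length-$\ge p$ square. I expect this reduces to a counting or pigeonhole statement about permutations that forces a repeated pattern, and verifying it carefully --- rather than the reduction to $K_p$, which is routine via heredity and the alternation characterisation --- is where the real work lies.
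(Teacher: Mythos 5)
Your setup is correct and coincides with the first step of the paper's proof: restrict $w$ to the clique $S=\{a_1,\ldots,a_p\}$, observe that every pair of letters of $S$ alternates in $w_S$, and decompose $w_S$ into $k$ consecutive blocks, each a permutation of $S$. But the claim you use to justify abandoning this route --- that ``consecutive blocks need not be equal in general --- a clique admits many concatenations of distinct permutations'' --- is false, and it is exactly the missing idea. Pairwise alternation forces all $k$ blocks to be the \emph{same} permutation: if two letters $a,b\in S$ occurred in the order $ab$ in some block and in the order $ba$ in the next block, then $w_{\{a,b\}}$ would contain the factor $abba$, hence the factor $bb$, so $a$ and $b$ would not alternate. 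Since the relative order of every pair is therefore constant across blocks, every block equals the initial permutation $\rho=\pi(w)_S$, and $w_S=\rho^k$. For $k\ge 2$ this immediately contains the square $\rho\rho$ with $|\rho|=p$, contradicting $p$-complete square-freeness via the subset $S$ --- no pigeonhole, no counting, and no detour through heredity and a separate analysis of $K_p$ is needed. This one observation is precisely the paper's proof, which fixes $\pi(w)_{S}=a_1a_2\cdots a_p$ without loss of generality and concludes $w_S=(a_1a_2\cdots a_p)^k$. As written, your proposal ends with an unverified ``counting or pigeonhole statement about permutations,'' so it is not a complete proof.

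One caveat you correctly sensed: when $k=1$ the word $w_S=\rho$ contains no square, so the contradiction requires $k\ge 2$. In that case $w$ itself is $1$-uniform, so $G$ is a complete graph; the paper's proof carries the same implicit assumption, and the paper treats complete graphs separately (after Lemma~\ref{lm15} it notes that complete graphs are $p$-complete square-free uniformly word-representable for every $p>1$ and excludes them from the subsequent discussion). So this edge case is a defect shared with the paper rather than a problem specific to your approach; the genuine gap in your proposal is the unproved --- and wrongly doubted --- equality of the permutation blocks.
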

\begin{proof}
    Suppose that there exists a $k$-uniform word $w$ that represents $G$, which is a $p$-complete square-free uniform word with $K_p$ as an induced subgraph. Let, $\{a_1,a_2,a_3,\cdots,a_p\}$ be the vertices of the graph $K_p$. So, $a_1,a_2,a_3,\ldots,a_p$ is alternating with each other in $w$. 
    Without loss of generality, we assume $\pi(w)_{\{a_1,a_2,\cdots,a_p\}}$ $=a_1a_2a_3\cdots a_p$. Therefore, we obtain that $w_{\{a_1,a_2,a_3,\cdots,a_p\}}=(a_1a_2a_3\cdots a_p)^k$.
    So, it contains a square $a_1a_2a_3\cdots a_pa_1a_2a_3\cdots a_p$ of size $p$. This contradicts our assumption, since according to the definition of the $p$-complete square-free uniform word, a square $XX$, where $|X|\geq p$, cannot occur. Therefore, $G$ forbids $K_p$ as an induced subgraph.  
\end{proof}

  However, the converse of this theorem is not true. Later, in Theorem \ref{tm7}, we provide an example that contradicts the converse statement of Theorem \ref{lm11}.

  According to Theorem \ref{tm8}, a word-representable graph can be $p$-complete square-free uniform word-representable based on $p$-value. At first, we focus on identifying the $p$-complete square-free uniform word-representable graphs when $p=1$. In the following lemma, we prove that only complete graphs have such a word-representation.
\begin{lemma} \label{lm15}  
   A graph $G(V,E)$ is $1$-complete square-free word-representable if and only if $G$ is a complete graph.
\end{lemma}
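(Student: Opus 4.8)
The plan is to prove both implications directly, with the crux being the observation that for $p=1$ the $1$-complete square condition is already triggered by singleton subsets. First I would handle the forward direction. Assume $w$ is a $1$-complete square-free word-representation of $G$. For any vertex $x \in V$, take $S = \{x\}$; then $w_S = x^{m_x}$, where $m_x$ denotes the number of occurrences of $x$ in $w$. If $m_x \ge 2$, then $w_S$ contains the square $XX$ with $X = x$ and $|X| = 1 \ge p$, contradicting $1$-complete square-freeness. Since $w$ represents $G$, each letter of $V$ appears at least once, so in fact every letter appears \emph{exactly} once. Hence $w$ is a permutation of $V$, and for any two distinct letters $x,y$ the restriction $w_{\{x,y\}}$ equals $xy$ or $yx$; thus $x$ and $y$ alternate and so $x \sim y$. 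Therefore $G$ is complete.

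For the converse, suppose $G$ is complete on vertex set $V$, and let $w$ be any permutation of $V$. By the same alternation argument, every pair of distinct letters alternates, so $w$ represents $K_{|V|}$. It then remains to verify $1$-complete square-freeness: for any $S \subseteq V$, the restriction $w_S$ is again a permutation of $S$, so each letter of $S$ occurs exactly once in $w_S$. A square $XX$ with $X \in S^{+}$ would force every letter of $X$ to occur at least twice in $w_S$, which is impossible. Hence no restriction of $w$ contains a square, so $w$ is $1$-complete square-free and $G$ is $1$-complete square-free word-representable.

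These steps are routine once the singleton-subset observation is made; the only point requiring care is correctly unwinding the definition, namely that the bound $|X| \ge p = 1$ admits the length-one factor $X = x$, so that a letter repeated twice already yields a $1$-complete square. I would also make explicit the elementary fact that any word in which every letter occurs exactly once is square-free (a square requires a repeated factor, hence repeated letters), since this single observation drives both directions of the equivalence.
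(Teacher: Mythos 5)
Your proof is correct, and one of its two halves coincides with the paper's while the other takes a genuinely different route. Your converse direction (complete $\Rightarrow$ $1$-complete square-free) is essentially the paper's argument: a permutation of $V$ represents $K_n$, and every restriction $w_S$ is again a permutation, hence square-free. For the forward direction, the paper argues by contraposition on the graph side: if $G$ is not complete, pick $x \nsim y$; since $x$ and $y$ do not alternate, $w_{\{x,y\}}$ contains $xx$ or $yy$ as a factor, i.e.\ a square of length $1$, so no representation of a non-complete graph can be $1$-complete square-free. You instead argue directly on the word side, using singleton subsets $S=\{x\}$ to conclude that any $1$-complete square-free representation contains each letter exactly once, i.e.\ is a permutation of $V$, whence every pair alternates and $G$ is complete. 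The two arguments probe restrictions of different sizes ($|S|=2$ versus $|S|=1$), and yours yields a slightly stronger, reusable intermediate fact: a word is $1$-complete square-free if and only if each letter occurs exactly once, so the only $1$-complete square-free representations of $K_n$ are its $1$-uniform ones. The paper's version instead reuses the non-alternation observation (non-adjacency forces a factor $xx$ or $yy$ in the two-letter restriction) that appears elsewhere in the paper. Both are complete and correct; the choice between them is a matter of emphasis.
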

\begin{proof}
    According to the Definition~\ref{def3}, any subword of $w$ representing $K_n$ should not have any square. As the representation number of $K_n$ is $1$, the $1$-uniform word has no square. Therefore, $K_n$ is $1$-complete square-free.

    Let $G(V,E)$ be a word-representable graph that is not complete, and $w$ is a $1$-complete square-free word representing $G$. Since $G$ is not complete, $w$ is at least $2$-uniform. Since $G$ is not complete, let $x,y\in V$ such that $x\nsim y$, so from $\{xxy,yxx,yyx,xyy\}$ at least one of the factor is present in $w_{\{x,y\}}$. But in all of the factors, there exists a square $XX=xx$ or $yy$ and $|X|=1$, which is a contradiction. Therefore, $G$ does not have any $1$-complete square-free word-representation.
\end{proof}

From Lemma \ref{lm15}, we know that only the complete graph is $1$-complete square-free representable. Therefore, the complete graph is a $p$-complete square-free uniform word-representable graph for $p>1$. Hence, in the following discussions, all the graphs we considered are not complete graphs.

A $2$-complete square-free uniform word-representable graph does not contain $K_2$ as a subgraph. The only $K_2$-free graph is an empty graph, which is $2$-complete square-free uniform word-representable. We prove this statement below.
\begin{cor}\label{cor1}
    If the word-representable graph $G(V,E)$ is an empty graph, then $G$ is a $2$-complete square-free uniform word-representable graph. 
\end{cor}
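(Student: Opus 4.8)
The plan is to exhibit a single explicit uniform word and verify the two required properties directly, rather than invoking any of the earlier existence theorems. Labelling the vertices of the empty graph by $\{1,2,\ldots,n\}$, I would take
\[
w = 1\,1\,2\,2\,\cdots\,n\,n,
\]
the $2$-uniform word in which each letter occurs exactly twice as a consecutive block. The claim is that $w$ is simultaneously a word-representation of the empty graph and a $2$-complete square-free word; since $w$ is $2$-uniform, this establishes that $G$ is $2$-complete square-free uniform word-representable.

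First I would check that $w$ represents the empty graph. For any two distinct vertices $x<y$ we have $w_{\{x,y\}}=xxyy$, which contains the factor $xxy$ and is therefore non-alternating; hence $x\nsim y$ for every pair, so $w$ represents $G$. No separate treatment of the two-vertex case is needed here: unlike the genuinely square-free setting, squares $XX$ with $|X|=1$ are permitted when $p=2$, so $w=1122$ already works for $n=2$.

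The substantive step is to show that $w$ is $2$-complete square-free, i.e.\ that no restriction $w_S$ contains a square $XX$ with $|X|\ge 2$. The key observation is that every restriction inherits the same normal form: writing $S=\{a_1<a_2<\cdots<a_m\}$, we have $w_S=a_1a_1a_2a_2\cdots a_m a_m$, so each letter occurs exactly twice and its two occurrences are adjacent. I would then argue by contradiction. Suppose $XX$ is a factor of $w_S$ with $|X|=k\ge 2$, and let $c$ be any letter occurring $j\ge 1$ times in $X$; then $c$ occurs $2j$ times in $XX$, and since $c$ occurs only twice in $w_S$ we get $j=1$. Thus the letters of $X$ are pairwise distinct and use up both of their occurrences inside $XX$. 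In particular the first letter of $X$ occupies positions $1$ and $k+1$ of the factor $XX$, so its two (and only) occurrences in $w_S$ are separated by the $k-1\ge 1$ intervening letters. This contradicts the fact that in $w_S$ the two copies of every letter are consecutive, so no such square exists.

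The only step that requires genuine care is the last one, and the main obstacle is purely that $2$-complete square-freeness is a statement about all $2^{n}$ subsets at once. The plan sidesteps this by noting that every restriction of $w$ has the same "doubled, increasing" form, so a single structural argument about the adjacency of the two copies of a letter simultaneously rules out squares of length at least two in every $w_S$. Combining the two verifications, $w$ is a $2$-complete square-free uniform word-representation of $G$, which proves the corollary.
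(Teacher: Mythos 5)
Your proposal is correct, and it follows the same overall strategy as the paper: exhibit one explicit $2$-uniform word representing the empty graph and show that every restriction $w_S$ is free of squares $XX$ with $|X|\ge 2$. The difference is in the witness word and in how much of the verification is carried out. The paper uses the palindrome $w = 12\cdots n\,n\cdots 21$, notes that each restriction has the same form $a_1a_2\cdots a_i a_i\cdots a_2a_1$, and essentially just asserts that the only square appearing there is the trivial one $a_ia_i$ of length $1$; it does not spell out why no square of length $\ge 2$ can occur (which for the palindrome requires a small argument, e.g.\ that in a square $XX$ of length $2k$ with distinct letters every letter's two occurrences are at distance exactly $k$, whereas in the palindrome distinct letters sit at distinct distances). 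Your word $w = 1122\cdots nn$ makes this step genuinely easier: every restriction again has the doubled-block form, so each letter's two copies are adjacent, while in any square $XX$ with $|X|=k\ge 2$ the first letter of $X$ would have its two occurrences separated by $k-1\ge 1$ letters --- an immediate contradiction. So your construction buys a cleaner and fully written-out square-freeness argument, and it fills in precisely the verification the paper leaves implicit; the paper's palindrome, on the other hand, matches the word used elsewhere in its discussion (e.g.\ the $1122$ representation of the two-vertex empty graph is a special case of both). One shared caveat: for $n=1$ the constraint $p\le\lceil |w|/2\rceil$ in Definition~\ref{def3} technically excludes $p=2$, an edge case neither you nor the paper addresses, but it is harmless since that graph is complete and handled by Lemma~\ref{lm15}.
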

\begin{proof} 
    According to Theorem \ref{lm11}, a $2$-complete square-free uniform word is $ K_2$-free. Empty graphs are the graphs that are $K_2$-free. Suppose, $G(V,E)$ is an empty graph on $n$ vertices, where $V=\{1,2,\ldots,n\}$. Then, $w=12\cdots n n\cdots 21$ represents the empty graph. We can  clearly see that any subset $S\subseteq V$ where $S=\{a_1,a_2,\ldots,a_i\}$, $a_1<a_2<\ldots <a_i$ and $1\leq i\leq n$, $w_S=a_1a_2\cdots a_ia_i\cdots a_2a_1$. The word $w_S$ contains a square $XX$, where $X=a_i$ and $|X|=1$. Therefore, the word $w$ is a $2$-complete square-free uniform word-representation of $G$.
\end{proof}

From Theorem \ref{lm11}, we obtain the following result.
\begin{cor}\label{cr1}
    If $G(V,E)$ is $3$-complete square-free uniform word-representable graph $G(V,E)$, then $G$ is $K_3$-free.
\end{cor}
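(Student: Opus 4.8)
The plan is to obtain this statement as an immediate specialisation of Theorem~\ref{lm11}. That theorem asserts that whenever $G(V,E)$ admits a $p$-complete square-free uniform word-representation with $p \ge 2$, the graph forbids $K_p$ as an induced subgraph. Since the hypothesis of the corollary is precisely that $G$ is $3$-complete square-free uniform word-representable, and since $3 \ge 2$, I would simply instantiate Theorem~\ref{lm11} at $p = 3$ to conclude that $G$ forbids $K_3$ as an induced subgraph.

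The only point that deserves a sentence of justification is the passage from ``no induced $K_3$'' to ``$K_3$-free''. Because $K_3$ is itself a complete graph, any three pairwise-adjacent vertices of $G$ automatically induce a copy of $K_3$; hence containing a triangle as a (not necessarily induced) subgraph is equivalent to containing $K_3$ as an induced subgraph. Thus forbidding $K_3$ as an induced subgraph coincides with being $K_3$-free in the usual triangle-free sense, and the corollary follows.

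I do not anticipate any genuine obstacle here, since the result is a direct corollary rather than a self-contained argument: the substantive work was already carried out in the proof of Theorem~\ref{lm11}, where restricting a uniform word-representant to the vertex set of an induced $K_p$ yields the $k$-th power $(a_1a_2\cdots a_p)^k$ and therefore a square $XX$ with $|X| = p$, contradicting $p$-complete square-freeness. Specialising that construction to $p = 3$ is exactly what underlies the present statement, so the ``proof'' amounts to recording this instantiation together with the triangle-free remark above.
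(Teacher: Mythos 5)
Your proposal is correct and matches the paper's proof exactly: the paper also derives Corollary~\ref{cr1} as an immediate instantiation of Theorem~\ref{lm11} at $p=3$, stating simply that it ``can be seen directly'' from that theorem. Your extra remark that forbidding an induced $K_3$ coincides with triangle-freeness (since three pairwise-adjacent vertices always induce a $K_3$) is a worthwhile clarification the paper leaves implicit.
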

\begin{proof}
   It can be seen directly from Theorem \ref{lm11}. 
\end{proof}

Since an empty graph is only $2$-complete square-free uniform word-representable, we will check whether the other $2$-representable graphs are $p$-complete square-free uniform.
For a $K_p$-free circle graph, there exists a $p$-complete square-free uniform word-representation.
Using Theorem \ref{cir}, we prove the following theorem. 

\begin{theorem}\label{lm12}
    If $G(V,E)$ is a $K_p$-free circle graph, then $G$ is a $p$-complete square-free uniform word-representable graph.
\end{theorem}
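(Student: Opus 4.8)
The plan is to use the structural characterisation of circle graphs provided by Theorem~\ref{cir}, which tells us that any circle graph (different from a complete graph) has representation number $2$, i.e.\ admits a $2$-uniform word-representation. So I would start by fixing a $2$-uniform word $w$ representing the $K_p$-free circle graph $G$, and recall from Definition~\ref{def1} that such a word decomposes into two permutations, $w = \pi(w)\sigma(w)$ where $\pi(w) = p_1(w)$ and $\sigma(w) = p_2(w)$. (If $G$ happens to be complete, then it is $1$-representable and $K_p$-freeness forces $p \ge n+1$, so the claim is immediate or handled separately.) The goal is to show that this $2$-uniform word $w$ is $p$-complete square-free: no restriction $w_S$ contains a square $XX$ with $|X|\ge p$.

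The key observation I would exploit is that each letter occurs exactly twice in $w$, so for any subset $S\subseteq V$ the restricted word $w_S$ is again $2$-uniform and has length $2|S|$. Therefore if $w_S = s_1 XX s_2$ is a square with $X = a_1 a_2 \cdots a_m$ and $|X| = m \ge p$, the two copies of $X$ are disjoint and together use $2m$ letter-occurrences; since each $a_i$ appears in both copies, each of these $m$ letters is used up entirely by $XX$, forcing $s_1 = s_2 = \varepsilon$ and $w_S = XX = a_1\cdots a_m a_1\cdots a_m$. The next step is to argue that in this situation the letters $a_1,\dots,a_m$ are pairwise alternating in $w_S$ (hence in $w$), so they induce a clique. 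Indeed, for any two letters $a_i, a_j$ with $i<j$, deleting all others from $XX$ leaves exactly $a_i a_j a_i a_j$, which is alternating; thus $a_i \sim a_j$ in $G$ for all $i,j$, so $\{a_1,\dots,a_m\}$ induces $K_m$ with $m \ge p$. This contradicts the hypothesis that $G$ is $K_p$-free, completing the argument.

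The heart of the proof is the rigidity forced by $2$-uniformity: because every letter appears exactly twice, a square $XX$ of length $\ge p$ in any restriction cannot have ``extra'' material around it or interleaved repeated letters, so it must consist of two clean consecutive copies of a permutation of the underlying letter set, which immediately yields a clique. I expect the main obstacle (really the only subtle point) to be justifying carefully that $s_1$ and $s_2$ must be empty and that no letter can repeat inside a single copy of $X$ — both follow from counting occurrences, since a square $XX$ already accounts for two copies of each letter of $X$ and no letter can occur more than twice in the full word $w$. A secondary point worth stating explicitly is that the restriction $w_S$ of a word representing $G$ represents the induced subgraph $G[S]$ (this is implicit in the hereditary argument of Lemma~\ref{lm10}), so alternation of letters in $w_S$ genuinely reflects adjacency in $G$. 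Once these counting facts are nailed down, the clique conclusion and the contradiction with $K_p$-freeness are immediate.
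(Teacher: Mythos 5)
Your proposal is correct and follows essentially the same route as the paper's own proof: take the $2$-uniform word guaranteed by Theorem~\ref{cir}, use $2$-uniformity to show that any square $XX$ in a restriction must consist of distinct letters whose two occurrences are exhausted by $XX$, deduce that these letters pairwise alternate in $w$ and hence induce a clique of size at least $p$, contradicting $K_p$-freeness. One inaccuracy to repair in a final write-up: $2$-uniformity does \emph{not} force $s_1 = s_2 = \varepsilon$, since $S$ may contain letters of $V$ that do not occur in $X$ at all, and those letters can populate $s_1$ and $s_2$; what your counting actually establishes --- and all the argument needs --- is that the letters of $X$ have no occurrences outside $XX$, so restricting $w$ to any pair $a_i,a_j$ still yields $a_ia_ja_ia_j$. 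The paper sidesteps this by passing to the set of letters occurring in $X$ (equivalently, you could shrink $S$ to that set, making $w_S = XX$ exact). Two further minor contrasts: the paper also invokes Lemma~\ref{lmk} to get square-freeness of $w$, which is never actually used in the argument, while your explicit handling of the complete-graph case (which Theorem~\ref{cir} excludes) addresses a point the paper glosses over.
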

\begin{proof}
    According to Theorem \ref{cir} and Lemma \ref{lmk}, a $K_p$-free circle graph $G(V,E)$ has a $2$-uniform square-free word-representation $w$. By definition of a $p$-complete square-free uniform word, suppose that there exists a subset $S \subseteq V$ such that $w_S$ contains a square $XX$, where $X \in S^+$ and $|X| \ge p$. Let $S=\{1,2,\ldots ,k\}$, where $k\geq p$. Then, we obtain that $w_{\{1,2,\ldots ,k\}}=uXXv$, where $u,v, X \in S^*$ and $|X|\geq p$. Since $w$ is $2$-uniform, if $X$ contains a letter $x \in S$ twice, then $x$ cannot occur in the other copy of $X$. However, this is not possible, so every letter present in $X$ occurs only once. Let, $\{a_1,a_2,a_3,\cdots a_l\}$, $p\leq l\leq k$, and $P$ is the permutation of $\{a_1,a_2,a_3,\cdots ,a_l\}$ present in $X$. Without loss of generality, we assume $P=a_1a_2a_3\cdots a_l$, then $w_{\{1,2,\ldots ,k\}}=ua_1a_2a_3\cdots a_la_1a_2a_3\cdots a_lv$. Since every pair $a_i,a_j \in {a_1,a_2,\ldots,a_l}$ with $i \neq j$ alternates in $w$, therefore, $a_1, a_2, \ldots a_p$ form a $K_p$. This contradicts our assumption. Therefore, $w$ is a $p$-complete square-free uniform word. 
\end{proof}

Suppose $w$ is a $p$-complete square-free uniform word-representation of a graph $G$. 
By the definition of a $p$-complete square-free uniform word-representable graph, the value of $p$ is at most $\left\lceil \frac{|w|}{2} \right\rceil$. 
In the following theorem, we show that for a word-representable graph $G(V,E)$ with representation number $k \ge 3$, the value of $p$ is at most $\left\lceil \frac{kn}{2} \right\rceil - 1$, where $|V| = n$. This provides a tighter upper bound on possible values of $p$.

 \begin{theorem}\label{tm8}
     Suppose $G(V,E)$ is a word-representable graph with representation number $k$. Then, $G$ is $\left\lceil\frac{kn}{2}\right\rceil-1$-complete square-free uniform word-representable, where $|V|=n$ except the circle graph having $K_{n-1}$ as induced subgraph.
 \end{theorem}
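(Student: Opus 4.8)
The plan is to begin with a $k$-uniform word-representation $w$ of $G$, which exists because the representation number is $k$, and to record that $|w| = kn$ since each of the $n$ letters occurs exactly $k$ times. The first step is a counting argument that forces any witnessing subset to be almost all of $V$. If some $w_S$ contains a square $XX$ with $|X| \ge \lceil kn/2\rceil - 1$, then $k$-uniformity gives $|w_S| = k|S|$, so $k|S| = |w_S| \ge 2|X| \ge 2\lceil kn/2\rceil - 2 \ge kn - 2$, whence $|S| \ge n - 2/k$. I would then split on $k$: for $k \ge 3$ we have $2/k < 1$, so $|S| > n-1$ and hence $S = V$; for $k = 2$ we get $|S| \ge n-1$, so either $S = V$ or $|S| = n-1$; and $k = 1$ is precisely the complete graph, which is $1$-complete square-free by Lemma~\ref{lm15} and therefore trivially meets the bound.

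The second step dispatches the case $S = V$, where $w$ itself would have to contain a factor square $XX$ with $|X| \ge \lceil kn/2\rceil - 1$. When $G$ is connected, Lemma~\ref{lmk} states that every $k$-uniform word representing a graph of representation number $k$ is square-free, so no such factor exists and $S = V$ never witnesses a complete square. This already settles every connected graph with $k \ge 3$, since for those the only candidate subset is $V$ and it has just been ruled out.

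The third step treats the extra candidate $|S| = n-1$, which arises only for $k = 2$, that is, when $G$ is a non-complete circle graph by Theorem~\ref{cir}. Here $|w_S| = 2(n-1)$, so a square of size $\ge n-1$ must equal the whole of $w_S$, giving $w_S = XX$ with $|X| = n-1$. Since $w_S$ is $2$-uniform, $X$ cannot repeat a letter (that would force four occurrences), so $X$ is a permutation of $S$ and $w_S = \pi\pi$; this double-permutation pattern occurs exactly when every pair in $S$ alternates, i.e. when $S$ induces $K_{n-1}$. I would then verify that in this exceptional situation the square is unavoidable: the $n-1$ pairwise-adjacent vertices always restrict to $\pi\pi$ in any $2$-uniform representation, producing a square of size $n-1 = \lceil 2n/2\rceil - 1$ regardless of the choice of $w$. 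This simultaneously proves the statement for circle graphs without an induced $K_{n-1}$ and establishes the stated exception.

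The main obstacle I anticipate is the disconnected case for $S = V$, where Lemma~\ref{lmk} does not apply directly. My plan is to represent a disconnected $G$ by concatenating $k$-uniform component representations, $w = w_1 w_2 \cdots w_m$, which still represents $G$ because vertices in distinct components occupy disjoint, block-separated positions and thus never alternate. Because the block alphabets are pairwise disjoint and each letter occurs within a single block, I would argue that any large factor square $XX$ must be confined to one block, hence to one connected component, where Lemma~\ref{lmk} again forbids it. Making this confinement rigorous—ruling out squares that straddle block boundaries and ensuring each component admits a $k$-uniform representation with $k$ equal to the global representation number—is the step requiring the most care; by contrast, the connected case and the circle-graph exception are comparatively routine.
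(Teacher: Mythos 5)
Your skeleton is sound and takes a genuinely different route from the paper's. The paper disposes of $k\le 2$ by citing its earlier class results (Lemma~\ref{lm15}, Corollary~\ref{cor1}, Theorem~\ref{lm12}, and Theorem~\ref{lm11} for the exception), and for $k\ge 3$ it runs the same counting bound you do, but only over proper subsets $S\subset V$, never discussing $S=V$; your reduction to $S=V$ followed by Lemma~\ref{lmk} treats exactly the case the paper leaves implicit, and your step 3 re-derives the circle-graph dichotomy directly instead of quoting Theorem~\ref{lm12}. However, two of your steps contain genuine gaps.

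The first is the disconnected case, and the flaw is not the one you flag. Each component does admit a $k$-uniform representation (iterate Observation~\ref{pw}), and the block-confinement argument is fine: since the block alphabets are disjoint and each block is contiguous, any factor square of $w_1w_2\cdots w_m$ lies inside a single block. What fails is the final appeal to Lemma~\ref{lmk}: that lemma requires the uniformity of the word to equal the representation number of the connected graph it represents. A component $G_i$ may have representation number $k_i<k$ (for instance a complete component $K_m$), and then its $k$-uniform block need not be square-free at all; indeed a $K_m$ component restricts to $\pi^k$, which contains the square $\pi\pi$ of size $m$. The case must instead be closed by counting: a square inside block $w_i$ has size at most $kn_i/2\le k(n-1)/2<\left\lceil \frac{kn}{2}\right\rceil-1$ once $k\ge 3$, so no block can hold a large enough square; and for $k=2$, a block square of size $n-1$ forces $n_i=n-1$ and $w_i=\pi\pi$, so that component is $K_{n-1}$ and $G$ is exactly the excluded graph $K_{n-1}\cup K_1$.

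The second gap is the exception itself. To establish that a circle graph with an induced $K_{n-1}$ is \emph{not} $(n-1)$-complete square-free uniform word-representable, you must exclude uniform representations of every uniformity $l$, not only $l=2$: Definition~\ref{def3} asks only for \emph{some} uniform word, so your argument leaves open that an $l$-uniform word with $l\ge 3$ might avoid the square. The paper closes this with Theorem~\ref{lm11}: in any $l$-uniform representation the clique restricts to $\pi^l$, producing a square of size $n-1$. Citing that theorem, or repeating its one-line argument for general $l$, completes your step 3; with these two repairs your plan goes through.
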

 \begin{proof}
We first show that a circle graph containing $K_{n-1}$ as an induced subgraph cannot admit a
$\left\lceil \frac{kn}{2} \right\rceil - 1$-complete square-free uniform word-representation.
The representation number of any circle graph is $2$.
Hence, in this case, $ \left\lceil \frac{kn}{2} \right\rceil - 1 = n - 1$.
By Theorem~\ref{lm11}, a graph that contains $K_p$ as an induced subgraph cannot be
$p$-complete square-free uniformly word-representable.
Therefore, a circle graph containing $K_{n-1}$ cannot be $\left\lceil \frac{kn}{2} \right\rceil - 1$-complete square-free uniformly word-representable.

Now, we prove $p$-complete square-free uniform word-representation for other word-representable graph. According to Lemma~\ref{lm15}, every complete graph is $1$-complete square-free uniformly word-representable, and hence also $\left\lceil \frac{kn}{2} \right\rceil - 1$-complete square-free uniformly word-representable. Similarly, by Corollary~\ref{cor1}, every empty graph is $2$-complete square-free uniformly word-representable, and therefore, also $\left\lceil \frac{kn}{2} \right\rceil - 1$-complete square-free uniformly word-representable.

Moreover, according to Theorem~\ref{lm12}, every $K_p$-free circle graph is $p$-complete square-free uniformly word-representable. Since circle graphs have representation number $2$, the only circle graphs that fail to admit a $\left\lceil \frac{2n}{2} \right\rceil - 1$-complete square-free uniform word-representation are those containing $K_{n-1}$ as an induced subgraph.

 Finally, suppose that $w$ is a $k$-uniform word-representation of $G$, where $k \ge 3$,
and that $w$ is not $\left\lceil \frac{kn}{2} \right\rceil - 1$-complete square-free.
Then there exists a subset $S \subset V$ such that $w_S = s_1 X X s_2$, where $|X| = \left\lceil \frac{kn}{2} \right\rceil$. Since each letter of $S$ occurs exactly $k$ times in $w_S$, we have $2|X| \le k|S|$. As $|S|<n$, there exists a vertex $v \in V \setminus S$.
Because $k \ge 3$, we obtain $k|S| < kn - 2$, which implies $ |X| < \frac{kn}{2} - 1$. But, this contradicts our assumption on $|X|$. Hence, $w$ must be $\left\lceil \frac{kn}{2} \right\rceil - 1$-complete square-free uniform word. 
 \end{proof}

From Theorem~\ref{tm8}, we obtain the following corollary.
\begin{cor}\label{cor2}
Let $G$ be a word-representable graph, and let $w$ be an $l$-uniform word that represents $G$.
If the representation number of $G$ is $k$ and $l>k$, then $G$ is
$\left\lceil \frac{ln}{2} \right\rceil - 1$-complete square-free uniformly word-representable,
where $|V(G)|=n$.
\end{cor}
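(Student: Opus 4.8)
The plan is to obtain the corollary from Theorem~\ref{tm8} by exploiting the monotonicity of the property in $p$. Recall the monotonicity remark stated earlier: if a graph has complete square-free uniform representation number $p$, then it is also $(p+1)$-complete square-free uniformly word-representable. Iterating this shows that a $p$-complete square-free uniformly word-representable graph is $q$-complete square-free uniformly word-representable for every $q \ge p$. Since the assertion that ``$G$ is $\left\lceil \frac{ln}{2} \right\rceil - 1$-complete square-free uniformly word-representable'' is a membership statement (the witnessing uniform word need not be the given word $w$), the whole problem reduces to comparing the thresholds $\left\lceil \frac{kn}{2} \right\rceil - 1$ and $\left\lceil \frac{ln}{2} \right\rceil - 1$ and invoking this monotonicity.

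First I would record the threshold inequality. Because $l > k$ with both integers, $l \ge k+1$, so $ln \ge kn + n \ge kn$ and hence $\left\lceil \frac{ln}{2} \right\rceil - 1 \ge \left\lceil \frac{kn}{2} \right\rceil - 1$. Now for every word-representable graph $G$ that is not a circle graph containing $K_{n-1}$ as an induced subgraph, Theorem~\ref{tm8} already gives that $G$ is $\left(\left\lceil \frac{kn}{2} \right\rceil - 1\right)$-complete square-free uniformly word-representable; combining this with the monotonicity and the threshold inequality immediately yields the desired $\left(\left\lceil \frac{ln}{2} \right\rceil - 1\right)$-complete square-free uniform representability.

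The step I expect to be the real obstacle is the family excluded by Theorem~\ref{tm8}, namely the circle graphs $G$ having $K_{n-1}$ as an induced subgraph, for which the monotonicity argument must be started from a different base value. If such a $G$ is complete, membership is immediate from Lemma~\ref{lm15}; otherwise $G$ is a non-complete circle graph, so $k = 2$ by Theorem~\ref{cir} and therefore $l \ge 3$. A non-complete graph on $n$ vertices cannot contain $K_n$, so $G$ is $K_n$-free, and Theorem~\ref{lm12} gives that $G$ is $n$-complete square-free uniformly word-representable. It then remains only to verify $\left\lceil \frac{ln}{2} \right\rceil - 1 \ge n$; since the left-hand side is non-decreasing in $l$, it suffices to check $\left\lceil \frac{3n}{2} \right\rceil - 1 \ge n$, which holds for all $n \ge 2$ (the only relevant range, as $n = 1$ forces completeness). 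A final application of monotonicity from the base value $n$ then places $G$ in the class, completing the plan.
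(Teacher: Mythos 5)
Your proposal is correct and, on the generic case, coincides with the paper's own proof: both invoke Theorem~\ref{tm8} to obtain $\left(\left\lceil \frac{kn}{2}\right\rceil-1\right)$-complete square-free uniform representability and then pass to $\left\lceil \frac{ln}{2}\right\rceil-1$ via the monotonicity remark and the inequality $\left\lceil \frac{ln}{2}\right\rceil-1\ge\left\lceil \frac{kn}{2}\right\rceil-1$. The genuine difference is in how you treat the exceptional family, the circle graphs containing $K_{n-1}$ as an induced subgraph. The paper handles these constructively: it exhibits a specific $3$-uniform word representing such a graph, notes that its restriction to the clique is $(a_1a_2\cdots a_i\,b_1b_2\cdots b_j)^3$, argues that no restriction contains a square of length exceeding $n-1$, and concludes from $\left\lceil \frac{3n}{2}\right\rceil-1>n-1$. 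You instead dispose of the complete graphs by Lemma~\ref{lm15} and, for the non-complete ones, observe that they are $K_n$-free circle graphs with $k=2$ (Theorem~\ref{cir}), so that Theorem~\ref{lm12} with $p=n$ supplies the base value $n$, after which $\left\lceil \frac{ln}{2}\right\rceil-1\ge\left\lceil \frac{3n}{2}\right\rceil-1\ge n$ and monotonicity finish. Your route is more modular, since it reuses existing results instead of verifying that a hand-built word represents the graph and bounding its squares. What the paper's construction buys is a longer witnessing word: Definition~\ref{def3} only allows calling $G$ $p$-complete square-free uniformly word-representable when the witnessing word $w'$ satisfies $p\le\left\lceil \frac{|w'|}{2}\right\rceil$, and the $2$-uniform word behind Theorem~\ref{lm12} caps $p$ at $n$, which is below the target $\left\lceil \frac{3n}{2}\right\rceil-1$ once $n\ge 4$, whereas the paper's $3$-uniform witness raises the cap to $\left\lceil \frac{3n}{2}\right\rceil$. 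So your argument stretches the monotonicity remark past the cap attached to its witnessing word; since the paper does exactly the same both in its generic case and for every $l\ge 4$, this is a weakness inherited from the paper rather than a gap specific to your proof, but it is the one point where the two treatments of the exceptional family are not interchangeable.
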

\begin{proof}
 Since the representation number of $G$ is $k$, by Theorem~\ref{tm8}, the graph $G$ is $\left\lceil \frac{kn}{2} \right\rceil - 1$-complete square-free uniformly word-representable, except in the case where $G$ is a circle graph containing $K_{n-1}$ as an induced subgraph. Moreover, since $k<l$, it follows that such graphs are also $\left\lceil \frac{ln}{2} \right\rceil - 1$-complete square-free uniformly word-representable. Let $G_1(V,E)$ be a circle graph, where $
V=\{a_1,a_2,\ldots,a_i,b_1,b_2,$ $\ldots,b_j,v\}$, $i+j=n-1$, the vertices $a_1,a_2,\ldots,a_i,b_1,b_2,\ldots,b_j$ induce a $K_{n-1}$, and $N(v)=\{a_1,a_2,$ $\ldots,a_i\}$. The word $w=a_1a_2\cdots a_i v b_1b_2\cdots b_j a_1a_2\cdots a_i b_1b_2\cdots b_j v a_1a_2\cdots a_i v b_1$ $b_2\cdots b_j$ represents the graph $G_1$. Restricting $w$ to vertex set $\{a_1,a_2,\ldots,a_i,b_1,b_2,\ldots,b_j\}$, , we obtain $ w_{\{a_1,\ldots,a_i,b_1,\ldots,b_j\}}= (a_1a_2\cdots a_i\, b_1b_2\cdots b_j)^3$, which contains a square $XX$, where
$ X = a_1a_2\cdots a_i\, b_1b_2$ $\cdots b_j$ and $|X|=n-1$.
 It follows that no $p$-complete square can occur for $p>n-1$.
Since $\left\lceil \frac{3n}{2} \right\rceil - 1 > n-1$, the graph $G_1$
is $\left\lceil \frac{ln}{2} \right\rceil - 1$-complete square-free uniformly word-representable
for all $l>k$. Therefore, every word-representable graph is also $\left\lceil \frac{ln}{2} \right\rceil - 1$-complete square-free uniform word-representable.
\end{proof}

\begin{cor}\label{cor3}
A graph $G$ is word-representable if and only if it admits a $p$-complete square-free uniform word-representation for some $p$.
\end{cor}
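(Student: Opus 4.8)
The plan is to prove Corollary~\ref{cor3} as a direct consequence of the results already established in this section, treating the two directions of the biconditional separately. The forward direction---that every word-representable graph admits a $p$-complete square-free uniform word-representation for some $p$---is essentially a corollary of Theorem~\ref{tm8} together with Corollary~\ref{cor2}. The backward direction---that a graph admitting such a representation is word-representable---is almost immediate from the definitions.

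For the backward direction, I would argue as follows. By Definition~\ref{def3}, a $p$-complete square-free uniform word-representation of $G$ is in particular a uniform word $w$ that represents $G$ in the usual sense. Hence $G$ is $k$-representable for the relevant uniform value $k$, and by Theorem~\ref{krep}, $G$ is word-representable. This direction requires essentially no work beyond citing the definition and Theorem~\ref{krep}.

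For the forward direction, the key point is that Theorem~\ref{tm8} already guarantees that every word-representable graph $G$ with representation number $k$ is $\left\lceil \frac{kn}{2}\right\rceil - 1$-complete square-free uniformly word-representable, with the sole exception of circle graphs containing $K_{n-1}$ as an induced subgraph. For that exceptional case, Corollary~\ref{cor2} handles the gap: by passing from the representation number $k=2$ to a strictly larger uniform length $l > k$, one exhibits an $l$-uniform representation (the explicit word $w$ constructed there) that is $\left\lceil \frac{ln}{2}\right\rceil - 1$-complete square-free. Thus, taking $p = \left\lceil \frac{ln}{2}\right\rceil - 1$ for an appropriate uniform length $l$, every word-representable graph admits a $p$-complete square-free uniform word-representation for this value of $p$. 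Combining the two cases, the forward direction follows.

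The main obstacle, such as it is, lies entirely in the bookkeeping of the exceptional case rather than in any new argument: one must be careful that the exception in Theorem~\ref{tm8} (circle graphs with an induced $K_{n-1}$) is genuinely covered by the construction in Corollary~\ref{cor2}, and that the chosen $p$ is a valid value, i.e.\ that $p \le \left\lceil \frac{|w|}{2}\right\rceil$ as required by Definition~\ref{def3}. Since $|w| = ln$ for an $l$-uniform word on $n$ vertices and $p = \left\lceil \frac{ln}{2}\right\rceil - 1 < \left\lceil \frac{ln}{2}\right\rceil$, this bound holds automatically, so no further care is needed. The entire corollary therefore reduces to assembling the forward direction from Theorem~\ref{tm8} and Corollary~\ref{cor2} and the backward direction from Theorem~\ref{krep}, with no substantive new estimate required.
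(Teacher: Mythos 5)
Your proposal is correct and follows essentially the same route as the paper: the paper's proof likewise obtains the forward direction from Corollary~\ref{cor2} (which itself rests on Theorem~\ref{tm8}, including the exceptional circle graphs with an induced $K_{n-1}$) and the backward direction directly from Definition~\ref{def3}. Your extra checks---citing Theorem~\ref{krep} and verifying $p \le \left\lceil \frac{|w|}{2} \right\rceil$---are harmless elaborations of the same argument.
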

\begin{proof}
If $G$ is word-representable and $w$ is an $l$-uniform word representing $G$,
then by Corollary~\ref{cor2}, $G$ admits a
$\left\lceil \frac{ln}{2} \right\rceil - 1$-complete square-free uniform word-representation.

Conversely, if $G$ admits a $p$-complete square-free uniform word-representation, then by Definition~\ref{def3}, $G$ is word-representable.
\end{proof}

\begin{cor}
The recognition problem for $p$-complete square-free word-representable graphs,
for arbitrary values of $p$, is NP-hard.
\end{cor}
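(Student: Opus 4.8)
The plan is to establish NP-hardness by a reduction from the recognition problem for word-representable graphs, which is NP-complete by Corollary~\ref{npc}. The key leverage is Corollary~\ref{cor3}, which asserts that a graph is word-representable \emph{if and only if} it admits a $p$-complete square-free uniform word-representation for some $p$. This equivalence already tells us that the two decision problems have the same ``yes'' instances once $p$ is allowed to be arbitrary, so the recognition problem for word-representable graphs reduces to deciding $p$-complete square-free word-representability for arbitrary $p$.

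First I would make the reduction precise. Given an arbitrary graph $G$ on $n$ vertices as input to the word-representability recognition problem, I output the same graph $G$ together with the instruction that $p$ may be chosen arbitrarily (the ``arbitrary $p$'' clause in the statement is exactly what frees us from fixing a value). If $G$ is word-representable, then by Corollary~\ref{cor2} it has an $l$-uniform representation for some $l$, and hence admits a $\left\lceil \frac{ln}{2}\right\rceil - 1$-complete square-free uniform word-representation; so $G$ is a ``yes'' instance of the $p$-complete square-free problem for that value of $p$. Conversely, if $G$ admits a $p$-complete square-free uniform word-representation for \emph{any} $p$, then by Definition~\ref{def3} the representing word $w$ is in particular a word-representation of $G$, so $G$ is word-representable. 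Thus $G$ is word-representable if and only if it is a ``yes'' instance of the $p$-complete square-free recognition problem for some $p$, and the reduction is the identity map on graphs, computable in polynomial (indeed linear) time.

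I expect the main obstacle to be phrasing the decision problem so that the reduction is genuinely valid, that is, ensuring that ``arbitrary $p$'' is interpreted existentially (does there exist a $p$ for which $G$ is $p$-complete square-free word-representable?) rather than as a fixed-$p$ question. Under the existential reading, the equivalence of Corollary~\ref{cor3} makes the ``yes'' sets of the two problems literally coincide, and NP-hardness of $p$-complete square-free recognition follows immediately from the NP-completeness of word-representability recognition in Corollary~\ref{npc}. The only subtlety worth flagging is that the certifying value of $p$ produced by the forward direction, namely $\left\lceil \frac{ln}{2}\right\rceil - 1$, depends on an $l$-uniform representation whose length $l$ may itself be large; but since we are only establishing NP-hardness (a lower bound) and not membership in NP, we do not need to bound $p$ or exhibit a short certificate, so this causes no difficulty.

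Therefore, the reduction preserves ``yes'' and ``no'' instances, runs in polynomial time, and transfers NP-hardness from the recognition problem for word-representable graphs to the recognition problem for $p$-complete square-free word-representable graphs with arbitrary $p$.
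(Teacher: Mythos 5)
Your proposal is correct and follows exactly the paper's approach: the paper's proof is a one-line appeal to Corollary~\ref{npc} (NP-completeness of word-representability recognition) together with Corollary~\ref{cor3} (word-representable iff $p$-complete square-free uniformly word-representable for some $p$), which is precisely the identity reduction you spell out. Your expanded discussion of the existential reading of ``arbitrary $p$'' and of why the size of $p$ is irrelevant for a hardness lower bound is a faithful elaboration of the same argument, not a different route.
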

\begin{proof}
This follows directly from Corollary~\ref{npc} and Corollary~\ref{cor3}.
\end{proof}
 
 Theorem~\ref{tm8} provides an upper bound on the value of $p$ in terms of the representation number of a graph. However, this bound need not be tight, and the obtained value of $p$ may be larger than the complete square-free uniform representation number. In the next section, we study the relationship between the representation number and the complete square-free uniform representation number. In particular, we establish this connection for complete square-free uniform representation numbers up to $3$.

\section{$3$-complete square-free uniform word} \label{sc4}

According to Theorem \ref{lm12}, all triangle-free ($K_3-free$) circle graphs are $3$-complete square-free uniform word-representable. Every $2$-uniform square-free word representing a $K_3$-free graph $G$ is also a $3$-complete square-free uniform word. We now examine whether every $K_3$-free word-representable graph with representation number $k \ge 3$ admits a $3$-complete square-free uniform word-representation. In the following theorem, we prove that a $3$-complete square-free uniform word-representation does not exist for a graph with representation number $k>3$.
\begin{theorem}\label{tm7}
    If $G(V,E)$ is a word-representable graph having representation number $k>3$, then $G$ is not $3$-complete square-free uniform word-representable graph.
\end{theorem}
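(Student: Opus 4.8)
The plan is to show that \emph{every} uniform word representing $G$ contains a $3$-complete square, so that by Definition~\ref{def3} no $3$-complete square-free uniform word-representation can exist. The argument rests on two reductions followed by a single combinatorial observation. First I would pin down the available uniformities. Since the representation number of $G$ is $k>3$, no $l$-uniform word with $l<k$ represents $G$; and by Observation~\ref{pw} a uniform representation can always be lengthened (if $w$ is $l$-uniform, then $\pi(w)w$ is $(l+1)$-uniform and still represents $G$). Hence every uniform word-representation of $G$ is $l$-uniform for some $l\ge k\ge 4$. It therefore suffices to prove that any $l$-uniform representation with $l\ge 4$ contains a $3$-complete square.

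Next I would locate an edge in $G$. A graph with no edges is either $K_1$, which lies in $\mathcal{R}_1$, or an empty graph on at least two vertices, which is a non-complete circle graph and hence lies in $\mathcal{R}_2$ by Theorem~\ref{cir}. In either case the representation number is at most $2$, contradicting $k>3$. Thus $G$ contains at least one edge $\{a,b\}$.

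The core step is then immediate. Fix any $l$-uniform representation $w$ of $G$ with $l\ge 4$. Since $a\sim b$, the letters $a$ and $b$ alternate in $w$, and as each occurs exactly $l$ times we obtain $w_{\{a,b\}}=(ab)^{l}$ or $(ba)^{l}$, a word of length $2l\ge 8$. Its prefix of length $8$ is $abababab=(abab)^{2}$, which is a square $XX$ with $X=abab$ and $|X|=4\ge 3$. Taking $S=\{a,b\}$, this exhibits a $3$-complete square in $w$, so $w$ is not $3$-complete square-free. As $w$ was an arbitrary uniform representation, $G$ admits no $3$-complete square-free uniform word-representation.

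The two reductions carry essentially all the weight: the monotonicity of uniform representability (guaranteeing that no uniformity below $4$ is available) and the fact that a graph of representation number exceeding $3$ cannot be edgeless. Once these are in place, the combinatorial heart—that an alternating block of length at least $8$ automatically contains a length-$4$ square—is routine, and I expect no genuine obstacle beyond stating these two reductions cleanly. It is worth noting that this argument also explains the threshold at $k=3$: for $l=3$ the block $(ab)^{3}=ababab$ has length only $6$ and contains no square of half-length $\ge 3$, which is precisely why $K_3$-free circle graphs (represented by $2$-uniform words) escape in Theorem~\ref{lm12}.
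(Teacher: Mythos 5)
Your proof is correct, and it takes a genuinely different and far more elementary route than the paper's. The paper's proof implicitly assumes $G$ is connected, invokes $K_3$-freeness (Corollary~\ref{cr1}) to produce an induced path $b \sim a \sim c$ with $b \nsim c$, and then carries out a six-case analysis of the possible forms of $w_{\{a,b,c\}}$ (each letter occurring at least four times), exhibiting a square of half-length at least $3$ in every case. You bypass all of this: minimality of the representation number already forces every uniform representation to be $l$-uniform with $l \ge k \ge 4$ (your appeal to Observation~\ref{pw} is superfluous for this), the hypothesis $k>3$ forces an edge $\{a,b\}$ to exist, and then $w_{\{a,b\}} = (ab)^l$ or $(ba)^l$ begins with the square $(abab)(abab)$, a $3$-complete square with $S=\{a,b\}$ and $|X|=4$. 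The step your argument hinges on---that the root $X$ of a $p$-complete square may repeat letters, so that a two-element set $S$ suffices---is exactly what the paper's definition permits, and the paper itself relies on this reading elsewhere (Case~2 of Lemma~\ref{lmn1} and the proof of Theorem~\ref{lm12}), so there is no gap. Your route buys more than brevity: it needs neither connectivity nor $K_3$-freeness, and it generalizes verbatim to every $p \ge 2$: a graph with at least one edge and representation number $k > p$ has every $l$-uniform representation ($l \ge p+1$) containing, on an edge restriction, a square of half-length $2\lfloor l/2 \rfloor \ge p$, hence it is not $p$-complete square-free uniformly word-representable; this in fact answers the fifth open problem in the paper's Conclusion in the negative. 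What the paper's heavier case analysis buys is a stronger unavoidability statement: squares of half-length at least $3$ arise already among three distinct letters forming an induced path, which is what one would need if the definition of a $p$-complete square were tightened so that the root $X$ must consist of distinct letters.
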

\begin{proof}
    Suppose that there exists a $3$-complete square-free uniform word $w$ representing the graph $G(V,E)$. Since the representation number of the graph $G$ is $k>3$, every letter occurs at least $4$ times. Since $G$ is a connected graph, there exist three vertices $a,b,c$ such that $a\sim b$, $a \sim c$ and $b\nsim c$ (otherwise, if $b\sim c$ then $abc$ would form a $K_3$). Now, in the word $w_{\{a,b,c\}}$, the possible initial permutation must be one of these six permutations: $abc$, $acb$, $bac$, $cab$, $bca$, $cba$. We discuss every case in the following:
    \\\textbf{Case 1}: If $w_{\{a,b,c\}}=abcs_1$, then $s_1$ needs to begin with $a$; otherwise, either $b$ or $c$ would occur twice between two $a$'s, removing the alternation of $a$ with $b$ or $c$. Moreover, in $s_1$, the $a$ must be followed by $cb$; otherwise, $bc$ would create the square $abcabc$. Thus, we obtain $w_{\{a,b,c\}}=abcacbs_2$. By the same reasoning applied to $s_1$, $s_2$ also starts with $a$ and is followed by $bc$; otherwise, $cb$ would create a square. Therefore, we obtain $w_{\{a,b,c\}}=abcacbabcs_3$. If $s_3=abc$ or $s_3=acb$ then $w_{\{a,b,c\}}=abcacbabcabc$ or $w_{\{a,b,c\}}=abcacbabcacb$, respectively, but either case there exists a square in $w_{\{a,b,c\}}$.
    \\\textbf{Case 2}: If $w_{\{a,b,c\}}=acbs_1$, then applying the same argument as in Case 1, $s_1$ needs to start with $a$. In $s_1$, this $a$ must be followed by $bc$; otherwise, $cb$ would create the square $acbacb$. Thus, we obtain $w_{\{a,b,c\}}=acbabcs_2$. By the same argument applied to $s_1$, $s_2$ also starts with $a$ and is followed by $cb$; otherwise, $bc$ would create a square. Therefore, we obtain $w_{\{a,b,c\}}=acbabcacbs_3$. If $s_3=abc$ or $s_3=acb$ then $w_{\{a,b,c\}}=acbabcacbabc$ or $w_{\{a,b,c\}}=acbabcacbacb$, respectively, but either case there exists a square in $w_{\{a,b,c\}}$.    
    \\\textbf{Case 3}: If $w_{\{a,b,c\}}=bacs_1$, then $s_1$ must start with $b$. If it starts with $a$, then $a$ and $b$ do not alternate, and if it starts with $c$, then $a$ and $c$ do not alternate. In $s_1$, the $b$ should be followed by $ac$; otherwise, $ca$ would remove the alternation between $a$ and $c$. But then $bacbac$ forms a square. However, in $w_{\{a,b,c\}}$, before the $1^{st}$ occurrence of $c$, $b$ can occur twice. Thus, if $w_{\{a,b,c\}}=babcs_1$, then the factor $abc$ occurs in $w_{\{a,b,c\}}$. Therefore, it follows the same condition as Case 1, except that the last occurrence of $b$ is removed because $b$ already occurs as the first letter in $w_{\{a,b,c\}}$. Then $w_{\{a,b,c\}}=babcacbabcac$, but there exists a square $XX$, where $X=babcac$ in $w_{\{a,b,c\}}$. 
    \\\textbf{Case 4}: If $w_{\{a,b,c\}}=cabs_1$, then $s_1$ must start with $c$. If it starts with $a$, then $a$ and $c$ do not alternate, and if it starts with $b$, then $a$ and $b$ do not alternate. In $s_1$, the $c$ should be followed by $ab$; otherwise, $ba$ would remove the alternation between $a$ and $b$. But then $cabcab$ forms a square. However, in $w_{\{a,b,c\}}$, before the $1^{st}$ occurrence of $b$, two $c$ can occur. Thus, if $w_{\{a,b,c\}}=cacbs_1$, then the factor $acb$ occurs in $w_{\{a,b,c\}}$. Therefore, it follows the same condition as Case 2, except that the last occurrence of $c$ is removed because $c$ already occurs as the first letter in $w_{\{a,b,c\}}$. Then $w_{\{a,b,c\}}=cacbabcacbab$, but there exists a square $XX$, where $X=cacbab$ in $w_{\{a,b,c\}}$. 
    \\\textbf{Case 5}: If $w_{\{a,b,c\}}=bcas_1$, then $s_1$ needs to begin with $b$ or $c$; otherwise, starting with $a$ would remove the alternation of $a$ with $b$ and $c$. If $s_1$ starts with $b$, then the next letter must be $c$; otherwise, the occurrence of $a$ would remove the alternation between $a$ and $c$. However, $bcabca$ forms a square. Therefore, $s_1$ starts with $c$ and is followed by $ba$. Thus, if $w_{\{a,b,c\}}=bcacbas_2$, then the factor $acb$ occurs in $w_{\{a,b,c\}}$. Therefore, it follows the same condition as Case 2, except that the last occurrences of $b$ and $c$ are removed because $b$ and $c$ already occur as the first and second letters in $w_{\{a,b,c\}}$, respectively. Then $w_{\{a,b,c\}}=bcacbabcacba$, but there exists a square $XX$, where $X=bcacba$ in $w_{\{a,b,c\}}$. 
    \\\textbf{Case 6}: If $w_{\{a,b,c\}}=cbas_1$, then $s_1$ needs to start with $b$ or $c$; otherwise, starting with $a$ would remove the alternation of $a$ with $b$ and $c$. If $s_1$ starts with $c$, then the next letter must be $b$; otherwise, the occurrence of $a$ would remove the alternation between $a$ and $b$. However, $cbacba$ forms a square. Therefore, $s_1$ starts with $b$ and is followed by $ca$. Thus, if $w_{\{a,b,c\}}=cbabcas_2$, then the factor $abc$ occurs in $w_{\{a,b,c\}}$. Therefore, it follows the same condition as Case 1, except that the last occurrences of $b$ and $c$ are removed because $c$ and $b$ already occur as the first and second letters in $w_{\{a,b,c\}}$, respectively. Then $w_{\{a,b,c\}}=cbabcacbabca$, but there exists a square $XX$, where $X=cbabca$ in $w_{\{a,b,c\}}$. 
    
    In all of the cases, we obtain a square $XX$, where $X\in \{a,b,c\}^+$ and $|X|\geq 3$ in $w_{\{a,b,c\}}$. This contradicts the assumption that $w$ is $3$-complete square-free. Therefore, $G$ is not a $3$-complete square-free uniform word-representable graph.
\end{proof}

From Theorem \ref{tm7}, we can obtain an example that contradicts the converse statement of Theorem \ref{lm11}. We know that the crown graph is bipartite, so it avoids $K_3$ as an induced subgraph. According to Theorems \ref{crown1} and \ref{crown2}, the representation number of the crown graph $H_{n,n}$ is $\lceil n/2 \rceil$, $n\geq 5$. Thus, the representation number of the $H_{8,8}$ graph is $\lceil 8/2 \rceil=4$. However, according to Theorem \ref{tm7}, the $H_{8,8}$ graph cannot contain $p$-complete square-free uniform word-representation, when $p=3$. Therefore, there exist word-presentable graphs that are $K_p$-free but are not $p$-complete square-free uniform word-representable. 

According to Theorem \ref{tm7}, it can be directly seen that the graphs with a representation number $\leq 3$ can have the complete square-free uniform representation number $3$. Now, we aim to determine whether it is possible to construct a $3$-complete square-free uniform word-representable graph from an existing $3$-complete square-free uniform word-representable graph.

 If we add an apex vertex to an empty graph, the graph becomes a star graph. Since the star graph is $K_3$-free and $2$-uniform word-representable, according to Theorem \ref{lm12}, the star graph is $3$-complete square-free uniform word-representable. However, if we connect an apex vertex with a non-empty $3$-complete square-free uniform word-representable graph, the resulting graph is no longer $3$-complete square-free uniform word-representable. We prove this statement below.
\begin{cor}\label{lm13}
    Suppose $G(V,E)$ is a non-empty $3$-complete square-free uniform word-representable graph. If the graph $G'$ is obtained by adding an apex vertex to $G$, then $G'$ is not $3$-complete square-free uniform word-representable.
  
\end{cor}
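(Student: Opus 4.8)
The plan is to combine the forbidden-subgraph result (Theorem~\ref{lm11}) with the structure of apex vertices established in Lemma~\ref{lmn1}. The statement asserts that attaching an apex vertex $v$ to a \emph{non-empty} $3$-complete square-free uniform word-representable graph $G$ destroys the property. First I would observe that since $G$ is non-empty, it contains at least one edge, say $a\sim b$. In the resulting graph $G'$, the apex vertex $v$ is adjacent to both $a$ and $b$, so the three vertices $\{v,a,b\}$ are pairwise adjacent and therefore induce a copy of $K_3$. By Theorem~\ref{lm11} (with $p=3$), any graph that contains $K_3$ as an induced subgraph cannot be $3$-complete square-free uniformly word-representable. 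This immediately yields the conclusion.

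The key steps, in order, are: (i) use the hypothesis that $G$ is non-empty to extract a single edge $ab$ of $G$; (ii) note that adding the apex vertex $v$ makes $v$ adjacent to every vertex, in particular to $a$ and $b$; (iii) conclude that the induced subgraph on $\{v,a,b\}$ is a triangle $K_3$; and (iv) invoke Theorem~\ref{lm11} to conclude that $G'$ forbids no $3$-complete square in its representation, i.e.\ $G'$ is not $3$-complete square-free uniform word-representable. Each of these steps is short, so this corollary is essentially a direct application of the forbidden-induced-subgraph characterisation once the presence of an edge in $G$ is noted.

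I do not expect any serious obstacle here, since the argument reduces to exhibiting an induced $K_3$. The only point requiring minor care is justifying that $\{v,a,b\}$ is an \emph{induced} triangle rather than merely containing the three required edges: because $v$ is apex it is adjacent to both $a$ and $b$, and $a\sim b$ holds in $G$ and is preserved as an induced adjacency in $G'$ (adding a new vertex does not alter adjacencies among old vertices), so the induced subgraph on these three vertices is exactly $K_3$. This is why the hypothesis that $G$ is \emph{non-empty} is essential: if $G$ were edgeless, then $G'$ would be a star, which is $K_3$-free and indeed (as already noted before the corollary) $3$-complete square-free uniform word-representable. Thus the non-emptiness hypothesis is precisely what guarantees the edge needed to form the triangle, and the contrast with the star-graph case confirms the hypothesis cannot be dropped.
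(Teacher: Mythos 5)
Your proof is correct and follows essentially the same route as the paper: extract an edge $a\sim b$ from the non-empty graph $G$, observe that the apex vertex forms an induced $K_3$ with $a$ and $b$, and invoke the forbidden-$K_3$ result (the paper cites Corollary~\ref{cr1}, which is just Theorem~\ref{lm11} specialised to $p=3$, exactly as you do). The mention of Lemma~\ref{lmn1} in your plan is superfluous since your argument never uses it, but this does not affect correctness.
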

\begin{proof}
    As $G$ is a non-empty graph, let $x\sim y$ where $x,y\in V$. The vertex $v$ is an apex vertex that implies $v\sim x$ and $v\sim y$. So, $vxy$ forms a $K_3$ in the graph $G'$. Therefore, according to the Corollary \ref{cr1}, $G'$ is not $3$-complete square-free uniform word-representable.
\end{proof}

 To create a $3$-complete square-free uniform word, connecting an apex vertex to a $3$-complete square-free uniform word-representable graph does not work. Therefore, we need another operation to create a new $3$-complete square-free uniform word-representable graph. We create a method using the occurrence-based function, and we obtain a word-representation of a new $3$-complete square-free uniform word-representable graph from a known $3$-complete square-free uniform word-representable graph. The definition of an occurrence-based function is described below. 
\begin{dnt}(\textit{\cite{broere2018word}, Definition 3.20.})
    Let $V$ and $V'$ be alphabets, and let $N_k=\{1,\ldots,k\}$. Then $H: V^*\rightarrow (V\times N_k)^*$ is the labelling function of finite words over $V$, where the $i^{th}$ occurrence of each letter $x$ is mapped to the pair $(x, i)$, and $k$ satisfies the property that every symbol occurs at most $k$ times in $w$. An occurrence-based function is a composition $(h o H)$ of a homomorphism $h:(V\times N_k)^*\rightarrow (V')^*$ and the labelling function $H$. 
    %Instead of $h(H(w))$, $h(w)$ is used to represent the occurrence-based function.
\end{dnt}

\begin{exm}
    The final permutation $\sigma(w)$ of the $3$-uniform word $w=68314521783$ $6724568314572$ can be defined using the following homomorphism $h$ and the labelling function $H:\{1,2,\ldots,8\}^*\rightarrow (\{1,2,\ldots,8\}\times \{1,2,3\} )^*$, where the $i^{th}$ occurrence of each letter $x$ is mapped to the pair $(x, i)$ and $1\leq i\leq 3$.
        \begin{equation*}
      h(x,i)=
    \begin{cases}
        \epsilon,  & \text{if $i< 3$, }\\
        x, & \text{if $i= 3$. } 
    \end{cases}  
    \end{equation*}
    So, $h(H(683145217836724568314572))$ $=68314572$ $= \sigma(w)$.
\end{exm}
Suppose $w$ is a $3$-complete square-free uniform word that represents the graph $G(V,E)$. By applying an occurrence-based function to $w$, we can obtain a $3$-complete square-free uniform word that represents the graph $G'(V\cup \{v\}, E\cup \{v\sim u: u\in N_x\})$ where $N_x$ is the set that contains the neighbours of $x$ in the graph $G$. 

\begin{theorem}\label{lm14}
    For a $3$-complete square-free uniform word-representable graph $G(V,E)$, if we connect a new vertex $v$ with the neighbours of $x$, where $x\in V$, then the graph $G'(V',E')$ is also $3$-complete square-free uniform word-representable, where $V'=V\cup \{v\}$, $E'=E\cup \{v\sim u: u\in N_x\}$ and $N_x$ is the set that contains the neighbours of $x$.
\end{theorem}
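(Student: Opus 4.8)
The plan is to realise $v$ as a \emph{false twin} of $x$, i.e. a new vertex with $N_{G'}(v)=N_x$ and $v\nsim x$. Starting from a $k$-uniform $3$-complete square-free word $w$ representing $G$ (by Theorem~\ref{tm7} the representation number satisfies $k\le 3$, and the cases $k=1$, where triangle-freeness forces $G\in\{K_1,K_2\}$, are handled directly, so I assume $k\ge 2$), I would build $w'$ using the occurrence-based function that fixes every letter other than $x$ and duplicates the $i$-th occurrence of $x$ into the pair $vx$ when $i$ is odd and $xv$ when $i$ is even. Formally, with the labelling function $H$ and the homomorphism $h(y,i)=y$ for $y\ne x$, $h(x,i)=vx$ for odd $i$ and $h(x,i)=xv$ for even $i$, I set $w'=h(H(w))$. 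This keeps $w'$ $k$-uniform (both $x$ and $v$ occur $k$ times) and glues each occurrence of $v$ immediately next to an occurrence of $x$.

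Next I would check that $w'$ represents $G'$. For any letter $u\notin\{x,v\}$, each $v$ sits adjacent in $w'$ to its paired $x$ with no intervening letter, so the copies of $v$ interleave with the copies of $u$ exactly as the copies of $x$ do; hence $w'_{\{u,v\}}$ is order-isomorphic to $w_{\{u,x\}}$, giving $v\sim u$ in $G'$ iff $x\sim u$ in $G$, that is, iff $u\in N_x$. Moreover $w'_{\{x,v\}}=vx\,xv\,vx\cdots$ contains the factor $xx$, so $x$ and $v$ do not alternate and $v\nsim x$. Thus $N_{G'}(v)=N_x$ and every other adjacency is inherited from $w$, so $w'$ represents $G'$.

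The heart of the proof is showing that $w'$ is $3$-complete square-free. Take any $S\subseteq V'$ and suppose $w'_S=s_1YYs_2$ with $|Y|\ge 3$. If $v\notin S$ then $w'_S=w_S$, and if $v\in S$ but $x\notin S$ then relabelling $v\mapsto x$ shows $w'_S$ is order-isomorphic to $w_{(S\setminus\{v\})\cup\{x\}}$; in either case $w$ would already contain a square of length $\ge 3$, contradicting that $w$ is $3$-complete square-free. So assume $x,v\in S$ and let $\overline{Y}$ be $Y$ with its $v$'s deleted. Since $w'$ has only $k\le 3$ copies of $v$, the square $YY$ contains at most three $v$'s, forcing at most one per copy of $Y$, so $|\overline{Y}|\ge|Y|-1$. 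If $|Y|\ge 4$ then $|\overline{Y}|\ge 3$, and deleting all $v$'s from $w'_S$ produces the square $\overline{Y}\,\overline{Y}$ of length $\ge 3$ inside $w_{S\setminus\{v\}}$, again a contradiction.

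The one remaining, and genuinely delicate, case is $|Y|=3$ with exactly one $v$ in each copy, so that the two $v$'s appearing in $YY$ lie exactly three positions apart and, as no third $v$ can fit between them, are \emph{consecutive} occurrences $v_i,v_{i+1}$. This is the main obstacle, and it is exactly where the side-alternation is needed: consecutive occurrences are glued to their $x$'s on opposite sides, so inside $w'_S$ their immediate neighbourhoods differ (one has an $x$ on its right, the other does not), and this mismatch prevents the two length-$3$ blocks from being equal; the only way to force equality would require $Y$ to repeat a letter, but then $YY$ would contain that letter at least four times, contradicting $k$-uniformity with $k\le 3$. Carrying out this local comparison over the possible positions of $v$ within $Y$ is the technical crux, and it explains why the copies of $v$ are inserted on alternating sides rather than all on the same side. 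Once this case is excluded, no subset $S$ yields a square of length $\ge 3$, so $w'$ is a $3$-complete square-free uniform word-representation of $G'$.
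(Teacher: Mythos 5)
Your construction is identical to the paper's (up to mirroring which side of $x$ the odd/even copies of $v$ are attached to), your verification that $w'$ represents $G'$ is sound, and your first reductions are correct: the case $v\notin S$ is immediate, the case $v\in S,\ x\notin S$ follows from the order-isomorphism $v\mapsto x$, and the deletion argument kills every square with $|Y|\ge 4$ (and with no $v$ in $Y$). The genuine gap is exactly the case you flag as the crux, $|Y|=3$ with one $v$ in each copy: there you only assert that the side-alternation creates a ``mismatch'' and that ``the only way to force equality would require $Y$ to repeat a letter.'' That is not a proof, and the stated mechanism is not even the right one in most subcases. With your convention ($v x$ for odd occurrences, $xv$ for even), write $Y=Y_1vY_2$ and let the two $v$'s be the consecutive occurrences $v_i,v_{i+1}$. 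If $Y_2\neq\epsilon$ and $i$ is odd, the letter following $v_i$ in $w'_S$ is $x_i$, while the letter following $v_{i+1}$ cannot be $x$, because the next occurrence of $x$ after $v_{i+1}$ is $x_{i+2}$ and $v_{i+2}\in S$ intervenes before it; this adjacency mismatch, not a repeated letter, is the contradiction (similarly for $i$ even, comparing the letters preceding $v_i$ and $v_{i+1}$). Your repeated-letter count is needed only in the single subcase where $i$ is odd and $v$ is the last letter of $Y$, which forces $Y=xxv$ and four occurrences of $x$. So the local comparison over the position of $v$ in $Y$ and the parity of $i$ genuinely has to be carried out; as written, your proof stops precisely at its hardest point.

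For comparison, the paper avoids this case split entirely with one uniform step that works for every $|X|\ge 3$, including $|X|=3$. Given a square $X_1vX_2X_1vX_2$ in $w'_S$, adjoin $x$ to $S$: the two $v$'s are consecutive occurrences, hence carry their paired $x$'s on opposite sides, so $w'_{S\cup\{x\}}$ contains $X_1xvX_2\,X_1vxX_2$ or $X_1vxX_2\,X_1xvX_2$ as a factor (when $x\notin S$ these paired $x$'s are the only occurrences of $x$ inside the span of the square). Deleting all occurrences of $v$, which turns $w'$ back into $w$, yields the square $X_1xX_2\,X_1xX_2$ with $|X_1xX_2|=|X_1vX_2|\ge 3$ in $w_{(S\setminus\{v\})\cup\{x\}}$, contradicting that $w$ is $3$-complete square-free. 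In effect, the alternation of sides lets one trade $v$ for $x$ without shortening the square, which is exactly the length your deletion argument loses when $|Y|=3$. Completing your local comparison would make your proof correct, but the paper's swap argument subsumes all of your cases in a single step and is the cleaner route.
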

\begin{proof}
    Let $w$ be a $3$-complete square-free uniform word representing $G$. Also, according to Theorem \ref{tm7}, $w$ is at most $3$-uniform word, otherwise $w$ would not be a $3$-complete square-free uniform word. Now, we use the following homomorphism in the occurrence-based function. 
    %Now, we replace $y\in N_x$ using the following occurrence-based function, where $(y,i)$ is the $i^{th}$ occurrence of the letter $y$ in $w$. 
     \begin{equation*}
      h(y,i)=
    \begin{cases}
        y, & \text{if $y\neq x$ }\\
        xv, & \text{if $y=x$, $i$ is odd }  \\
        vx, & \text{if $y=x$, $i$ is even} 
    \end{cases}  
    \end{equation*}
    Now we construct $w'=h(H(w))$, where $H: V^*\rightarrow (V\times \{1,2,3\})^*$ is the labelling function of finite words over $V$, mapping the $i^{th}$ occurrence of each letter $x$ to the pair $(x, i)$, $1\leq i\leq 3$. In the word $w'$, each letter of $w$ is the same except $x$ is replaced by $xv$ in odd positions and $vx$ in even positions. we claim that $w'$ is a $3$-complete square-free uniform word-representation of the graph $G'$. To check whether the graph $G'$ is represented by $w'$ or not, we consider the following three cases:\\
    \textbf{Case 1}: Since, $x$ and $v$ are not adjacent in $G'$, we need to check whether $x$ and $v$ are alternate in $w'$. As subword $xvvx$ is present $w'_{\{x,v\}}$, $x$ and $v$ do not alternate in $w'$.
    \\\textbf{Case 2}: For all $u\in N_x$, $u$ and $v$ are adjacent in the graph $G'$. Without loss of generality, we assume $w_{\{x\}\cup N_x}= xP_1(N_x)xP_2(N_x)xP_3(N_x)$, where $P_i(N_x)$, $1\leq i\leq 3$ are the permutations of $N_x$. Since $x$ is replaced by $xv$ and $vx$ in odd and even occurrence, the occurrences of $x$, $v$ and $u\in N_x$ are $w_{\{x,v\}\cup N_x}=xvP_1(N_x)vxP_2(N_x)xvP_3(N_x)$. So, all vertices of $N_x$ and $v$ are alternating in $w'$.
    \\\textbf{Case 3}: For all $u\in V\setminus (\{x\}\cup N_x)$, $u$ and $x$ are not adjacent in $G$ and $G'$. In the word $w$, $u$ and $x$ do not alternate. Since $x$ is replaced by $xv$ and $vx$ in $w'$, $v$ and $u$ also do not alternate in $w'$.
    
    Therefore, $w'$ represents the graph $G'$. Now, we prove that it is also $3$-complete square-free uniform word. Suppose $w'$ is not a $3$-complete square-free uniform word. It has a square $XX$, where $X\in (V')^+$ and $|X|\geq 3$, in a subword restricted to some vertices. Let $S$ be the set of all vertices present in the square $XX$. As the word $w$ is $3$-complete square-free uniform word, in $w'$, the square has to include $v$; otherwise, $w$ cannot be a $3$-complete square-free uniform word. Therefore, $X=X_1vX_2$, where $X_1$ and $X_2$ contain the vertices of $S$. But using the construction, we can say that $w_{S\cup \{x\}}$ contains $X_1xvX_2X_1vxX_2$ or $X_1vxX_2X_1xvX_2$ as a factor. In either of the cases removing $v$ yields a square $X_1xX_2X_1xX_2$ where $|X_1xX_2|=|X_1vX_2|\geq 3$. However, it contradicts that $w$ is a $3$-complete square-free uniform word. Therefore, $w'$ is a $3$-complete square-free uniform word.
\end{proof}

It is well known that cycle graphs are $2$-word-representable. In this example, we apply the construction described in Theorem~\ref{lm14} to obtain a $3$-uniform square-free word representation of an extended graph.

\begin{exm}
The word $w = 5213243541$ is a $2$-uniform word representing the cycle graph $C_5$. Using the method of Theorem~\ref{lm14}, we construct a $3$-complete square-free uniform word representing the new graph $G_1$ shown in Figure \ref{fig2}. The graph $G_1$ is obtained from $C_5$ by adding a new vertex $1'$, which is adjacent to all neighbours of the vertex $1$ in $C_5$. To construct the corresponding word, we define the following homomorphism:
 \begin{equation*}
      h(y,i)=
    \begin{cases}
        y, & \text{if $y\neq 1$ }\\
        11', & \text{if $y=1$, $i=1$ }  \\
        1'1, & \text{if $y=1$, $i=2$} 
    \end{cases}  
    \end{equation*}
The word $w_1 =h(H(w))= 5211'3243541'1$, where $H:\{1,2,3,4,5\}^*\rightarrow (\{1,2,3,4,5\}\times \{1,2\})^*$ is a $3$-complete square-free uniform word representing the graph $G_1$.
We can again construct another $3$-complete square-free uniform word-representable graph $G_2$ shown in Figure \ref{fig2}. The graph $G_2$ is obtained from $G_1$ by adding a new vertex $5'$, which is adjacent to all neighbours of the vertex $5$ in $G_1$. We used the following homomorphism to obtain the $3$-complete square-free uniform word-representation:
 \begin{equation*}
      h(y,i)=
    \begin{cases}
        y, & \text{if $y\neq 5$ }\\
        55', & \text{if $y=5$, $i=1$ }  \\
        5'5, & \text{if $y=5$, $i=2$} 
    \end{cases}  
    \end{equation*}

The word $w_2 =h(H(w_1))= 55'211'32435'541'1$,  where $H:\{1,2,3,4,5,1'\}^*\rightarrow (\{1,2,3,4,5,1'\}\times \{1,2\})^*$ is a $3$-complete square-free uniform word representing the graph $G_2$.

    \begin{figure}[h]
\begin{center}
\begin{tabular}{c c c}
   
\begin{tikzpicture}[node distance=1cm,auto,main node/.style={circle,draw,inner sep=1pt,minimum size=5pt}]

\node[main node] (1) {1};
\node[main node] (2) [below right of=1] {2};
\node[main node] (3) [below of=2] {3};
\node[main node] (5) [below left of=1] {5};
\node[main node] (4) [below of=5] {4};
\node (6) [below right of=4] {$C_5$};

\path
(1) edge (2)
(2) edge (3)
(3) edge (4)
(4) edge (5)
(1) edge (5);

\end{tikzpicture}
  \   \  \ & 

     \begin{tikzpicture}[node distance=1cm,auto,main node/.style={circle,draw,inner sep=1pt,minimum size=5pt}]

\node[main node] (1) {1};
\node[main node] (2) [below right of=1] {2};
\node[main node] (3) [below of=2] {3};
\node[main node] (5) [below left of=1] {5};
\node[main node] (4) [below of=5] {4};
\node (6) [below right of=4] {$G_1$};
\node[main node] (7) [below of=1] {$1'$};

\path
(1) edge (2)
(2) edge (3)
(3) edge (4)
(4) edge (5)
(1) edge (5)
(7) edge (2)
(7) edge (5);

\end{tikzpicture}

\  \ \ &
\begin{tikzpicture}[node distance=1cm,auto,main node/.style={circle,draw,inner sep=1pt,minimum size=5pt}]

\node[main node] (1) {1};
\node[main node] (2) [below right of=1] {2};
\node[main node] (3) [below of=2] {3};
\node[main node] (5) [below left of=1] {5};
\node[main node] (4) [below of=5] {4};
\node (6) [below right of=4] {$G_2$};
\node[main node] (7) [below of=1] {$1'$};
\node[main node] (8) [left of=5] {$5'$};

\path
(1) edge (2)
(2) edge (3)
(3) edge (4)
(4) edge (5)
(1) edge (5)
(7) edge (2)
(7) edge (5)
(8) edge (1)
(8) edge (4)
(8) edge (7);

\end{tikzpicture}
\end{tabular}
\caption{\label{fig2} Examples of $3$-complete square-free uniform word-representable graphs}
 
\end{center}
\end{figure}
\end{exm}
We prove that word-representable graphs $G$ with representation numbers $> 3$ do not have 3-complete square-free uniform words. It remains open whether all $K_3$-free graphs with representation number $3$ are $3$-complete square-free uniform word-representable.
\section{Conclusion}
 We introduced the notion of a $p$-complete square-free uniform word-representation and derived several of its fundamental properties. We described a constructive process for obtaining a $(p+1)$-complete square-free uniform word-representable graph from a $p$-complete square-free uniform word-representable graph. We also showed that graphs with representation number at most $3$ can have a $3$-complete square-free uniform word-representation. We conclude by listing several open problems and possible directions for further research related to these topics.

\begin{enumerate}
    \item Characterise the word-representable graphs with representation number at least $3$ that admit $p$-complete square-free uniform word-representations for $p>3$.
    \item Determine the word-representable graphs with representation number at least $3$ that do not admit any $p$-complete square-free uniform word-representation for $p>3$.
    \item Characterise $p$-complete square-free uniform word-representable graphs.
    \item Since every $K_3$-free $2$-representable graph is $3$-complete square-free uniformly word-representable, it is natural to study the enumeration of $3$-complete square-free uniform word-representations of $2$-representable $K_3$-free graphs. 
    \item We showed that $3$-complete square-free uniform word-representable graphs cannot have representation number greater than $3$. It remains open whether a similar result holds for $p$-complete square-free uniform word-representable graphs with $p>3$. In particular, does there exist a word-representable graph $G$ whose representation number is greater than $p$, while its complete square-free uniform representation number is at most $p$?
\end{enumerate}

	\bibliographystyle{plain}
	\bibliography{ref.bib}

@book{kitaev2015words,
  title={Words and graphs},
  author={Kitaev, Sergey and Lozin, Vadim},
  year={2015},
  publisher={Springer}
}

@inproceedings{kitaev2017comprehensive,
  title={A comprehensive introduction to the theory of word-representable graphs},
  author={Kitaev, Sergey},
  booktitle={International Conference on Developments in Language Theory},
address   =   "  Liège, Belgium ",
  pages={36--67},
  year={2017},
  organization={Springer}
}

@article{kitaev2008word,
  title={Word problem of the Perkins semigroup via directed acyclic graphs},
  author={Kitaev, Sergey and Seif, Steve},
  journal={Order},
  volume={25},
  number={3},
  pages={177--194},
  year={2008},
  publisher={Springer}
}

@article{thue1906uber,
  title={Uber unendliche zeichenreihen},
  author={Thue, Axel},
  journal={Norske Vid Selsk. Skr. I Mat-Nat Kl.(Christiana)},
  volume={7},
  pages={1--22},
  year={1906}
}

@article{das2024square,
  title={Square-free Word-representation of Word-representable Graphs},
  author={Das, Biswajit and Hariharasubramanian, Ramesh},
  journal={Graphs and Combinatorics},
  volume={42},
  number={1},
  pages={12 },
  year={2026},
  publisher={Springer}
}

@article{Kitaev17-JGT,
  author    = {Sergey Kitaev},
  title     = {Existence of \emph{u}-Representation of Graphs},
  journal   = {J. Graph Theory},
  volume    = {85},
  number    = {3},
  pages     = {661--668},
  year      = {2017},
  doi       = {10.1002/jgt.22097}
}

@article{GKZ17-AJC,
  author    = {Alice L. L. Gao and
               Sergey Kitaev and
               Philip B. Zhang},
  title     = {On 132-representable graphs},
  journal   = {Australas. J. Combin.},
  volume    = {69},
  pages     = {105--118},
  year      = {2017},
  url       = {http://ajc.maths.uq.edu.au/pdf/69/ajc\_v69\_p105.pdf},
}

@article{Mandelshtam19-DMGT,
  author    = {Yelena Mandelshtam},
  title     = {On graphs presentable by pattern-avoiding words},
  journal   = {Discuss. Math. Graph Theory},
  volume    = {39},
  number    = {2},
  pages     = {375--389},
  year      = {2019},
  doi       = {10.7151/dmgt.2128},
}

@article{broere2018word,
  title={Word-representable graphs},
  author={Broere, Bas},
  year={2018},
  journal={Master thesis at Radboud University, Nijmegen}
}

@article{kitaev2008representable,
  title={On representable graphs},
  author={Kitaev, Sergey and Pyatkin, Artem},
  journal={Journal of automata, languages and combinatorics},
  volume={13},
  number={1},
  pages={45--54},
  year={2008}
}

@article{JKPR15-EJC,
  author    = {Miles Eli Jones and
               Sergey Kitaev and
               Artem V. Pyatkin and
               Jeffrey B. Remmel},
  title     = {Representing Graphs via Pattern Avoiding Words},
  journal   = {Electron. J. Comb.},
  volume    = {22},
  number    = {2},
  pages     = {P2.53},
  year      = {2015},
  doi       = {10.37236/4946}
}

@incollection{Damaschke90-incollection,
  author       = {Peter Damaschke},
  editor       = {Bodendiek, Rainer and 
                  Henn, Rudolf},
  title        = {Forbidden Ordered Subgraphs},
  booktitle    = {Topics in Combinatorics and Graph Theory: Essays in Honour of Gerhard Ringel},
  year         = {1990},
  publisher    = {Physica-Verlag HD},
  address      = {Heidelberg},
  pages        = {219--229},
  doi          = {10.1007/978-3-642-46908-4_25},
}

@article{FH21-SIDMA,
  author    = {Laurent Feuilloley and
               Michel Habib},
  title     = {Graph Classes and Forbidden Patterns on Three Vertices},
  journal   = {{SIAM} J. Discret. Math.},
  volume    = {35},
  number    = {1},
  pages     = {55--90},
  year      = {2021},
  doi       = {10.1137/19M1280399}
}

@book{BLS99,
  title = {Graph Classes: A Survey},
  year = {1999},
  author = {Andreas Brandst\"{a}dt and Van Bang Le and Jeremy P. Spinrad},
  address = {Philadelphia, PA, USA},
  publisher    = {{SIAM}},
  doi = {10.1137/1.9780898719796},
}

@article{takaoka2024forbidden,
  title={Forbidden pattern characterizations of 12-representable graphs defined by pattern-avoiding words},
  author={Takaoka, Asahi},
  journal={Discrete Applied Mathematics},
  volume={358},
  pages={285--301},
  year={2024},
  publisher={Elsevier}
}

@article{glen2018representation,
  title={On the representation number of a crown graph},
  author={Glen, Marc and Kitaev, Sergey and Pyatkin, Artem},
  journal={Discrete Applied Mathematics},
  volume={244},
  pages={89--93},
  year={2018},
  publisher={Elsevier}
}

@article{cheon2019k,
  title={On k-11-representable graphs},
  author={Cheon, GS and Kim, J and Kim, M and Kitaev, S and Pyatkin, A},
  journal={Journal of Combinatorics},
  volume={10},
  number={3},
  pages={491--513},
  year={2019},
  publisher={INT PRESS BOSTON, INC}
}

@article{futorny2024new,
  title={New tools to study 1-11-representation of graphs},
  author={Futorny, Mikhail and Kitaev, Sergey and Pyatkin, Artem},
  journal={Graphs and Combinatorics},
  volume={40},
  number={5},
  pages={97},
  year={2024},
  publisher={Springer}
}

@article{olariu1991optimal,
  title={An optimal greedy heuristic to color interval graphs},
  author={Olariu, Stephan},
  journal={Information Processing Letters},
  volume={37},
  number={1},
  pages={21--25},
  year={1991},
  publisher={Elsevier}
}

@article{halldorsson2016semi,
  title={Semi-transitive orientations and word-representable graphs},
  author={Halld{\'o}rsson, Magn{\'u}s M and Kitaev, Sergey and Pyatkin, Artem},
  journal={Discrete Applied Mathematics},
  volume={201},
  pages={164--171},
  year={2016},
  publisher={Elsevier}
}
\end{document}